\newcommand{\ket}[1]{|#1\rangle}
\newcommand{\bra}[1]{\langle #1|}
\newcommand{\proj}[1]{\ket{#1} \bra{#1}}
\newcommand{\vectornorm}[1]{\left|\left|#1\right|\right|}
\newcommand{\eps}{\epsilon}
\renewcommand{\Pr}{\mathbb{P}}
\newcommand{\Expect}{\mathbb{E}}
\newcommand{\tr}{{\rm tr}\,}
\newcommand{\swap}{\mathcal{F}}
\newcommand{\ot}{\otimes}
\newcommand{\cE}{\mathcal{E}}
\newcommand{\cG}{\mathcal{G}}
\newcommand{\be}{\begin{equation}}
\newcommand{\ee}{\end{equation}}
\newcommand{\stocdomr}{\trianglelefteq}
\newcommand{\stocdoml}{\trianglerighteq}
\newcommand{\majr}{\preceq}
\newcommand{\majl}{\succeq}
\def\ba#1\ea{\begin{align}#1\end{align}}
\def\bit{\begin{itemize}}
\def\eit{\end{itemize}}
\def\l{\left}
\def\r{\right}
\def\<{\langle}
\def\>{\rangle}
\newtheorem{theorem}{Theorem}
\newtheorem{lemma}[theorem]{Lemma}
\newtheorem{definition}[theorem]{Definition}
\newtheorem{corollary}[theorem]{Corollary}
\newtheorem{conjecture}[theorem]{Conjecture}
\numberwithin{equation}{section}
\numberwithin{theorem}{section}
\newcommand{\nn}{\nonumber\\}
\newcommand{\eq}[1]{Eqn.~\ref{eq:#1}}
\newcommand{\peq}[1]{(Eqn.~\ref{eq:#1})}
\newcommand{\thmref}[1]{Theorem \ref{thm:#1}}
\newcommand{\corref}[1]{Corollary \ref{cor:#1}}
\newcommand{\lemref}[1]{Lemma \ref{lem:#1}}
\newcommand{\secref}[1]{Section \ref{sec:#1}}
\newcommand{\defref}[1]{Definition \ref{def:#1}}
\DeclareMathOperator{\poly}{poly}
\def\ra{\rightarrow}
\def\bbC{\mathbb{C}}
\def\bbE{\mathbb{E}}
\newcommand{\bfq}{\mathbf{q}}
\newcommand{\bfp}{\mathbf{p}}
\begin{document}

\title{Random Quantum Circuits are Approximate 2-designs}

\author{Aram W.~Harrow and Richard A.~Low\footnote{low@cs.bris.ac.uk}\\ \emph{\small Department of Computer Science, University of Bristol, Bristol, U.K.}}
 
\maketitle

\begin{abstract}
Given a universal gate set on two qubits, it is well known that
applying random gates from the set to random pairs of qubits will
eventually yield an approximately Haar-distributed unitary.  However,
this requires exponential time.  We show that random circuits of
only polynomial length will approximate the first and second moments
of the Haar distribution, thus forming approximate 1- and 2-designs.
Previous constructions required longer circuits and worked only for
specific gate sets.  As a corollary of our main result, we also
improve previous bounds on the convergence rate of random walks on
the Clifford group. 
\end{abstract}

\section{Introduction: Pseudo-Random Quantum Circuits}

There are many examples of algorithms that make use of random states
or unitary operators (e.g.~\cite{AmbainisSmith04,Sen05}).  However,
exactly sampling from the uniform Haar distribution is inefficient.
In many cases, though, only pseudo-random operators are required.  To
quantify the extent to which the pseudo-random operators behave like
the uniform distribution we use the notion of \emph{$k$-designs}
(often referred to as $t$-designs).  A $k$-design has $k^{\text{th}}$
moments equal to those of the Haar distribution.  For most uses of
random states or unitaries, this is sufficient.  Constructions of
exact $k$-designs on states are known (see \cite{AmbainisEmerson07}
and references therein) and some are efficient.  Ambainis and Emerson
\cite{AmbainisEmerson07} introduced the notion of approximate state
$k$-designs, which can be implemented efficiently for any $k$.
However, the known constructions of unitary $k$-designs are
inefficient to implement.  Approximate unitary $2$-designs have been
considered \cite{DLT02,DCEL06,GAE07}, although the approaches are
specific to 2-designs. 

We consider a general class of random circuits where a series of
two-qubit gates are chosen from a universal gate set.  We give a
framework for analysing the $k^{\text{th}}$ moments of these circuits.
Our conjecture, based on an analogous classical result
\cite{HooryBrodsky04}, is that a random circuit on $n$ qubits of
length $\poly(n,k)$ is an approximate $k$-design.  While we do not
prove this, we instead give a tight analysis of the $k=2$ case.  We
find that in a broad class of natural random circuit models (described
in \secref{RandomCircuits}), a circuit of length $O(n(n+\log 1/\eps))$
yields an $\eps$-approximate 2-design.  Our definition of an
approximate $k$-design is in \secref{k-design-def}.  Our results also apply to an alternative definition of an approximate 2-design from \cite{DCEL06}, for which we show random circuits of length $O(n(n+\log 1/\eps))$ yield $\eps$-approximations, thus extending the results of that paper to a larger
class of circuits.  Moreover, our results also apply to random
stabiliser circuits, meaning that a random stabiliser circuit of
length $O(n(n+\log 1/\eps))$ will be an
$\eps$-approximate 2-design.  This both simplifies the construction
and tightens the efficiency of the approach of \cite{DLT02}, which
constructed $\eps$-approximate 2-designs in time $O(n^6(n^2+\log
1/\eps))$ using $O(n^3)$ elementary quantum gates.

\subsection{Random Circuits}
\label{sec:RandomCircuits}

The random circuit we will use is the following.  Choose a 2-qubit
gate set that is universal on $U(4)$ (or on the stabiliser subgroup of
$U(4)$).  One example of this is the set of all one qubit gates
together with the controlled-NOT gate.  Another is simply the set of
all of $U(4)$.  Then, at each step, choose a random pair of qubits and
apply a gate from the universal set chosen uniformly at random.  For
the $U(4)$ case, the distribution will be the Haar measure on $U(4)$.
One such circuit is shown in Fig.~\ref{figRandomCircuit} for $n=4$
qubits.  This is based on the approach used in Refs.~\cite{ODP06,DOP07} but our analysis is both simpler and more general.

\begin{figure}[h]
  \begin{center}
    \includegraphics[width=12cm]{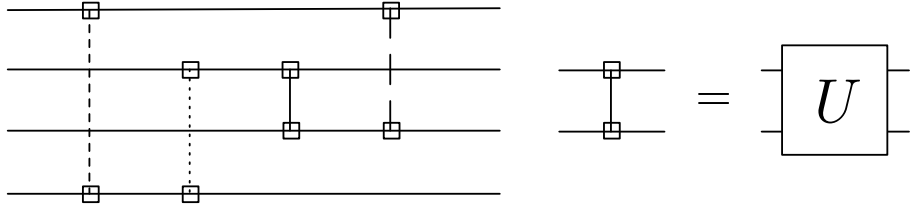}
    \caption{An example of a random circuit.  Different lines indicate
      a different gate is applied at each step.} 
    \label{figRandomCircuit}
  \end{center}
\end{figure}
Since the universal set can generate the whole of $U(2^n)$ in this
way, such random circuits can produce any unitary.  Further, since
this process converges to a unitarily invariant distribution and the
Haar distribution is unique, the resulting unitary must be uniformly
distributed amongst all unitaries \cite{ELL05}.  Therefore this
process will eventually converge to a Haar distributed unitary from
$U(2^n)$.  This is proven rigourously in \lemref{GEigenvalues}.  However,
a generic element of $U(2^n)$ has $4^n$ real parameters, and thus to
even have $\Omega(4^{-n})$ fidelity with the Haar distribution
requires $\Omega(4^n)$ 2-qubit unitaries.  We address this 
problem by considering only the lower-order moments of the
distribution and showing these are nearly the same for random circuits
as for Haar-distributed unitaries.  This claim is formally described
in \thmref{Main2Design}.

Our paper is organised as follows.  In \secref{Preliminaries} we
define unitary $k$-designs and explain how a random circuit could be
used to construct a $k$-design.  In \secref{Moments} we work out how the state evolves after a single step of the random circuit.  We then extend this to multiple steps in \secref{Convergence} and prove our general convergence results.  A key simplification
will be (following \cite{ODP06}) to map the evolution of the second
moments of the quantum circuit onto a classical Markov chain.  We then
prove a tight convergence result for the case where the gates are
chosen from $U(4)$ in \secref{U4Convergence}.  This section contains
most of the technical content of the paper.  Using our bounds on
mixing time we put together the proof that random circuits yield approximate unitary
2-designs in \secref{MainResult}.  \secref{Conclusion} concludes with
some discussion of applications.

\section{Preliminaries}
\label{sec:Preliminaries}
\subsection{Pauli expansion}
Much of the following will be done in the Pauli basis.  The Pauli
operators will be taken as $\{ \sigma_0, \sigma_1, \sigma_2, \sigma_3
\}$ and defined to be
$$\sigma_0 = \begin{pmatrix}1 & 0 \\ 0 & 1\end{pmatrix} 
\qquad
\sigma_1 = \begin{pmatrix} 0& 1 \\ 1 & 0\end{pmatrix} 
\qquad
\sigma_2 = \begin{pmatrix} 0& -i \\ i & 0\end{pmatrix} 
\qquad
\sigma_3 = \begin{pmatrix} 1& 0 \\ 0 & -1\end{pmatrix} 
$$
If $\ket{\psi}\in\mathbb{C}^{2^n}$ is a state on $n$ qubits then we write
$\psi=\proj{\psi}$.  We can expand $\psi$ in the Pauli basis as
\begin{equation}
\psi = 2^{-n/2} \sum_p \gamma(p) \sigma_p
\end{equation}
where $\sigma_p = \sigma_{p_1} \otimes \ldots \otimes \sigma_{p_n}$ for the string $p = p_1 \ldots p_n$.  Inverting, the coefficients $\gamma(p)$ are given by
\begin{equation}
\gamma(p) = 2^{-n/2} \tr \sigma_p \psi.
\end{equation}
It is easy to show that the coefficients $\gamma(p)$ are real and, with the chosen normalisation, the squares sum to $\tr \psi^2$, which is 1 for pure
$\psi$.  In general
\begin{equation*}
\sum_p \gamma^2(p) \le 1
\end{equation*}
with equality if and only if $\psi$ is pure.  Note also that $\tr \psi = 1$ is equivalent to $\gamma(0) = 2^{-n/2}$.

This notation is extended to states on $nk$ qubits by treating $\gamma$ as a
function of $k$ strings from $\{0,1,2,3\}^n$.  Thus a 
 state $\rho$ on $nk$ qubits is written as
\begin{equation}
\label{eq:PauliBasisGeneral}
\rho = 
2^{-nk/2} \sum_{p_1, \ldots, p_k} \gamma_0(p_1, \ldots, p_k) \sigma_{p_1} \otimes \ldots \otimes \sigma_{p_k}.
\end{equation}

\subsection{\texorpdfstring{$k$}{k}-designs}
\label{sec:k-design-def}

We will say that a $k$-design is efficient if the effort required to
sample a state or unitary from the design is polynomial in $n$ and
$k$.  Note that we do not require the number of states to be
polynomial because, even for approximate unitary designs, an
exponential number of unitaries is required.  Rather, the number of
random bits needed to specify an element of the design should be $\poly(n,k)$.

\subsubsection{State designs}

A (state) $k$-design is an ensemble of states such that, when one state is
chosen from the ensemble and copied $k$ times, it is indistinguishable
from a uniformly random state.  This is a way of quantifying the
pseudo-randomness of the state and is a quantum analogue of $k$-wise
independence.  Hayashi et al.~\cite{HHM06} give an inefficient construction of $k$-designs for any $n$ and $k$.

The state $k$-design definition we use is due to Ref.~\cite{AmbainisEmerson07}:
\begin{definition}
An ensemble of quantum states $\{ p_i, \ket{\psi_i} \}$ is a state $k$-design if
\begin{equation}
\sum_i p_i \left( \ket{\psi_i} \bra{\psi_i} \right)^{\otimes k} = \int_\psi \left( \ket{\psi} \bra{\psi} \right)^{\otimes k} d\psi
\end{equation}
where the integration is taken over the left invariant Haar
measure on the unit sphere in $\mathbb{C}^d$, normalised so that $\int_\psi d\psi = 1$. 
\end{definition}
It is well known that the above integral is equal to
$\frac{\Pi_{+k}}{{k+d-1 \choose k}}$, where $\Pi_{+k}$ is the
projector onto the symmetric subspace of $k$ $d$-dimensional spaces.
For a rigourous proof, see Ref.~\cite{GoodmanWallach98} and for a less
precise proof but from a quantum information perspective see
Ref.~\cite{BBDEJM97}.

\subsubsection{Unitary designs}

A unitary $k$-design is, in a sense, a stronger version of a state
design.  Just as applying a Haar-random unitary to an arbitrary pure
state results in a uniformly random pure state, applying a unitary
chosen from a unitary $k$-design to an arbitrary pure state should
result in a state $k$-design.  Another way to say this is that the state obtained by acting $U^{\ot k}$,
where $U$ is drawn from a unitary $k$-design on
$U(d)$, on any $d^k$-dimensional state should be indistinguishable from the case where $U$ is drawn uniformly from $U(d)$.  Formally,
we have:

\begin{definition}
\label{def:UnitaryDesign}
Let $\{ p_i, U_i \}$ be an ensemble of unitary operators.  Define
\begin{equation}
\label{eq:UnitaryDesign}
\cG_{W}(\rho) = \sum_i p_i U_i^{\otimes k} \rho (U_i^\dagger)^{\otimes k}
\end{equation}
and
\begin{equation}
\cG_H(\rho) = \int_U U^{\otimes k} \rho (U^\dagger)^{\otimes k} dU.
\end{equation}
Then the ensemble is a unitary $k$-design iff $\cG_{W} = \cG_H$.
\end{definition}
 Unitary designs can also be defined in terms of polynomials, so that
if $p$ is a polynomial with degree $k$ in the matrix elements of $U$
and $k$ in the matrix elements of $U^*$, then averaging $p$ over a
unitary $k$-design should give the same answer as averaging over the
Haar measure.  To see the equivalence with \defref{UnitaryDesign} note
that averaging a monomial over our ensemble can be expressed as
$\bra{i_1,\ldots,i_k}
\cG_{W}(\ket{j_1,\ldots,j_k}\bra{j_1',\ldots,j_k'})
\ket{i_1',\ldots,i_k'}$, and so if $\cG_{W}=\cG_H$ then
any polynomial of degree $k$ will have the same expectation over both
distributions.

\subsection{Approximate \texorpdfstring{$k$}{k}-designs}

\subsubsection{Approximate state designs}

Numerous examples of exact efficient state 2-design constructions are known (e.g.~\cite{Barnum02}) but general exact constructions are not efficient in $n$ and $k$.  Approximate state designs were first introduced by Ambainis and Emerson \cite{AmbainisEmerson07} and they constructed efficient approximate state $k$-designs for any $k$.  Aaronson \cite{Aaronson07} also gives an efficient approximate construction.

We define approximate state designs as follows. 
\begin{definition}
\label{def:ApproxStateDesign}
An ensemble of quantum states $\{ p_i, \ket{\psi_i} \}$ is an $\eps$-approximate state $k$-design if
\begin{equation}
\label{eq:ApproxStateDesign}
(1-\eps)\int_\psi \left( \ket{\psi} \bra{\psi} \right)^{\otimes k} d\psi \le \sum_i p_i \left( \ket{\psi_i} \bra{\psi_i} \right)^{\otimes k} \le (1+\eps) \int_\psi \left( \ket{\psi} \bra{\psi} \right)^{\otimes k} d\psi
\end{equation}
\end{definition}
In \cite{AmbainisEmerson07}, a similar definition was proposed but
with the additional requirement that the ensemble also forms a
1-design (exactly), i.e.
\begin{equation*}
\sum_i p_i \ket{\psi_i} \bra{\psi_i} = \int_\psi \ket{\psi} \bra{\psi} d\psi
\end{equation*}
This requirement was necessary there only so that a suitably normalised
version of the ensemble would form a POVM. We will not use it.

 By taking the partial trace one can show that a
$k$-design is a $k'$-design for $k' \le k$.  Thus approximate
$k$-designs are always at least approximate 1-designs.

\subsubsection{Approximate unitary designs}

It was shown in Ref.~\cite{AMTW00} that a
quantum analogue of a one time pad requires $2n$ bits to exactly
randomise an $n$ qubit state.  However, in Ref.~\cite{AmbainisSmith04}
it was shown that $n+o(n)$ bits suffice to do this approximately.
Translated into $k$-design language, this says an exact unitary
1-design requires $2^{2n}$ unitaries but can be done approximately
with $2^{n+o(n)}$.  So approximate designs can have fewer unitaries than exact designs.  Here, we are interested in improving the efficiency of implementing the unitaries.  There are no known efficient exact constructions of unitary $k$-designs; it is hoped that our approach will yield approximate unitary designs efficiently.

We will require approximate unitary $k$-designs to be close in the
diamond norm \cite{KSV02}: 
\begin{definition}
\label{def:diamondNorm}
The diamond norm of a superoperator $T$ 
\begin{equation*}
\vectornorm{T}_{\diamond} = \sup_d \vectornorm{T \otimes \text{id}_d}_{\infty} = 
\sup_d \sup_{X \ne 0} \frac{\vectornorm{(T \otimes
    \text{id}_d)X}_{1}}{\vectornorm{X}_{1}} 
\end{equation*}
where $\text{id}_d$ is the identity channel on $d$ dimensions.
\end{definition}
Operationally, the diamond norm of the difference between two quantum
operations tells us the largest possible probability of distinguishing
the two operations if we are allowed to have them act on part of an
arbitrary, possibly entangled, state.  In the supremum over ancilla
dimension $d$, it can be shown that $d$ never needs to be larger than
the dimension of the system that $T$ acts upon.   The diamond norm is
closely related to completely bounded norms (cb-norms), in that
$\vectornorm{T}_\diamond$ is the cb-norm of $T^\dag$ and can also be
interpreted as the $L_1\ra L_1$ cb-norm of $T$ itself \cite{DJKR,Paulsen}. 

We can now define approximate unitary $k$-designs.
\begin{definition}
\label{def:ApproxUnitaryDesign}
$\cG_{W}$ is an $\eps$-approximate unitary $k$-design if
\begin{equation}
\label{eq:ApproxUnitaryDesign}
\vectornorm{\cG_{W} - \cG_H}_{\diamond} \le \eps,
\end{equation} 
where $\cG_{W}$ and $\cG_H$ are defined in \defref{UnitaryDesign}.
\end{definition}
In Ref.~\cite{DCEL06}, they consider approximate twirling, which is implemented using an approximate 2-design.  They give an alternative definition of closeness which is more convenient for this application:
\begin{definition}[\cite{DCEL06}]
\label{def:DankertApprox}
Let $\{ p_i, U_i \}$ be an ensemble of unitary operators.  Then this ensemble is an $\eps$-approximate twirl if
\begin{equation}
\max_\Lambda \vectornorm{\Expect_W W(\Lambda(W^\dagger \rho W))W^\dagger - \Expect_U U(\Lambda(U^\dagger \rho U))U^\dagger}_{\diamond} \le \frac{\eps}{d^2}
\end{equation}
where the first expectation is over $W$ chosen from the ensemble and the second is the Haar average.  The maximisation is over channels $\Lambda$ and $d$ is the dimension ($2^n$ in our case).
\end{definition}
Our results work for both definitions with the same efficiency.

\subsection{Random Circuits as \texorpdfstring{$k$}{k}-designs}

If a random circuit is to be an approximate $k$-design then
\eq{ApproxUnitaryDesign} must be satisfied where the $U_i$
are the different possible random circuits.  We can think of this as
applying the random circuit not once but $k$ times to $k$ different
systems.  

Suppose that applying $t$ random gates yields the random
circuit $W$.  If $W^{\ot k}$ acts on an $nk$-qubit state $\rho$, then
following the notation of \eq{ApproxUnitaryDesign}, the resulting
state is 
\begin{equation}
\label{eqn:StateEvolutionPauliBasis}
\rho_W := W^{\ot k}\rho(W^\dag)^{\ot k} =
2^{-nk/2} \sum_{p_1, \ldots, p_k} \gamma_0(p_1, \ldots, p_k) W
\sigma_{p_1} W^{\dagger} \otimes \ldots \otimes W \sigma_{p_k}
W^{\dagger}. 
\end{equation}

For this to be a $k$-design, the expectation over
all choices of random circuit should match the expectation over
Haar-distributed $W\in U(2^n)$.

We are now ready to state our main results.  Our results apply to a
large class of gate sets which we define below: 
\begin{definition}
\label{def:2copy-gapped}
Let $\cE=\{p_i,U_i\}$ be a discrete ensemble of elements from $U(d)$.
Define an operator $G_\cE$ by
\be G_\cE := \sum_i p_i U_i^{\ot k} \ot (U_i^*)^{\ot k}
\label{eq:gapped-condition}\ee
More generally, we can consider continuous distributions.  If $\mu$ is
a probability measure on $U(d)$ then we can define $G_\mu$ by analogy
as
\be G_\mu := \int_{U(d)} d\mu(U) U^{\ot k} \ot (U^*)^{\ot k}
\label{eq:gapped-condition2}\ee
Then
$\cE$ (or $\mu$) is $k$-copy gapped if $G_\cE$ (or $G_\mu$)
has only $k!$ eigenvalues with absolute value equal to $1$.
\end{definition}
For any discrete ensemble $\cE=\{p_i,U_i\}$, we can define a measure
$\mu=\sum_i p_i \delta_{U_i}$.  Thus, it suffices to state our theorems in
terms of $\mu$ and $G_\mu$.  

The condition on $G_\mu$ in the above definition may seem somewhat
strange.  We will see in \secref{Moments} that when $d\geq k$ there is
a $k!$-dimensional subspace of $(\bbC^d)^{\ot 2k}$ that is acted upon
trivially by any $G_\mu$.  Additionally, when $\mu$ is the Haar
measure on $U(d)$ then $G_\mu$ is the projector onto this space.
Thus, the $k$-copy gapped condition implies that vectors orthogonal to
this space are shrunk by $G_\mu$.

We will see that $G_\mu$ is $k$-copy gapped in a  number of important cases.
First, we give a definition of universality that can apply not only to
discrete gates sets, but to arbitrary measures on $U(4)$.

\begin{definition}\label{def:dist-universal}
Let $\mu$ be a distribution on $U(4)$.  Suppose that for any open ball
$S\subset U(4)$ there exists a positive integer $\ell$
such that $\mu^{\star \ell}(S)>0$.  Then we say $\mu$ is universal
[for $U(4)$].
\end{definition}
  Here $\mu^{\star \ell}$ is the $\ell$-fold convolution
of $\mu$ with itself; i.e.
$$\mu^{\star \ell}= \int \delta_{U_1\cdots U_\ell} 
d\mu(U_1)\cdots d\mu(U_\ell).$$
When $\mu$ is a discrete distribution over a set $\{U_i\}$,
\defref{dist-universal} is equivalent to the usual definition of
universality for a finite set of unitary gates.

\begin{theorem}\label{thm:kCopyGappedExamples}
The following distributions on $U(4)$ are $k$-copy gapped:
\begin{itemize}
\item[(i)]{Any universal gate set.  Examples are $U(4)$ itself, any
    entangling gate together with all single qubit gates, or the gate
    set considered in \cite{ODP06}.}
\item[(ii)]{Any approximate (or exact) unitary $k$-design on 2 qubits, such as the
    uniform distribution over the 2-qubit Clifford group, which is an exact 2-design.} 
\end{itemize}
\end{theorem}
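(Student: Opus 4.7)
My plan is to reduce both parts of \thmref{kCopyGappedExamples} to a single spectral statement: we must show that $G_\mu$ has no eigenvalue of modulus $1$ outside a fixed $k!$-dimensional subspace $V \subset (\mathbb{C}^{d})^{\ot 2k}$, namely the span of the vectorizations of the permutation operators $\pi \in S_k$ acting on $(\mathbb{C}^d)^{\ot k}$. Two structural facts hold for every $\mu$. First, since each $T_U := U^{\ot k}\ot (U^*)^{\ot k}$ commutes with the $S_k$ action, $V$ lies entirely inside the $+1$ eigenspace of $G_\mu$. Second, $G_\mu$ is a convex combination of unitaries $T_U$, hence has operator norm at most $1$ and preserves $V^\perp$. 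It therefore suffices to bound the spectral radius of $G_\mu|_{V^\perp}$ strictly below $1$.

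Part (ii) follows quickly. For an exact $k$-design, $G_\mu = G_H$ is the orthogonal projector onto $V$, so $G_\mu|_{V^\perp} = 0$. For an approximate design, left- and right-invariance of the Haar measure give the algebraic identities $\cG_H \cG_W = \cG_W \cG_H = \cG_H$, whence $(\cG_W - \cG_H)^m = \cG_W^m - \cG_H$; by submultiplicativity of the diamond norm, $\vectornorm{\cG_W^m - \cG_H}_\diamond \le \eps^m$, and Gelfand's formula forces the spectral radius of $G_\mu - G_H$ (equivalently, of $G_\mu|_{V^\perp}$) to be at most $\eps$, giving the gap when $\eps < 1$.

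Part (i) is the main content. Suppose $G_\mu v = \lambda v$ with $|\lambda| = 1$. Writing $v = \lambda^{-1}\int T_U v\, d\mu(U)$ with each integrand of norm $\norm{v}$, strict convexity of the Hilbert norm forces $T_U v = \lambda v$ for $\mu$-almost every $U$. Since $T_{U_1 U_2} = T_{U_1} T_{U_2}$, iterating yields $T_U v = \lambda^\ell v$ for $\mu^{\star\ell}$-almost every $U$. The continuous function $c(U) := \<v, T_U v\>/\norm{v}^2$ therefore has modulus $1$ on a full-measure subset of the support of each $\mu^{\star\ell}$. By universality of $\mu$, the union of these supports is dense in $U(4)$, so $|c(U)| = 1$ everywhere; equality in Cauchy--Schwarz then upgrades this to $T_U v = c(U) v$ for \emph{every} $U \in U(4)$.

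The endgame is representation-theoretic. The map $c : U(4) \to U(1)$ is a continuous one-dimensional unitary representation of $U(4)$, so $c(U) = (\det U)^m$ for some $m \in \mathbb{Z}$. By Schur--Weyl duality, the isotypic components of $U^{\ot k}\ot (U^*)^{\ot k}$ are indexed by pairs of partitions $\alpha, \beta \vdash k$, and a nonzero $\det^m$-isotypic component would require an isomorphism $V_\alpha \cong V_\beta \ot \det^m$ of irreducible polynomial $U(4)$-representations; comparing sums of row lengths yields $4m = 0$, so $m = 0$ and $c \equiv 1$. Therefore $v$ lies in the trivial isotypic component, which coincides with $V$, and $\lambda = 1$.

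The main obstacle is the measure-theoretic step in part (i): carefully pushing the ``for $\mu$-a.e.\ $U$'' conclusion through repeated convolutions and upgrading it to a statement for every $U \in U(4)$ by combining the universality hypothesis with continuity of $c$. Once that reduction is in hand, the Schur--Weyl calculation and part (ii) are routine.
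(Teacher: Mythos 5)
Your proof is correct, and it ends at the same representation-theoretic punchline as the paper's \lemref{GEigenvalues}: a unimodular eigenvalue of $G_\mu$ forces its eigenvector to span a one-dimensional invariant subspace of $U\mapsto U^{\ot k}\ot (U^*)^{\ot k}$, which must carry a character $\det^m$ and hence, since this representation is invariant under $U\mapsto e^{i\phi}U$ (your row-length count is the same fact in highest-weight language), must be trivial, placing the eigenvector in the permutation span $V$. But the scaffolding is genuinely different. The paper first block-diagonalises $G_\mu$ over the irrep decomposition of $V^{\ot k}\ot(V^*)^{\ot k}$, reduces the gap to $\max_{\lambda\neq 0}\norm{\int r_\lambda\,d\mu}_\infty<1$, and derives a contradiction by integrating $|\bra{v}r_\lambda(U)\ket{v}|$ over a small ball around a $U_0$ with $|\bra{v}r_\lambda(U_0)\ket{v}|<1$, which irreducibility supplies; you instead start from a genuine eigenvector of $G_\mu$, use strict convexity to get $T_Uv=\lambda v$ $\mu$-a.e., and push this through the convolutions by density and continuity before classifying the character. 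Your route buys two real improvements. First, it proves the spectral statement of \defref{2copy-gapped} directly and avoids the paper's operator-norm intermediate step ``$\norm{\int r_\lambda\,d\mu}_\infty=1$ implies a unit eigenvector with $|\omega|=1$,'' which as written is not justified: a non-normal average of unitaries can have operator norm $1$ with all eigenvalues strictly inside the disc (e.g.\ $\frac12(U_1+U_2)=\bigl(\begin{smallmatrix}0&\bar\beta\\1&0\end{smallmatrix}\bigr)$ with $|\bar\beta|<1$), and your equality-in-Cauchy--Schwarz argument is exactly the repair. Second, for part (ii) the paper only says ``follows from \defref{UnitaryDesign},'' which literally covers exact designs; your identity $\cG_H\cG_W=\cG_W\cG_H=\cG_H$ plus submultiplicativity of the diamond norm gives an actual proof for approximate designs (with the implicit, and necessary, assumption $\eps<1$). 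What the paper's irrep-by-irrep route buys in exchange is the explicit block formula for $G_\mu$ and $G_{U(d)}$ that it reuses elsewhere. One caveat you share with the paper: the count of \emph{exactly} $k!$ unit-modulus eigenvalues presumes $\dim V=k!$, i.e.\ $d\ge k$, which holds in the relevant $d=4$, $k=2$ case.
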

\begin{proof}
\mbox{}
\begin{itemize}
\item[(i)]{This is proven in \lemref{GEigenvalues}.}
\item[(ii)]{This follows straight from \defref{UnitaryDesign}.}\qedhere
\end{itemize}
\end{proof}
\begin{theorem}
\label{thm:Main2Design}
%Let $\cE=\{p_i,U_i\}$ be a 2-copy gapped ensemble.  
Let $\mu$ be a 2-copy gapped distribution and $W$ be a random circuit on $n$ qubits
obtained by drawing $t$ random unitaries according to $\mu$ and
applying each of them to a random pair of qubits.  Then there exists $C$ (depending only on $\mu$) such that for any $\eps>0$ and any $t \geq C(n(n+\log 1/\eps))$, $\cG_W$ is an
  $\eps$-approximate unitary 2-design according to either 
  \defref{ApproxUnitaryDesign} or \defref{DankertApprox}.
\end{theorem}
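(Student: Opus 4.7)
The plan is to reduce the bound on $\vectornorm{\cG_W - \cG_H}_{\diamond}$ to a spectral analysis of a Markov-chain-like operator on pairs of $n$-qubit Pauli strings. The starting point is to write a $2$-copy state in the Pauli basis as in \eq{PauliBasisGeneral}: the second-moment map $\rho\mapsto W^{\ot 2}\rho (W^{\dagger})^{\ot 2}$ acts as a conjugation $\sigma_p\mapsto W\sigma_p W^{\dagger}$ on each of the two $n$-qubit factors, and thereby as a linear map on the coefficient vector $\gamma(\mathbf{p}_1,\mathbf{p}_2)$. Averaging over the random circuit yields a linear operator $T$ on this coefficient space. Unitary covariance forces $T$ to preserve both the ``Haar sector'' (the range of $\cG_H$, on which $T$ acts as the identity) and its orthogonal ``non-Haar sector,'' so the convergence rate is controlled entirely by the operator norm of $T$ on the latter.

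Next I would analyze a single step of the random circuit. A gate drawn from $\mu$ is applied to a uniformly chosen pair $(i,j)$; restricted to qubits $i$ and $j$ the induced map on Pauli coefficients is (after the usual vectorization) exactly the operator $G_\mu$ of \defref{2copy-gapped}, while on the remaining $n-2$ qubits it acts as the identity. The $2$-copy gapped hypothesis, combined with \thmref{kCopyGappedExamples}, guarantees that $G_\mu$ has operator norm at most $1-\delta_\mu$ on the complement of the local Haar sector, for some constant $\delta_\mu>0$ depending only on $\mu$. The full one-step operator is therefore
\begin{equation*}
T \;=\; \frac{1}{\binom{n}{2}}\sum_{i<j} T_{i,j},
\end{equation*}
where each $T_{i,j}$ contracts by at least $\delta_\mu$ on two-qubit non-Haar configurations at sites $i,j$ and acts trivially elsewhere.

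The hard part is to lift this $O(1)$ local gap into a global spectral gap of size $\Omega(1/n)$ on the non-Haar sector. Following the reduction alluded to in \secref{Moments}, I would reinterpret the action of $T$ as a classical Markov chain on configurations $(\mathbf{p}_1,\mathbf{p}_2)\in(\{0,1,2,3\}^n)^2$, exploiting the fact that on any site the structure of $G_\mu$ partitions the local Pauli pair into Haar and non-Haar pieces. Any nontrivial configuration has at least one coordinate in the non-Haar piece, so with probability $\Omega(1/n)$ the randomly chosen $(i,j)$ hits such a coordinate and damps that basis element by at least $\delta_\mu$. Turning this ``hitting'' heuristic into a genuine spectral bound is the main technical obstacle; I would attempt it either via a canonical-paths comparison against the complete-graph walk, or by decomposing the non-Haar sector into $T$-invariant pieces indexed by the ``type'' of the configuration and bounding the gap on each piece separately. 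Either route should yield a gap of order $\delta_\mu/n$, so that $T^t$ has operator norm at most $e^{-\Omega(t/n)}$ on the non-Haar sector.

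Finally I would convert this spectral bound into the two diamond-norm statements. Since $\cG_W-\cG_H$ acts on operators of dimension $d=2^n$, a standard inequality bounds $\vectornorm{\cG_W-\cG_H}_{\diamond}$ by $d^2=4^n$ times the operator norm of $T^t$ on the non-Haar sector. Taking $t=Cn(n+\log(1/\eps))$ for a sufficiently large constant $C=C(\delta_\mu)$ then makes $4^n\cdot e^{-\Omega(t/n)}\le\eps$, which establishes \defref{ApproxUnitaryDesign}. The norm of \defref{DankertApprox} is essentially the diamond norm normalized by $d^2$, so precisely the same estimate suffices there without any additional work.
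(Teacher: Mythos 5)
Your overall architecture matches the paper's: pass to the Pauli coefficient vector, view one step as an average $\frac{1}{\binom{n}{2}}\sum_{i<j}T_{i,j}$ of local operators whose non-trivial part is contracted by the 2-copy gapped hypothesis, reinterpret the diagonal $\gamma(p,p)$ sector as a classical Markov chain, and pay a $4^n$ dimension factor to convert a 2-norm estimate into the diamond norm, absorbing that factor into $t = O(n(n+\log 1/\eps))$. However, there is a genuine gap exactly where you flag "the main technical obstacle" and then assert it away: the claim that a canonical-paths comparison against the complete-graph walk, or a decomposition by configuration type, "should yield a gap of order $\delta_\mu/n$." A straightforward comparison argument gives only $\Omega(1/n^2)$ for this chain — the paper says so explicitly ("We can, by a simple application of the comparison theorem, show it is $\Omega(1/n^2)$") — and with a $1/n^2$ gap your final step produces $t = O(n^2(n+\log 1/\eps))$, i.e.\ only the Oliveira--Dahlsten--Plenio bound, not the stated theorem. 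The obstruction is that the chain is slow precisely when the configuration has few non-identity sites: a state with a single non-zero coordinate is damped only with probability $\Theta(1/n)$ per step, and the walk must traverse $\Theta(n)$ such "levels," so naive path/congestion bounds lose an extra factor of $n$. Establishing the true $\Theta(1/n)$ gap is the main technical content of the paper: it requires projecting onto the "zero chain" (a one-dimensional birth--death chain counting non-identity sites), a three-phase analysis of that chain using Chernoff bounds, waiting-time concentration, gambler's-ruin estimates and a log-Sobolev bound for the final phase, and then a coupon-collector argument to lift the zero-chain mixing back to the full chain. None of this is supplied or replaced by your sketch, so the crucial quantitative step is missing.

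Two smaller points. First, the reduction of the diagonal sector to a genuine probability distribution requires $\gamma(p,p)\geq 0$, which fails for general inputs; the paper handles arbitrary states by linearity over the point-mass walks $g_t(p,p;q,q)$ (\corref{MainMixing}), at the cost of an extra $2^n$ or $4^n$ factor that must also be absorbed into $t$. Second, your claim that \defref{DankertApprox} follows "without any additional work" is too quick: that definition needs 1-norm control of the coefficient deviations, which the paper proves separately (\lemref{MainMixing}(ii) and \lemref{CoefficientsDecay2General}); one can instead get it from the 2-norm via Cauchy--Schwarz at the price of another $4^n$, which happens to land at the same asymptotic $t$, but that conversion and the $\eps/d^2$ normalisation in \defref{DankertApprox} do need to be checked.
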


To prove \thmref{Main2Design}, we show that the second
moments of the random circuits converge quickly to those of a uniform
Haar distributed unitary.  For $W$ a circuit as in \thmref{Main2Design}, write $\gamma_W(p_1, p_2)$ for the Pauli coefficients of $\rho_W = W^{\ot 2} \rho \l(W^\dagger\r)^{\ot 2}$.  Then write $\gamma_t(p_1, p_2) = \Expect_W \gamma_W(p_1, p_2)$ where $W$ is a circuit of length $t$.  Then we have
\begin{lemma}
\label{lem:MainMixing}
Let $\mu$ and $W$ be as in \thmref{Main2Design}.  Let the initial state be $\rho$ with $\gamma_0(p,p) \ge 0$ and $\sum_p \gamma_0(p,p) = 1$ (for example the state $\proj{\psi} \otimes \proj{\psi}$ for any pure state $\ket{\psi}$).  Then there
exists a constant $C$ (possibly depending on $\mu$) such that for
any $\eps>0$
\begin{itemize}
\item[(i)]{\begin{equation}
\sum_{p_1, p_2 \atop p_1 p_2 \ne 00} \left( \gamma_t(p_1,
    p_2) - \delta_{p_1 p_2} \frac{1}{2^n(2^n+1)} \right)^2 \le
\eps 
\end{equation}
for $t \ge Cn \log 1/\eps$.}
\item[(ii)]{\begin{equation}
\sum_{p_1, p_2 \atop p_1 p_2 \ne 00} \left| \gamma_t(p_1, p_2)
  - \delta_{p_1 p_2} \frac{1}{2^n(2^n+1)} \right| \le \eps
\end{equation}
for $t \ge Cn(n + \log 1/\eps)$ or, when $\mu$ is the uniform
distribution on $U(4)$ or its stabiliser subgroup, $t \ge Cn \log
\frac{n}{\eps}$.} 
\end{itemize}
\end{lemma}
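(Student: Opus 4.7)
The plan is to write the evolution of the second-moment Pauli coefficients $\gamma_t(p_1,p_2)$ as iteration of a linear map $T$ on a finite-dimensional vector space, and then bound the rate at which $T^t\gamma_0$ approaches the Haar value $\gamma_\infty(p_1,p_2)=\delta_{p_1p_2}/(2^n(2^n+1))$ (for $(p_1,p_2)\ne(0,0)$). The two structural inputs are the 2-copy gapped hypothesis, which forces the local averaged operator to have a two-dimensional fixed subspace and a constant gap to the rest of its spectrum, and the $S_n$-symmetry coming from choosing the pair $(i,j)$ uniformly at each step.

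First I would derive the one-step transition. Averaging over $(i,j)$ and $U\sim\mu$ gives $T=\binom{n}{2}^{-1}\sum_{i<j}T_{ij}$, where $T_{ij}$ acts nontrivially only on the Pauli labels at positions $(i,j)$. The 2-copy gapped condition says each $T_{ij}$ fixes the two-dimensional Haar subspace -- spanned on positions $(i,j)$ by $\sigma_0\ot\sigma_0$ and $\sum_{q\ne 0}\sigma_q\ot\sigma_q$ -- and contracts every orthogonal direction by at least a factor $1-\alpha$ for some $\alpha>0$ depending only on $\mu$. Since these fixed vectors are invariant under the global $\swap$ of the two tensor copies, $T$ preserves the symmetric sector, which contains the physically allowed $\gamma_0$ (in particular the one coming from $\proj{\psi}^{\ot 2}$).

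The main technical step -- and main obstacle -- is to upgrade this constant local gap to a global spectral gap of $\Omega(1/n)$ for $T$ restricted to the orthogonal complement of its fixed space. I would do this by a local-to-global argument: write the global Dirichlet form $\langle v,(I-T)v\rangle$ as an average of local Dirichlet forms $\langle v,(I-T_{ij})v\rangle$, and compare the resulting quantity to a reference chain (e.g.\ a random-transposition walk on $n$ sites) whose $\Omega(1/n)$ gap is classical. An equivalent route, available when $\mu$ is $U(4)$-invariant, is to use the $U(4)$-symmetry of the fixed space to coarse-grain to an explicit classical Markov chain on $\{I,P\}^n\setminus\{I^n\}$ (recording only which qubits are in the identity), whose mixing can be bounded by standard comparison methods. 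Granted the $\Omega(1/n)$ gap, the standard bound $\|\gamma_t-\gamma_\infty\|_2^2\le e^{-2t\lambda}\|\gamma_0-\gamma_\infty\|_2^2$, combined with $\|\gamma_0-\gamma_\infty\|_2=O(1)$ from the trace and normalisation hypotheses on $\gamma_0$, yields part~(i) with $t=O(n\log 1/\eps)$.

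For part~(ii), Cauchy-Schwarz converts the $\ell_2$ bound to $\ell_1$ at the cost of a factor $\sqrt{|\Omega|}=O(2^n)$, which is absorbed into the logarithm to give $t=O(n(n+\log 1/\eps))$. For the sharper bound when $\mu$ is the uniform distribution on $U(4)$ or on the 2-qubit Clifford group -- both exact 2-designs on 2 qubits, so each $T_{ij}$ is an honest rank-$2$ projection -- I would replace the crude Cauchy-Schwarz step by an entropy-based argument (a log-Sobolev or modified-log-Sobolev inequality tailored to the explicit classical chain), trading the factor $\log|\Omega|=O(n)$ for $\log\log|\Omega|=O(\log n)$ and delivering the promised $O(n\log(n/\eps))$.
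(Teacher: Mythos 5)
Your overall architecture matches the paper's: a local constant gap for each $T_{ij}$ from the 2-copy gapped hypothesis, a reduction of the diagonal sector to a classical chain on Pauli strings, an $\ell_2$ bound from a global gap, Cauchy--Schwarz to pass to $\ell_1$, and a log-Sobolev refinement for the $U(4)$/Clifford case. However, the crucial quantitative step --- the $\Omega(1/n)$ global gap --- is asserted rather than proven, and the route you sketch for it does not work as stated. A Dirichlet-form comparison of the induced chain on $\{0,1,2,3\}^n\setminus\{0^n\}$ against a random-transposition or single-site-resampling product chain degrades near the boundary where almost all coordinates are $0$: there the true chain only moves with probability $\Theta(x/n)$ (where $x$ is the number of non-zero coordinates, since a pair must contain a non-zero site to do anything), while the reference chain always moves, so the comparison constant is $\Theta(n/x)=\Theta(n)$ in the worst case and the method yields only $\Omega(1/n^2)$. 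That is precisely the bound of Oliveira et al.\ which this lemma is improving on, and it would give $t=O(n^2\log 1/\eps)$ for part (i), not the claimed $O(n\log 1/\eps)$. The paper obtains $\Omega(1/n)$ only through the second half of the paper: it coarse-grains to a one-dimensional ``zero chain'' counting non-zero coordinates, runs a three-phase analysis (deterministic drift out of the low-$x$ region, a constant-bias random-walk phase with Chernoff and gambler's-ruin estimates, and a log-Sobolev bound for the restricted chain away from the boundary where the product-chain comparison \emph{is} benign), and then a coupon-collector argument for the full chain conditioned on the zero chain having mixed. Your proposed ``log-Sobolev inequality tailored to the explicit classical chain'' runs into the same obstruction: the chain is not a product chain (the all-zero state is excluded and $00$ pairs are frozen), so \lemref{ProductChain} cannot be applied globally, only after restricting away from the boundary.

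Two secondary points. First, for a general 2-copy gapped $\mu$ not closed under Hermitian conjugation, $T$ is not self-adjoint, so the bound $\|\gamma_t-\gamma_\infty\|_2^2\le e^{-2t\lambda}\|\gamma_0-\gamma_\infty\|_2^2$ in terms of eigenvalues is not legitimate; the paper handles this by bounding the gap of $PP^*$ instead (\thmref{GapGeneralUniversal}). Second, your unified treatment of the whole operator $T$ is a genuine (and workable) simplification of the paper's split into diagonal terms $\gamma(p,p)$ (Markov chain) and off-diagonal terms $\gamma(p_1,p_2)$, $p_1\ne p_2$ (pure decay, handled by a separate coupon-collector estimate in \lemref{CoefficientsDecay2General}); but you would still need to verify that the off-diagonal block of $T$ contracts in norm at rate $1-\Omega(1/n)$, which again reduces to the same kind of hitting-time argument.
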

We can then extend this to all states by a simple corollary:
\begin{corollary}
\label{cor:MainMixing}
Let $\mu$, $W$ and $\gamma_W$ be as in \lemref{MainMixing}.  Then, for any initial state $\rho = \frac{1}{2^n}\sum_{p_1,p_2} \gamma_0 (p_1,p_2) \sigma_{p_1} \ot \sigma_{p_2}$, there
exists a constant $C$ (possibly depending on $\mu$) such that for
any $\eps>0$
\begin{itemize}
\item[(i)]{\begin{equation}
\sum_{p_1, p_2 \atop p_1 p_2 \ne 00} \left( \gamma_t(p_1,
    p_2) - \delta_{p_1 p_2} \frac{\sum_{p \ne 0} \gamma_0 (p,p)}{4^n-1} \right)^2 \le
\eps 
\label{eq:2-norm-converge} 
\end{equation}
for $t \ge Cn(n+\log 1/\eps)$.}
\item[(ii)]{\begin{equation}
\sum_{p_1, p_2 \atop p_1 p_2 \ne 00} \left| \gamma_t(p_1, p_2)
  - \delta_{p_1 p_2} \frac{\sum_{p \ne 0} \gamma_0 (p,p)}{4^n-1} \right| \le \eps
\label{eq:1-norm-converge}
\end{equation}
for $t \ge Cn(n + \log 1/\eps)$.} 
\end{itemize}
\end{corollary}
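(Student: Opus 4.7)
I plan to deduce the corollary from \lemref{MainMixing} by a linearity argument. The map $\gamma_0\mapsto\gamma_t$ is linear, since $\rho\mapsto\Expect_W W^{\ot 2}\rho(W^\dagger)^{\ot 2}$ is, and the target $c:=\sum_{p\ne 0}\gamma_0(p,p)/(4^n-1)$ is linear in $\gamma_0$ as well. Moreover, $\sum_p\gamma_t(p,p)=\tr(F\rho_t)$ is conserved by the evolution, because $W^{\ot 2}$ commutes with the swap $F$, so $c$ is determined by the initial state alone and equals the target appearing in \lemref{MainMixing} whenever that lemma applies. It therefore suffices to decompose an arbitrary density matrix $\rho$ as a short linear combination of lemma-type states, apply \lemref{MainMixing} term-by-term, and combine with the triangle inequality.

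Concretely, I would write $\rho=\alpha_+\rho_+-\alpha_-\rho_-$, where $\rho_+,\rho_-$ each satisfy $\gamma_0^{(\pm)}(p,p)\ge 0$ and $\sum_p\gamma_0^{(\pm)}(p,p)=1$, with $\alpha_\pm\ge 0$ and $\alpha_++\alpha_-=2^{O(n)}$. A convenient reference is $\sigma_\star:=\Pi_{+2}/\binom{2^n+1}{2}$, the maximally mixed state on the 2-copy symmetric subspace; a direct calculation gives $\gamma_{\sigma_\star}(p,p)=1/(2^n(2^n+1))$ for $p\ne 0$, summing to $1$, so $\sigma_\star$ is lemma-type. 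Because $|\gamma_\rho(p,p)|\le 2^{-n}$ for any density matrix, mixing $\rho$ with $\Theta(4^n)\sigma_\star$ forces all diagonal Pauli coefficients to be nonnegative, and a compensating multiple of $\sigma_\star$ on the other side completes the decomposition. The subtle point is that $\sum_p\gamma_0(p,p)=\tr F\rho$, so the lemma's normalization is equivalent to $\rho$ being supported on the 2-copy symmetric subspace; one must therefore first split $\rho$ into its swap-symmetric and swap-antisymmetric parts, which evolve independently since $W^{\ot 2}$ commutes with $F$. The antisymmetric part contributes $0$ to every $\gamma_t(p,p)$, because $\sigma_p\otimes\sigma_p$ commutes with $F$, so only its off-diagonal decay must be controlled separately, which can be done by applying \lemref{MainMixing} to an auxiliary lemma-type state sharing its off-diagonal Pauli coefficients.

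Finally, applying \lemref{MainMixing} to $\rho_\pm$ with $\eps'=\eps/(\alpha_++\alpha_-)^2$ for part (i) and $\eps'=\eps/(\alpha_++\alpha_-)$ for part (ii), and combining by the triangle inequality, yields the stated bounds. Since $\log(1/\eps')=O(n)+\log(1/\eps)$, the required circuit length $t\ge Cn(n+\log 1/\eps)$ matches both parts of the corollary after adjusting $C$; the extra factor of $n$ in part (i) relative to \lemref{MainMixing}(i) is exactly the $\log(\alpha_++\alpha_-)=O(n)$ cost of the decomposition. The main obstacle is constructing this decomposition explicitly with $\alpha_\pm=2^{O(n)}$ while respecting both the positivity of all diagonal coefficients and the swap-antisymmetric block, the latter being the more delicate piece since it cannot be absorbed into a lemma-type state directly and requires a dedicated argument to bound its off-diagonal decay.
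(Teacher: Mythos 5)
Your high-level strategy---use linearity of $\gamma_0\mapsto\gamma_t$, decompose the initial data into pieces satisfying the hypotheses of \lemref{MainMixing}, and absorb the resulting $2^{O(n)}$ blow-up into the $\log 1/\eps$ to get the extra factor of $n$---is exactly the right idea, and your bookkeeping of $\eps'$ versus $\eps$ for parts (i) and (ii) is consistent with the stated running times. However, the concrete decomposition you propose has a genuine gap, centred on the swap/normalisation issue that you yourself flag as ``the more delicate piece.'' First, the claim that the antisymmetric part contributes $0$ to every $\gamma_t(p,p)$ conflates two different notions of symmetry. If by ``antisymmetric part'' you mean $\Pi_-\rho\,\Pi_-$ (support on the antisymmetric subspace), the claim is false: for $n=1$ the singlet $\proj{\Psi^-}=\tfrac14(\sigma_0\ot\sigma_0-\sum_{j=1}^3\sigma_j\ot\sigma_j)$ has $\gamma(j,j)=-\tfrac12\neq 0$. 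If instead you mean $(\rho-\swap\rho\swap)/2$, then its diagonal coefficients do vanish, but the complementary piece $(\rho+\swap\rho\swap)/2$ still has $\sum_p\gamma_0(p,p)=\tr(\swap\rho)$, which is generally not $1$, so the normalisation obstruction is untouched. Second, if $\rho_\pm$ are required to be genuine density matrices supported on the symmetric subspace (as your construction via mixing with $\sigma_\star$ forces), the decomposition is over-constrained: matching traces gives $\alpha_+-\alpha_-=\tr\rho=1$ while matching $\sum_p\gamma_0(p,p)$ gives $\alpha_+-\alpha_-=\tr(\swap\rho)$, which is incompatible whenever $\tr(\swap\rho)\neq 1$. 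So the decomposition you need does not exist in the form described, and the auxiliary-state argument for the off-diagonal coefficients of the antisymmetric block is left unconstructed.

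The paper avoids all of this by never trying to realise the pieces as physical states. The off-diagonal bound (\lemref{CoefficientsDecay2General}) needs no hypotheses beyond $\tr\rho^2\le 1$, so it applies to $\rho$ directly. For the diagonal part, the evolution of the vector $(\gamma(p,p))_{p\neq 0}$ is given by a fixed stochastic matrix, so one expands $\gamma_t(p,p)=\sum_{q\neq 0}\gamma_0(q,q)\,g_t(p,p;q,q)$ where $g_t(\,\cdot\,;q,q)$ is the chain started from the point mass $\delta_q$---a legitimate probability distribution to which \lemref{MainMixing} applies, whether or not it comes from a state. Cauchy--Schwarz (for part (i)) or the triangle inequality (for part (ii)) then gives the loss factors $4^n\sum_q\gamma_0(q,q)^2\le 4^n$ and $\sum_q|\gamma_0(q,q)|\le 2^n$ respectively, which are absorbed exactly as you intended. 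If you replace your state-level decomposition with this coefficient-level one, your argument goes through; as written, the construction of $\rho_\pm$ is the missing (and unfixable in its current form) step.
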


By the usual definition of an approximate
design (\defref{ApproxUnitaryDesign}), we only need 
convergence in the 2-norm \peq{2-norm-converge}, which is implied by
1-norm convergence \peq{1-norm-converge} but 
weaker.  However, \defref{DankertApprox}, which requires the map to be close to the
twirling operation, requires 1-norm convergence
(i.e. \eq{1-norm-converge}).  Thus, \thmref{Main2Design} for \defref{ApproxUnitaryDesign}
follows from \corref{MainMixing}(i) 
and \thmref{Main2Design} for \defref{DankertApprox} follows from \corref{MainMixing}(ii).
\thmref{Main2Design} is proved in \secref{MainResult} and \corref{MainMixing} in \secref{Convergence}.

We note that, in the course of proving \lemref{MainMixing}, we prove that the eigenvalue gap (defined in \secref{MarkovChain}) of the Markov chain that gives the evolution of the $\gamma(p,p)$ terms is $O(1/n)$.  It is easy to show that this bound is tight for some gate sets.

{\em Related work:}  Here we summarise the other efficient
constructions of approximate unitary 2-designs.
\begin{itemize}
\item The uniform distribution over the Clifford group on $n$ qubits
  is an exact 2-design \cite{DLT02}.  Moreover, \cite{DLT02} described
  how to sample from the Clifford group using $O(n^8)$ classical gates
  and $O(n^3)$ quantum gates.  Our results show that applying
  $O(n(n+\log 1/\eps))$ random two-qubit Clifford gates also achieve
  an $\eps$-approximate 2-design (although not necessarily a
  distribution that is within $\eps$ of uniform on the Clifford
  group).
\item Dankert et al.~\cite{DCEL06} gave a specific circuit
  construction of an approximate 2-design.  To achieve small error in
  the sense of \defref{ApproxUnitaryDesign}, their circuits require
  the same
  $O(n(n+\log 1/\eps))$ gates that our random circuits do.  However,
  when we use 
  \defref{DankertApprox}, the circuits from \cite{DCEL06} only
  need $O(n \log 1/\eps)$ gates while the random circuits analysed
  in this paper need to be
  length $O(n(n+\log 1/\eps))$.
\item The closest results to our own are in the papers by Oliveira et
  al.~\cite{ODP06,DOP07}, which considered a specific gate set
  (random single qubit gates and a controlled-NOT) and proved that the
  second moments converge in time $O(n^2(n+ \log 1/\eps))$.  Our
  strategy of analysing random
  quantum circuits in terms of classical Markov chains is also adapted
  from \cite{ODP06,DOP07}.  In \secref{Moments}, we generalise this approach
   to analyse the $k^{\text{th}}$ moments for arbitrary $k$.

  The main results of our paper extend the results of  \cite{ODP06,DOP07} to a larger class of gate sets and improve their
    convergence bounds.  Some of these improvements have been conjectured by
    \cite{Znidaric07}, which presented numerical evidence in support
    of them.
\end{itemize}

\section{Analysis of the Moments}
\label{sec:Moments}

In order to prove our results, we need to understand how the state
evolves after each step of the random circuit.  
%We need to average over the choices of pairs and gates at each step to
%find out how the expected coefficients evolve.
In this section we consider just one step and a fixed pair of qubits.
Later on we will extend this to prove convergence results for multiple
steps with random pairs of qubits drawn at every step.  We consider first the Haar distribution over the full unitary group
and then will discuss the more general case of any 2-copy gapped
distribution. 

In this section, we work in general dimension $d$ and with a general
Hermitian orthogonal basis $\sigma_0,\ldots,\sigma_{d^2-1}$.  Later we
will take $d$ to be either 4 or $2^n$ and the $\sigma_i$ to be Pauli
matrices.  However, in this section we keep the discussion general to
emphasise the potentially broader applications.

Fix an orthonormal basis for $d\times d$ Hermitian matrices:
$\sigma_0, \ldots, \sigma_{d^2-1}$, normalised so that $\tr
\sigma_p\sigma_q = d\,\delta_{p,q}$.   Let $\sigma_0$ be the identity.
We need to evaluate the quantity 
\begin{equation}
\label{eq:Tk}
\Expect_U \left(U^{\otimes k} \sigma_{p_1} \otimes \ldots \otimes \sigma_{p_k} (U^\dagger)^{\otimes k}\right) =: T(\bfp)
\end{equation}
where the expectation is over Haar distributed $U \in U(d)$.  We will need this quantity in two cases.  Firstly, for $d=2^n$, these are the moments obtained after applying a uniformly distributed unitary so we know what the random circuit must converge to.  Secondly, for $d=4$, this tells us how a random $U(4)$ gate acts on any chosen pair.

Call the quantity in \eq{Tk} $T(\bfp)$ (we use \textbf{bold} to indicate a $k$-tuple of coefficients; take $\bfp = (p_1, \ldots, p_k)$) and write it in the $\sigma_p$ basis as
\begin{equation}
T(\bfp) = \sum_{\bfq} \hat{G}(\bfq; \bfp) \sigma_{q_1} \otimes \ldots \otimes \sigma_{q_k}.
\end{equation}
Here, $\hat{G}(\bfq; \bfp)$ is the coefficient in the Pauli expansion of $T(\bfp)$ and we define $\hat{G}$ as the matrix with entries equal to $\hat{G}(\bfq; \bfp)$.  We have left off the usual normalisation factor because, as we shall see, with this normalisation $\hat{G}$ is a projector.  Inverting this, we have
\begin{align}
\hat{G}(\bfq; \bfp) 
&= d^{-k} \tr \left(\sigma_{q_1} \otimes \ldots \otimes \sigma_{q_k}
  T(\bfp)\right) \nonumber\\
& = d^{-k} \bbE_U \tr \left(  (\sigma_{q_1}\ot \cdots \ot \sigma_{q_k}) 
U^{\ot  k}  (\sigma_{p_1}\ot \cdots \ot \sigma_{p_k}) (U^\dag)^{\ot k}\right)
\label{eq:G}\end{align}
Note that $\hat{G}$ is real since $T$ and the basis are Hermitian.

We can gain all the information we need about the Haar integral in \eq{Tk} with the following observations:
\begin{lemma}
\label{lem:HaarIntegralCommutes}
$T(\bfp)$ commutes with $U^{\otimes k}$ for any unitary $U$.
\end{lemma}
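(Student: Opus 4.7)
The plan is to prove this by the standard left-invariance argument for the Haar measure. Showing that $T(\bfp)$ commutes with $U^{\ot k}$ is equivalent to showing $U^{\ot k} T(\bfp) (U^\dag)^{\ot k} = T(\bfp)$, so I would work with the conjugation formulation, which fits naturally with the definition of $T(\bfp)$.

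First I would pull $U^{\ot k}$ inside the expectation. Using \eq{Tk}, write
\begin{equation*}
U^{\ot k} T(\bfp) (U^\dag)^{\ot k}
= \Expect_V \, U^{\ot k} V^{\ot k} (\sigma_{p_1}\ot\cdots\ot\sigma_{p_k}) (V^\dag)^{\ot k} (U^\dag)^{\ot k}
= \Expect_V \, (UV)^{\ot k} (\sigma_{p_1}\ot\cdots\ot\sigma_{p_k}) ((UV)^\dag)^{\ot k},
\end{equation*}
where $V$ is Haar distributed on $U(d)$. Then I would change variables by setting $W=UV$. Since the Haar measure on $U(d)$ is left-invariant, $W$ is again Haar distributed, so the right-hand side becomes $\Expect_W W^{\ot k} (\sigma_{p_1}\ot\cdots\ot\sigma_{p_k}) (W^\dag)^{\ot k} = T(\bfp)$. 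Rearranging gives $U^{\ot k} T(\bfp) = T(\bfp) U^{\ot k}$, as required.

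There is essentially no obstacle here; the only thing to be careful about is to cite (or state) left-invariance of the Haar measure correctly, and to note that $U^{\ot k}$ and $V^{\ot k}$ combine as $(UV)^{\ot k}$ because the tensor product is multiplicative with respect to ordinary matrix multiplication on each factor. The same argument would also work with right-invariance if one conjugated on the other side, so the statement is symmetric in this sense. This commutation property is precisely what will later let one invoke Schur--Weyl duality (or an analogous argument) to identify $T(\bfp)$ with a combination of permutation operators on the $k$ tensor factors and extract the structure of $\hat G$.
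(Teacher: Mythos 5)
Your argument is correct and is exactly the one the paper intends: the paper's proof is the one-line remark ``follows from the invariance of the Haar measure,'' and your substitution $W=UV$ together with left-invariance is the standard way to make that precise. Nothing is missing.
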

\begin{proof}
Follows from the invariance of the Haar measure on the unitary group.
\end{proof}

\begin{corollary}
\label{cor:IsPerms}
$T(\bfp)$ is a linear combination of permutations from the symmetric group $S_k$.
\end{corollary}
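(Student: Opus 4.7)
The plan is to invoke Schur--Weyl duality. By \lemref{HaarIntegralCommutes}, $T(\bfp)$ lies in the commutant of the diagonal representation $U \mapsto U^{\ot k}$ of $U(d)$ on $(\bbC^d)^{\ot k}$. Schur--Weyl duality asserts that this commutant is precisely the algebra generated by the natural representation of $S_k$ on $(\bbC^d)^{\ot k}$, where a permutation $\pi \in S_k$ acts by $V_\pi (v_1 \ot \cdots \ot v_k) = v_{\pi^{-1}(1)} \ot \cdots \ot v_{\pi^{-1}(k)}$. Thus $T(\bfp) = \sum_{\pi \in S_k} c_\pi(\bfp)\, V_\pi$ for some coefficients $c_\pi(\bfp) \in \bbC$, which is exactly the statement of the corollary.

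Since the paper is written for a physics/quantum-information audience, I would either cite a standard reference (e.g.~Goodman--Wallach) or sketch the two-line argument: extend the map $U \mapsto U^{\ot k}$ linearly to the group algebra of $U(d)$; its image spans all of $\text{End}((\bbC^d)^{\ot k})^{S_k}$ (the $S_k$-invariant operators) by the double commutant theorem together with the fact that the representation of $S_k$ on $(\bbC^d)^{\ot k}$ is semisimple with isotypic decomposition matching that of $U(d)$. The double commutant then gives $\text{End}((\bbC^d)^{\ot k})^{U(d)} = \bbC[S_k]$, which is what we need.

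The only potential obstacle is that Schur--Weyl duality in its cleanest form usually requires $d \geq k$ for the permutation operators to be linearly independent; however, the statement of the corollary just says ``linear combination'', which is true regardless of whether the $V_\pi$ are independent. If linear independence is needed later (e.g.~to count the $k!$-dimensional space mentioned in the discussion following \defref{2copy-gapped}), one should note that the assumption $d \geq k$ is needed there, which is harmless since the applications take $d = 4$ or $d = 2^n$ with $k = 2$.

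Overall this is a short proof whose content is entirely the invocation of Schur--Weyl duality; no computation is required beyond observing that $T(\bfp)$ commutes with $U^{\ot k}$, which was already done in the preceding lemma.
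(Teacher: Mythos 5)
Your proof is correct and is exactly the argument the paper uses: Corollary \ref{cor:IsPerms} is deduced from Lemma \ref{lem:HaarIntegralCommutes} by invoking Schur--Weyl duality, with a citation to Goodman--Wallach. Your additional remark that linear independence of the permutation operators (and hence the $k!$-dimensional fixed subspace) requires $d\geq k$ is a correct and worthwhile clarification, but does not change the approach.
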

\begin{proof}
This follows from Schur-Weyl duality (see e.g.~\cite{GoodmanWallach98}).
\end{proof}

From this, we can prove that $\hat{G}$ is a projector and find its eigenvectors.
\begin{theorem}
\label{thm:SymmetryG}
$\hat{G}$ is symmetric, i.e. $\hat{G}(\bfq; \bfp) = \hat{G}(\bfp; \bfq)$.
\end{theorem}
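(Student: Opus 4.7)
The plan is to exploit two elementary facts: cyclicity of the trace, and the fact that the Haar measure on $U(d)$ is invariant under $U\mapsto U^\dag$ (by uniqueness of the Haar measure, the pushforward of $\mu$ under inversion is again left-invariant, hence equals $\mu$). Neither fact is deep, and together they make the result essentially a one-line computation.

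Concretely, starting from the explicit formula \eq{G} for $\hat{G}(\bfq;\bfp)$, I would first use cyclicity of the trace to move $(U^\dag)^{\otimes k}$ to the front inside the expectation. Next, using inversion-invariance of Haar measure, I would relabel $U\leftrightarrow U^\dag$ inside $\bbE_U$. A final application of cyclicity puts the integrand back into the form of \eq{G} but with the strings $\bfp$ and $\bfq$ interchanged, which is by definition $\hat{G}(\bfp;\bfq)$.

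Equivalently, and perhaps more conceptually, one can package this as: the superoperator $\cG(M):=\bbE_U U^{\otimes k} M (U^\dag)^{\otimes k}$ is self-adjoint with respect to the Hilbert--Schmidt inner product $\langle A,B\rangle:=d^{-k}\tr(A^\dag B)$, by exactly the same Haar-inversion argument. The quantity $\hat{G}(\bfq;\bfp)$ is then nothing but the matrix element $\langle\sigma_{q_1}\otimes\cdots\otimes\sigma_{q_k},\,\cG(\sigma_{p_1}\otimes\cdots\otimes\sigma_{p_k})\rangle$ of $\cG$ in the orthonormal basis $\{\sigma_{p_1}\otimes\cdots\otimes\sigma_{p_k}\}$. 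Since the $\sigma_p$ are Hermitian and $\hat{G}$ is real (both already noted in the paper), self-adjointness of $\cG$ is exactly symmetry of the matrix $\hat{G}$.

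There is no real obstacle here; the only thing to be mildly careful about is citing (or briefly justifying) inversion-invariance of the Haar measure, and noting that Hermiticity of the basis lets us identify the real, symmetric, and self-adjoint conditions. Note that Corollary \ref{cor:IsPerms} is not needed for this theorem; in fact \thmref{SymmetryG} can be viewed as a warm-up manipulation of the kind that recurs later in the section.
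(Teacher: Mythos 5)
Your proof is correct and follows essentially the same route as the paper, whose entire proof is the one-liner ``Follows from the invariance of the trace under cyclic permutations.'' If anything, your version is more complete: cyclicity alone turns $\tr(\sigma_{\bfq}U^{\ot k}\sigma_{\bfp}(U^\dag)^{\ot k})$ into $\tr(\sigma_{\bfp}(U^\dag)^{\ot k}\sigma_{\bfq}U^{\ot k})$, which swaps $U$ with $U^\dag$ as well as $\bfp$ with $\bfq$, so the inversion-invariance of the Haar measure that you make explicit is genuinely needed to finish the argument.
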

\begin{proof}
Follows from the invariance of the trace under cyclic permutations.
\end{proof}

\begin{theorem}
\label{thm:EigenvectorsG}
$P_\pi$ is an eigenvector of $\hat{G}$ with eigenvalue $1$ for any permutation operator $P_\pi$ i.e.
\begin{equation*}
\sum_{\bfq} \hat{G}(\bfp; \bfq) \tr (\sigma_{q_1} \otimes \ldots \otimes \sigma_{q_k} P_\pi) = \tr (\sigma_{p_1} \otimes \ldots \otimes \sigma_{p_k} P_\pi).
\end{equation*}
Further, any vector orthogonal to this set has eigenvalue $0$.
\end{theorem}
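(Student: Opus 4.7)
The plan is to identify $\hat G$ with the matrix, in an orthonormal operator basis, of the twirling superoperator $\mathcal{T}(X) := \bbE_U U^{\ot k} X (U^\dag)^{\ot k}$ acting on operators on $(\bbC^d)^{\ot k}$, and then to show that $\mathcal{T}$ is the Hilbert--Schmidt orthogonal projector onto $\mathrm{span}\{P_\pi : \pi \in S_k\}$.

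First I would verify the projector claim. Self-adjointness of $\mathcal T$ with respect to the Hilbert--Schmidt inner product $\langle A,B\rangle := \tr(A^\dag B)$ follows from cyclicity of the trace together with invariance of the Haar measure under $U \mapsto U^\dag$. Idempotence is immediate from \lemref{HaarIntegralCommutes}: since $\mathcal T(X)$ already commutes with every $U^{\ot k}$, a second twirl leaves it unchanged. By \corref{IsPerms} the range of $\mathcal T$ lies inside $\mathrm{span}\{P_\pi\}$, and it contains each $P_\pi$ because $P_\pi$ commutes with $U^{\ot k}$, forcing $\mathcal T(P_\pi) = P_\pi$. Hence $\mathcal T$ is the orthogonal projector onto $\mathrm{span}\{P_\pi\}$: it fixes every $P_\pi$ and annihilates every operator HS-orthogonal to that span.

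Finally I would translate this back into the matrix statement. Define the rescaled basis $\tilde\sigma_\bfp := d^{-k/2}\sigma_{p_1}\ot\cdots\ot\sigma_{p_k}$, for which $\tr(\tilde\sigma_\bfp\tilde\sigma_\bfq) = \delta_{\bfp,\bfq}$. Rewriting \eq{G} as $\hat G(\bfq;\bfp) = \tr(\tilde\sigma_\bfq\,\mathcal T(\tilde\sigma_\bfp))$ shows that $\hat G$ is the matrix of $\mathcal T$ in this orthonormal basis. Expanding $P_\pi = d^{-k}\sum_\bfq \tr(\sigma_\bfq P_\pi)\,\sigma_\bfq$ and equating the $\sigma_\bfp$-coefficients on both sides of $\mathcal T(P_\pi) = P_\pi$ yields
\[
\sum_\bfq \hat G(\bfp;\bfq)\,\tr(\sigma_\bfq P_\pi) = \tr(\sigma_\bfp P_\pi),
\]
which is the claimed eigenvector equation. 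The eigenvalue-zero claim follows because, under this same isomorphism, coefficient vectors orthogonal to every $v_\pi(\bfq) := \tr(\sigma_\bfq P_\pi)$ correspond to operators HS-orthogonal to $\mathrm{span}\{P_\pi\}$, which lies in $\ker \mathcal T$. The content of the theorem is essentially already present in the preceding lemma and corollary, so the only real work is the basis-change bookkeeping; I do not anticipate a substantial obstacle.
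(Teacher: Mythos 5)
Your proposal is correct, and it takes a somewhat different route from the paper's. The paper proves the eigenvalue-$1$ claim by a direct coefficient computation: it expands $\hat G(\bfp;\bfq)$ as a product of single-system traces, resums over $\bfq$ using completeness of the basis to reconstruct $U^\dag\sigma_{p_i}U$, and then invokes $[P_\pi,U^{\ot k}]=0$ plus cyclicity; the eigenvalue-$0$ claim is then read off from $T(\bfp)$ being a combination of permutations, exactly as in your last step. You instead lift everything to the superoperator $\mathcal T$, prove it is the Hilbert--Schmidt orthogonal projector onto $\mathrm{span}\{P_\pi\}$ (self-adjointness via Haar invariance under $U\mapsto U^\dag$, idempotence via \lemref{HaarIntegralCommutes}, range identification via \corref{IsPerms} and $\mathcal T(P_\pi)=P_\pi$), and then do the basis bookkeeping once. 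This is logically sound and non-circular (you only use results that precede the theorem), and it buys you the subsequent theorem $\hat G^2=\hat G$ for free, whereas the paper derives idempotence afterwards \emph{from} this eigenvector theorem; your self-adjointness step likewise reproves \thmref{SymmetryG}. The one bookkeeping point worth making explicit in your kernel argument is that the theorem's orthogonality condition is the bilinear pairing $\sum_\bfq \tr(\sigma_\bfq P_\pi)v(\bfq)=0$, while your projector argument gives vanishing on the sesquilinear (HS) orthogonal complement; these coincide because $\overline{\tr(\sigma_\bfq P_\pi)}=\tr(\sigma_\bfq P_{\pi^{-1}})$ and the set of permutations is closed under inversion --- the same point the paper dispatches with its parenthetical remark that $P_\pi$ is real in this basis.
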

\begin{proof}
For the first part,
\begin{align}
\sum_{\bfq} \hat{G}(\bfp; \bfq) &
\tr (\sigma_{q_1} \otimes \ldots \otimes \sigma_{q_k} P_\pi) \nn
&= d^{-k} \sum_{\bfq} \mathbb{E}_U \tr \left(\sigma_{q_1} U \sigma_{p_1} U^\dagger\right) \ldots 
\tr \l(\sigma_{q_k} U
  \sigma_{p_k} U^\dagger\right) \tr\left(\sigma_{q_1} \otimes \ldots
  \otimes \sigma_{q_k} P_\pi \right)\nn
&= d^{-k} \tr \left( P_\pi \mathbb{E}_U \sum_{q_1} \tr \left(\sigma_{q_1} U
  \sigma_{p_1} U^\dagger\right) \sigma_{q_1} \otimes \ldots 
\otimes \sum_{q_k} \tr
\left(\sigma_{q_k} U \sigma_{p_k} U^\dagger\right) \sigma_{q_k} \r)
\label{eq:G-perm-evector}
\end{align}
Writing $U^\dagger \sigma_p U$ in the $\sigma_p$ basis, we find
\begin{equation*}
\frac{1}{d} \sum_q \tr\left( \sigma_q U \sigma_p U^\dagger \right) \sigma_q = U \sigma_p U^\dagger.
\end{equation*}
Therefore \eq{G-perm-evector} becomes %(using \lemref{TraceCycles})
\begin{equation*}
\tr \left( P_\pi \mathbb{E}_U U^\dagger \sigma_{p_1} U \otimes \ldots \otimes U^\dagger \sigma_{p_k} U \right) =\tr \left( \sigma_{p_1} \otimes \hdots \otimes \sigma_{p_k} P_\pi \right).
\end{equation*}
For the second part, consider any vector $v$ which is orthogonal to the permutation operators (we can neglect the complex conjugate because $P_{\pi}$ is real in this basis), i.e.
\begin{equation}
\label{eq:OrthToPerms}
\sum_{\bfq} \tr \left( \sigma_{q_1} \otimes \ldots \otimes \sigma_{q_k} P_{\pi} \right) v(\bfq) = 0
\end{equation}
for any permutation $\pi$.  Then
\begin{equation*}
\sum_{\bfq} \hat{G}(\bfp; \bfq) v(\bfq) = d^{-k} \sum_{\bfq} \tr \left(\sigma_{q_1} \otimes \ldots \otimes \sigma_{q_k} T(\bfp) \right) v(\bfq)
\end{equation*}
which is zero since $T(\bfp)$ is a linear combination of permutations and $v$ is orthogonal to this by \eq{OrthToPerms}.
\end{proof}

\begin{theorem}
$\hat{G}^2 = \hat{G}$, i.e. $\sum_{\mathbf{q'}} \hat{G}(\bfp; \mathbf{q'}) \hat{G}(\mathbf{q'}; \bfq) = \hat{G}(\bfp; \bfq)$.
\end{theorem}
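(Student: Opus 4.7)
The plan is to derive $\hat{G}^2 = \hat{G}$ as an immediate consequence of the two preceding theorems, by showing that $\hat{G}$ is nothing more than the orthogonal projector onto the span of (the Pauli-coefficient vectors of) the permutation operators $P_\pi$.

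First I would observe that \thmref{SymmetryG} gives that $\hat{G}$ is a real symmetric matrix, hence diagonalisable with an orthonormal eigenbasis and real eigenvalues. Next I would invoke \thmref{EigenvectorsG}: the subspace $V$ spanned by the vectors $v_\pi(\mathbf{q}) := \tr(\sigma_{q_1}\otimes\cdots\otimes\sigma_{q_k} P_\pi)$, for $\pi\in S_k$, consists entirely of eigenvectors with eigenvalue $1$, while every vector in $V^\perp$ is an eigenvector with eigenvalue $0$. Since $V\oplus V^\perp$ exhausts the whole space, these two eigenspaces already account for all of $\hat{G}$'s spectrum, so every eigenvalue of $\hat{G}$ is either $0$ or $1$.

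From here the conclusion is one line: decomposing an arbitrary vector $v = v_\| + v_\perp$ with $v_\|\in V$ and $v_\perp \in V^\perp$, we have $\hat{G} v = v_\|$ and thus $\hat{G}^2 v = \hat{G} v_\| = v_\| = \hat{G} v$. Equivalently, a symmetric matrix whose only eigenvalues are $0$ and $1$ satisfies $\hat{G}^2=\hat{G}$.

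There is no real obstacle here; the work has already been done in \thmref{SymmetryG} and \thmref{EigenvectorsG}. The one small subtlety worth spelling out is that \thmref{EigenvectorsG} as stated gives eigenvalue $1$ on $V$ and eigenvalue $0$ on $V^\perp$, and one must combine this with symmetry of $\hat{G}$ to rule out generalised eigenvectors or eigenvalues outside $\{0,1\}$; once this is noted, $\hat{G}$ is identified as the orthogonal projector onto $V$ and idempotence is automatic.
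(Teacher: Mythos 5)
Your proof is correct, but it reaches idempotence by a different route than the paper. The paper's argument is an entrywise computation: the $\bfq$-th column of $\hat{G}$ is, up to the factor $d^{-k}$, the coefficient vector of $T(\bfq)$, which by \corref{IsPerms} lies in the span $V$ of the permutation vectors; the first half of \thmref{EigenvectorsG} then says this column is fixed by $\hat{G}$, so $\sum_{\mathbf{q'}}\hat{G}(\bfp;\mathbf{q'})\hat{G}(\mathbf{q'};\bfq)=\hat{G}(\bfp;\bfq)$ directly. You instead combine both halves of \thmref{EigenvectorsG} to identify $\hat{G}$ abstractly as the orthogonal projector onto $V$: it acts as the identity on $V$ and as zero on $V^\perp$, and since $V\oplus V^\perp$ is the whole space this forces $\hat{G}^2=\hat{G}$. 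One remark: your appeal to \thmref{SymmetryG} to rule out generalised eigenvectors is superfluous --- once you know $\hat{G}v_\|=v_\|$ and $\hat{G}v_\perp=0$ for the components of an arbitrary vector, linearity alone fully determines $\hat{G}$ as the projector, with no need for diagonalisability. The trade-off between the two arguments is minor: yours delivers the projector interpretation immediately (the paper states it only as a corollary afterwards), whereas the paper's proof needs only the eigenvalue-$1$ half of \thmref{EigenvectorsG} together with \corref{IsPerms} and stays entirely at the level of matrix entries.
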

\begin{proof}
Using \eq{G},
\begin{equation*}
\sum_{\mathbf{q'}} \hat{G}(\bfp; \mathbf{q'}) \hat{G}(\mathbf{q'}; \bfq) = \sum_{\mathbf{q'}} \hat{G}(\bfp; \mathbf{q'}) d^{-k} \tr \left(\sigma_{q'_1} \otimes \ldots \otimes \sigma_{q'_k} T(\bfq)\right).
\end{equation*}
From \corref{IsPerms}, $T(\bfq)$ is a linear combination of permutations.  This implies, using \thmref{EigenvectorsG} that
\begin{align*}
\sum_{\mathbf{q'}} \hat{G}(\bfp; \mathbf{q'}) d^{-k} \tr \left(\sigma_{q'_1} \otimes \ldots \otimes \sigma_{q'_k} T(\bfq)\right) &= d^{-k} \tr \left(\sigma_{p_1} \otimes \ldots \otimes \sigma_{p_k} T(\bfq)\right) \\
&= \hat{G}(\bfp; \bfq)
\end{align*}
as required.
\end{proof}

\begin{corollary}
$\hat{G}$ is a projector so has eigenvalues $0$ and $1$.
\end{corollary}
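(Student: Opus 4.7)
The plan is to combine the two immediately preceding results: the symmetry statement $\hat{G}(\bfp;\bfq) = \hat{G}(\bfq;\bfp)$ from \thmref{SymmetryG} and the idempotency $\hat{G}^2 = \hat{G}$ from the theorem just above the corollary. Since $\hat{G}$ is a real matrix (as noted right after \eq{G}) and symmetric, it is in particular Hermitian, hence diagonalisable over $\mathbb{R}$ with a complete set of orthonormal eigenvectors and real eigenvalues.

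Now idempotency forces every eigenvalue $\lambda$ to satisfy $\lambda^2 = \lambda$: if $\hat{G}v = \lambda v$ with $v \neq 0$, then applying $\hat{G}$ once more gives $\hat{G}^2 v = \lambda^2 v$, while the identity $\hat{G}^2 = \hat{G}$ gives $\hat{G}^2 v = \lambda v$, forcing $\lambda(\lambda - 1) = 0$. Hence $\lambda \in \{0,1\}$. A Hermitian operator with spectrum contained in $\{0,1\}$ is by definition an orthogonal projector, so the corollary follows.

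There is no real obstacle here; the work was already done in the two preceding theorems. The only thing worth double-checking is that $\hat{G}$ is genuinely real symmetric and not merely satisfying $\hat{G}(\bfp;\bfq) = \hat{G}(\bfq;\bfp)$ as an identity among possibly complex entries — but the reality of the matrix elements was observed immediately after their definition (since $T(\bfp)$ is Hermitian and the $\sigma_p$ form a Hermitian basis), so nothing extra is needed.
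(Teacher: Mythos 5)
Your argument is correct and is exactly what the paper intends: the corollary is stated without proof as an immediate consequence of the preceding theorem ($\hat{G}^2=\hat{G}$) together with the symmetry and reality of $\hat{G}$ established earlier. Your explicit verification that idempotency forces $\lambda(\lambda-1)=0$ and that a real symmetric idempotent matrix is an orthogonal projector fills in the routine details the paper omits, so there is nothing to add.
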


We now evaluate $\hat{G}$ and $T$ for the cases of $k=1$ and $k=2$ since these are the cases we are interested in for the remainder of the paper.

\subsection{\texorpdfstring{$k=1$}{k=1}}

The $k=1$ case is clear: the random unitary completely randomises the state.  Therefore all terms in the expansion are set to zero apart from the identity i.e.
\begin{equation}
T(p) =
\begin{cases}
\sigma_0	&	p=0\\
0	&	p \ne 0.
\end{cases}
\end{equation}

\subsection{\texorpdfstring{$k=2$}{k=2}}
\label{sec:GforK2}

For $k=2$, there are just two permutation operators, identity $I$ and swap $\mathcal{F}$.  Therefore there are just two eigenvectors with non-zero eigenvalue ($n > 1$).  In normalised form, taking them to be orthogonal, their components are
\begin{align*}
f_1(q_1, q_2) &= \delta_{q_1 0} \delta_{q_2 0} \\
f_2(q_1, q_2) &= \frac{1}{d^2-1} \delta_{q_1 q_2} (1 - \delta_{q_1 0})
\end{align*}
We will now prove three properties of $\hat{G}$ that we need:
\begin{enumerate}
\item{$\hat{G}(p_1, p_2; q_1, q_2) = 0$ if $p_1 \ne p_2$ or $q_1 \ne q_2$.
\begin{proof}
Consider the function $f(q_1, q_2) = \delta_{q_1 a} \delta_{q_2 b}$ with $a \ne b$.  This function has zero overlap with the eigenvectors $f_1$ and $f_2$ so it goes to zero when acted on by $\hat{G}$.  Therefore $\hat{G}(p_1, p_2; a, b) = 0$.  The claim follows from the symmetry property (\thmref{SymmetryG}).
\end{proof}}
With this we will write $\hat{G}(p; q) \equiv \hat{G}(p_1, p_2; q_1, q_2)$.
\item{$\hat{G}(p; 0) = \delta_{p 0}$.
\begin{proof}
Let $\hat{G}$ act on eigenvector $f_1$.
\end{proof}}
\item{$\hat{G}(p; a) = \frac{1}{d^2-1}$ for $a, p \ne 0$.
\begin{proof}
Let $\hat{G}$ act on the input $\delta_{q a}$.  This has zero overlap with $f_1$ and overlap $\frac{1}{d^2-1}$ with $f_2$.
\end{proof}}
\end{enumerate}
Therefore we have
\begin{equation}
\hat{G}(p_1, p_2; q_1, q_2) =
\begin{cases}
0	& p_1 \ne p_2 \rm{\, or\,} q_1 \ne q_2 \\
1	& p_1 = p_2 = q_1 = q_2 = 0 \\
\frac{1}{d^2-1}	& p_1 = p_2 \ne 0, q_1 = q_2 \ne 0 \\
\end{cases}
\end{equation}

Since $T(p_1, p_2) = \sum_{q_1, q_2} \hat{G}(p_1, p_2; q_1, q_2) \sigma_{q_1} \otimes \sigma_{q_2}$, we have
\begin{equation}
T(p_1, p_2) =
\begin{cases}
0	&	p_1 \ne p_2 \\
\sigma_0 \otimes \sigma_0	&	p_1 = p_2 = 0 \\
\frac{1}{d^2-1} \sum_{p' \ne 0} \sigma_{p'} \otimes \sigma_{p'}	&	p_1 = p_2 \ne 0.
\end{cases}
\end{equation}
Therefore the terms $\sigma_{p_1} \otimes \sigma_{p_2}$ with $p_1 \ne p_2$ are set to zero.  Further, the sum of the diagonal coefficients $\gamma(p, p)$ is conserved.  This allows us to identify this with a probability distribution (after renormalising) and use Markov chain analysis.  To see this, write again the starting state
\begin{equation*}
\rho = \frac{1}{d} \sum_{q_1, q_2} \gamma_0(q_1, q_2) \sigma_{q_1} \otimes \sigma_{q_2}
\end{equation*}
with state after application of any unitary $W$
\begin{equation*}
\rho_W = \frac{1}{d} \sum_{q_1, q_2} \gamma_W(q_1, q_2) \sigma_{q_1} \otimes \sigma_{q_2} = 2^{-n} \sum_{q_1, q_2} \gamma(q_1, q_2) \left(W \sigma_{q_1} W^{\dagger}\right) \otimes \left(W \sigma_{q_2} W^{\dagger}\right).
\end{equation*}
Then
\begin{align*}
\sum_q \gamma_W(q, q) &= \frac{1}{d} \sum_q \tr \left(\sigma_q \otimes \sigma_q \rho_W\right) \\
&= \tr \left(\swap \rho_W\right) \\
&= \frac{1}{d} \sum_{q_1, q_2} \gamma(q_1, q_2) \tr \left( \swap \left(W \sigma_{q_1} W^{\dagger}\right) \otimes \left(W \sigma_{q_2} W^{\dagger}\right) \right)\\
&= \frac{1}{d} \sum_{q_1, q_2} \gamma(q_1, q_2) \tr \left( \sigma_{q_1} \sigma_{q_2} \right)\\
&= \sum_q \gamma(q, q)
\end{align*}
as required, where $\swap$ is the swap operator and we have used
Lemmas \ref{lem:Swap} and \ref{lem:TraceCycles}. 

\subsection{Moments for General Universal Random Circuits}
\label{sec:MomentsGeneral}

We now consider universal distributions $\mu$ that in general may be
different from the uniform (Haar) measure on $U(d)$.  Our main result in this section
will be to show that a universal distribution on $U(4)$ is also 2-copy
gapped.  In fact, we will phrase this result in slightly more general
terms and show that a universal distribution on $U(d)$ is also
$k$-copy gapped for any $k$.  Universality (\defref{dist-universal}) generalises in
the obvious way to $U(d)$, whereas when we say that $\mu$ is $k$-copy
gapped, we mean that \be \|G_\mu - G_{U(d)}\|_\infty < 1,
\label{eq:k-copy-gapped}\ee
where $G_{?}=\bbE_U U^{\ot k} \ot (U^*)^{\ot k}$, with the
expectation taken over $\mu$ for $G_\mu$ or over the Haar measure for
$G_{U(d)}$.  

The reason \eq{k-copy-gapped} represents our condition for $\mu$ to be
$k$-copy gapped is as follows:
Observe that $\hat{G}$ and $G$ are unitarily related, so the
definition of $k$-copy gapped could equivalently be given in terms of
$\hat{G}$.  We have shown above that $\hat{G}_{U(d)}$ (and thus
$G_{U(d)}$) has all eigenvalues equal to $0$ or $1$; i.e. is a projector.  By
contrast, $G_\mu$ may not even be Hermitian.  However, we will prove
below that all eigenvectors of ${G}_{U(d)}$ with eigenvalue 1 are also
eigenvectors of ${G}_\mu$ with eigenvalue 1.  Thus, \eq{k-copy-gapped}
will imply that $\lim_{t\ra\infty} (\hat{G}_\mu)^t= \hat{G}_{U(d)}$,
just as we would expect for a gapped random walk.

 We would like to show that \eq{k-copy-gapped} holds whenever
$\mu$ is universal.  This result was proved in \cite{ArnoldKrylov62}
(and was probably known even earlier)
when $\mu$ had the form $(\delta_{U_1}+\delta_{U_2})/2$.  Here we show how
to extend the argument to any universal $\mu$.

\begin{lemma}
\label{lem:GEigenvalues}
Let $\mu$ be a distribution on $U(d)$. Then all eigenvectors of $G_{U(d)}$ with
eigenvalue 1 are eigenvectors of $G_\mu$ with eigenvalue one.
Additionally, if $\mu$ is universal then $\mu$ is $k$-copy gapped for
any positive integer $k$ (cf. \eq{k-copy-gapped}).
\end{lemma}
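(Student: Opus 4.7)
The plan is to prove the two parts of the lemma in sequence.

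For part~(1), I would recall from \thmref{EigenvectorsG} that the eigenspace of $G_{U(d)}$ with eigenvalue $1$ consists exactly of the $v \in (\bbC^d)^{\ot 2k}$ that (in operator form on $(\bbC^d)^{\ot k}$) commute with $U^{\ot k}$ for every $U \in U(d)$; equivalently $R(U) v = v$ for every $U \in U(d)$, where $R(U) := U^{\ot k} \ot (U^*)^{\ot k}$. Integrating this identity against $\mu$ gives $G_\mu v = \bbE_\mu R(U) v = v$, so $v$ is also a $G_\mu$-eigenvector with eigenvalue $1$.

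For part~(2), I would prove the contrapositive: any eigenvector $v$ of $G_\mu$ with eigenvalue $\lambda$ of modulus one must in fact lie in the trivial fixed subspace (and hence $\lambda = 1$). The first step uses strict convexity of the Hilbert-space norm: each $R(U) v$ has norm $\|v\|$ (since $R(U)$ is unitary) and $\bbE_\mu R(U) v = \lambda v$ also has norm $|\lambda|\|v\| = \|v\|$, so the equality case in Jensen's inequality forces $R(U) v = \lambda v$ for $\mu$-a.e.\ $U$, and by continuity for every $U \in \mathrm{supp}(\mu)$. Iterating the homomorphism property, $R(U_1 \cdots U_\ell) v = \lambda^\ell v$ for all $U_i \in \mathrm{supp}(\mu)$, so $R(W) v \in \bbC v$ for every $W \in \bigcup_\ell \mathrm{supp}(\mu^{\star \ell})$. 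By the universality hypothesis this union is dense in $U(d)$, and since $\{U : R(U) v \in \bbC v\}$ is closed, the relation $R(U) v \in \bbC v$ extends to every $U \in U(d)$.

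The finishing step is a central-character argument. The line $\bbC v$ is now a $U(d)$-invariant subspace, giving a continuous character $\chi: U(d) \to U(1)$ with $R(U) v = \chi(U) v$. But $R$ has trivial central character, because $R(zI) = z^k \bar z^k I = I$ for every $z \in U(1)$, so $\chi$ must be trivial on the center. The continuous characters of $U(d)$ are precisely $U \mapsto (\det U)^n$ for $n \in \mathbb{Z}$, with central character $z \mapsto z^{nd}$, so triviality on the center forces $n = 0$ and hence $\chi \equiv 1$. Therefore $R(U) v = v$ for every $U \in U(d)$; $v$ lies in the $k!$-dimensional trivial subspace, and $\lambda = 1$. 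Combined with part~(1), this implies $G_\mu$ acts as the identity on this subspace and has no other modulus-$1$ eigenvectors, so it has exactly $k!$ eigenvalues of modulus $1$; that is, $\mu$ is $k$-copy gapped.

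The main obstacle is the strict-convexity step: concluding that $\|\bbE_\mu R(U) v\|_2 = \|v\|_2$ forces $R(U) v$ to be $\mu$-a.s.\ constant. This follows from expanding $\|\bbE X\|_2^2 = \bbE \bbE \langle X, X' \rangle$ over independent copies and applying Cauchy--Schwarz, though one has to be a little careful because the inner product is complex-valued and only its real part is directly bounded by the product of the norms.
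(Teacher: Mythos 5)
Your part~(1) is correct and coincides with the paper's: $G_{U(d)}$ is the Haar projector onto the $R$-invariant vectors (where $R(U)=U^{\ot k}\ot (U^*)^{\ot k}$), and averaging the identity $R(U)v=v$ against $\mu$ fixes $v$. Your part~(2) takes a genuinely different route. The paper decomposes $V^{\ot k}\ot (V^*)^{\ot k}$ into irreps, reduces $\|G_\mu-G_{U(d)}\|_\infty$ to $\max_{\lambda\ne 0}\|\int r_\lambda\, d\mu\|_\infty$, and derives a contradiction from universality by finding an open ball on which $|\bra{v}r_\lambda(U)\ket{v}|$ is bounded away from $1$. Your chain (equality in the triangle inequality forces $R(U)v=\lambda v$ on $\mathrm{supp}(\mu)$; the homomorphism property plus density of $\bigcup_\ell \mathrm{supp}(\mu^{\star\ell})$ makes $\bbC v$ a $U(d)$-invariant line; the central-character computation $R(zI)=I$ kills the determinant characters) is sound and is in fact a cleaner way to package the same exclusion of one-dimensional non-trivial irreps that the paper does separately. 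For the literal eigenvalue count of \defref{2copy-gapped} you should add that $\|G_\mu^t\|_\infty\le 1$ for all $t$ rules out non-trivial Jordan blocks on the unit circle, so algebraic and geometric multiplicities of unimodular eigenvalues agree.

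The genuine shortfall is that what you prove is the spectral statement: no eigenvector of $G_\mu$ with unimodular eigenvalue outside the trivial subspace, i.e.\ spectral radius $<1$ on its complement. The lemma, via \eq{k-copy-gapped}, asserts the operator-norm statement $\|G_\mu-G_{U(d)}\|_\infty<1$, and it is the norm statement the paper uses downstream (each application of $\hat G$ must contract the complement of the fixed space by a definite factor $1-\Delta$). Since $G_\mu$ need not be normal, spectral radius $<1$ does not imply operator norm $<1$; an average of unitaries can even be nilpotent with norm $1$:
\begin{equation*}
\frac12\begin{pmatrix}0&1\\1&0\end{pmatrix}+\frac12\begin{pmatrix}0&1\\-1&0\end{pmatrix}=\begin{pmatrix}0&1\\0&0\end{pmatrix}.
\end{equation*}
To close this with your method you would have to start not from an eigenvector but from a unit vector $w$ attaining $\|G_\mu w\|=1$. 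Strict convexity still gives $R(U)w=G_\mu w$ for all $U\in\mathrm{supp}(\mu)$, but $G_\mu w$ is then not proportional to $w$, the orbit of $w$ under products of support elements is no longer confined to a line, and the character argument no longer applies directly; some additional argument (iterating to control $\|G_\mu^{\,\ell}\|$, passing to $G_\mu^\dagger G_\mu$, or reverting to the paper's open-ball estimate on each non-trivial irrep) is needed to convert your spectral conclusion into the norm bound the lemma claims.
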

In particular, if $k=2$ this Lemma implies that $\mu$ is 2-copy gapped
(cf. \thmref{kCopyGappedExamples}).
\begin{proof}
Let $V\cong \bbC^{d}$ be the fundamental representation of $U(d)$,
where the action of $U\in U(d)$ is simply $U$ itself.  Let $V^*$ be
its dual representation, where $U$ acts as $U^*$.  The operators
$G_\mu$ and $G_{U(d)}$ act on the space $V^{\ot k} \ot (V^*)^{\ot
  k}$.  We will see that $G_{U(d)}$ is completely determined by the
decomposition of $V^{\ot k} \ot (V^*)^{\ot k}$ into irreducible representations (irreps).  
 Suppose that
the multiplicity of $(r_\lambda,V_\lambda)$ in $V^{\ot k} \ot
(V^*)^{\ot k}$ is $m_\lambda$,
where the $V_\lambda$'s are the irrep spaces and $r_\lambda(U)$ the
corresponding representation matrices.  In other words
\begin{align}
V^{\ot k} \ot (V^*)^{\ot k} 
& \cong \bigoplus_{\lambda} V_\lambda \ot \bbC^{m_\lambda} \\
U^{\ot k} \ot (U^*)^{\ot k}
& \sim \sum_{\lambda} \proj{\lambda} \ot r_\lambda(U) \ot
I_{m_\lambda}
\label{eq:irrep-basis}
\end{align}
Here $\sim$ indicates that the two sides are related by conjugation by
a fixed ($U$-independent) unitary.

Let
$\lambda=0$ denote the trivial irrep: i.e. $V_0=\bbC$ and $r_0(U)=1$
for all $U$. 
We claim that $\bbE_U r_\lambda(U)=0$ whenever 
$\lambda\neq 0$ and the expectation is taken over the Haar measure.
To show this, note that $\bbE_U
r_\lambda(U)$ commutes with $r_\lambda(V)$ for all $V\in U(d)$ and
thus, by Schur's Lemma, we must have $\bbE_U
r_\lambda(U)= cI$ for some $c\in \bbC$.  However, by the
translation-invariance of 
the Haar measure we have $cI = \bbE_U
r_\lambda(U) = \bbE_U
r_\lambda(UV) = c\, r_\lambda(V)$ for all $V\in U(d)$.  Since
$\lambda\neq 0$, we cannot have $r_\lambda(V)= I$ for all $V$ and so
it must be that $c=0$.

Thus, if we write $G_{U(d)}$ and  $G_\mu$ using the basis on the RHS of
\eq{irrep-basis}, we have
\be {G}_{U(d)}  = \proj{0} \ot I_{m_0}\ee
where $\proj{0}$ is a projector onto the trivial irrep.
On the other hand,
\be {G}_{\mu} = \proj{0} \ot I_{m_0} + \sum_{\lambda \ne 0} \proj{\lambda} \ot \l(\int r_\lambda(U) d\mu(U)\r) \ot I_{m_\lambda}
\ee
Thus, every eigenvector of ${G}_{U(d)}$ with eigenvalue one is also
fixed by ${G}_\mu$.  For the remainder of the space, the direct sum
structure means that
\be \|G_{U(d)} - G_\mu\|_\infty = \max_{\substack{\lambda\neq 0\\
m_\lambda\neq 0}} \l\|\int r_\lambda(U) d\mu(U)\r\|_\infty
.\ee
Note that this maximisation only includes $\lambda$ with $\dim V_\lambda>1$.  This is because non-trivial one-dimensional irreps of $U(d)$ have the form $\det U^m$ for some non-zero integer $m$.  Under the map $U\mapsto e^{i\phi}U$, such irreps pick up a phase of $e^{im\phi}$.  However, $U^{\ot k}\ot (U^*)^{\ot k}$ is invariant under $U\mapsto e^{i\phi}U$.  Thus $V^{\ot k} \ot (V^*)^{\ot k}$  cannot contain any non-trivial one-dimensional irreps.

Now suppose by
contradiction that there exists $\lambda\neq 0$ 
with $m_\lambda\neq 0$
and $\|\int r_\lambda(U) d\mu(U)\|_\infty=1$.  (We do not need to consider the case $\|\int
r_\lambda(U) d\mu(U)\|_\infty>1$, since
$\|r_\lambda(U)\|_\infty=1$ for all $U$ and $\|\cdot\|_\infty$ obeys the triangle inequality.)
Indeed, the triangle inequality further implies that  there exists a unit vector $\ket{v}\in
V_\lambda$ such that  
$$\int d\mu(U)\, r_\lambda(U)\ket{v} = \omega\ket{v},$$
for some $\omega\in\bbC$ with $|\omega|=1$. 

By the above argument we can assume that $\dim V_\lambda > 1$.  Since $V_\lambda$ is irreducible, it cannot contain a one-dimensional invariant subspace, implying that there exists $U_0\in U(d)$ such that 
$$|\bra{v}r_\lambda(U_0)\ket{v}| = 1-\delta,$$
for some $\delta>0$.   Since $U\mapsto |\bra{v}r_\lambda(U)\ket{v}|$ is
continuous, there exists an open ball $S$ around $U_0$ such that
$|\bra{v}r_\lambda(U)\ket{v}| \leq  1-\delta/2$ for all $U\in S$.  
Define $\bar{S} := U(d)\backslash S$.

Now we use the fact that $\mu$ is universal to find an $\ell$ such
that $\mu^{\star \ell}(S)>0$.  Next, observe that $\int d\mu^{\star
  \ell}(U)\, \bra{v}r_\lambda(U)\ket{v} = \omega^\ell$.  Taking the
absolute value of both sides yields
\begin{align*}
1 & = 
\l| \int_{U(d)} d\mu^{\star \ell}(U)\, \bra{v}r_\lambda(U)\ket{v}\r|
\\ & \leq
 \int_{U(d)} d\mu^{\star \ell}(U)\, \l|\bra{v}r_\lambda(U)\ket{v}\r|
\\ & =
 \int_{S} d\mu^{\star \ell}(U)\, \l|\bra{v}r_\lambda(U)\ket{v}\r|
+ \int_{\bar{S}} d\mu^{\star \ell}(U)\, \l|\bra{v}r_\lambda(U)\ket{v}\r|
\\ & \leq
\mu^{\star \ell}(S)\l(1-\frac{\delta}{2}\r) + 
\l(1-\mu^{\star \ell}(S)\r)
\\ & < 1,
\end{align*}
a contradiction.  We conclude that $\|G_{U(d)} - G_\mu\|_\infty<1$.
\end{proof}

%For our purposes, we are usually interested in the case when $d=4$ so
%that we can compare a random choice from a $U(4)$-universal gate set
%with picking a uniformly random element of $U(4)$.
%\lemref{GEigenvalues} then implies that $\hat{G}_\mu$ has a nonzero
%gap for any universal choice of $\mu$.  When we construct random
%circuits on $n$ qubits, our choice of distribution $\mu$ will affect
%the gap of the resulting Markov chain (described in the next section)
%only by an $n$-independent multiplicative factor.

%In particular, it will not change the scaling of the gap with $n$,
%although it is possible that the uniform ensemble mixes faster.

%This tells us that our random circuits converge and that for gate sets
%that are locally universal, random circuits on just one pair converge
%as well.  Therefore there is an eigenvalue gap between $\hat{G}^{U(4)}$ and
%$\hat{G}^S$ (the dimension is $4$) which does not depend on $n$.

%We now turn to proving that $\hat{G}^S$ converges to
% $\hat{G}^{U(2^n)}$ quickly.  A large eigenvalue gap will mean the
% eigenvalues not equal to $1$ will decay quickly so bounding the gap
% will give a measure of the convergence rate.  We solve this for
% $k=1, 2$.

\section{Convergence}
\label{sec:Convergence}

In \secref{Moments} we saw that iterating any universal gate set on $U(d)$ eventually converges to the uniform distribution on $U(d)$.  Since the set of all two-qubit unitaries is universal on $U(2^n)$, this implies that random circuits eventually converge to the Haar measure.  In this section, we turn to proving upper bounds on this convergence rate, focusing on the first two moments.

Let $\hat{G}^{(ij)}$ be the matrix with $\hat{G}$ (with $d=4$) acting on qubits $i$ and $j$ and the identity on the others.  Then, if the pair $(i, j)$ is chosen at step $t$, we can find the coefficients at step $t+1$ by multiplying by $\hat{G}^{(ij)}$.  In general, a random pair is chosen at each step.  So
\be
\gamma_{t+1}(\mathbf{p}) = \sum_{\mathbf{q}} \frac{1}{n(n-1)} \sum_{i \ne j} \hat{G}^{(ij)}(\mathbf{p}; \mathbf{q}) \gamma_t(\mathbf{q})
\ee
where $\gamma_{t+1}$ are the expected coefficients at step $t$.  We can think of this evolution as repeated application of the matrix
\begin{equation}
\label{eq:GeneralTransitionMatrix}
P = \frac{1}{n(n-1)} \sum_{i \ne j} \hat{G}^{(ij)}.
\end{equation}

For $k=2$, the key idea of Oliveira et al.~\cite{ODP06} was to map the evolution of the $\gamma(p, p)$ coefficients to a Markov chain.  The $\gamma(p_1, p_2)$ coefficients with $p_1 \ne p_2$ just decay as each qubit is chosen and can be analysed directly.

However, we can only map the $\gamma(p, p)$ coefficients to a probability distribution when they are non-negative, which is not the case for general states.  Most of the rest of the paper is dedicated to proving \lemref{MainMixing}, which only applies to states with $\gamma(p, p) \ge 0$ and normalised so their sum is $1$.  \corref{MainMixing} then extends this to all states:
\begin{proof}[of \corref{MainMixing}]
\lemref{MainMixing} still applies to the $\gamma(p_1, p_2)$ terms with $p_1 \ne p_2$.  Therefore we just need to show how to apply \lemref{MainMixing} to states that initially have some negative $\gamma(p, p)$ terms.  

For the $\gamma(p, p)$ terms, \lemref{MainMixing} says that the random walk starting with any initial probability distribution converges to uniform in some bounded time $t$.  Let $g_t(p, p; q, q)$ be the coefficients after $t$ steps of the walk starting at a particular point $q$ (i.e.~$g_0(p, p;q,q) = \delta_{p,q}$).  Now, for any starting state $\rho$, let the initial coefficients be $\gamma_0(p,p)$.  Then, by linearity, we can write the expected coefficients after $t$ steps $\gamma_t(p, p) := \Expect \gamma_W(p, p)$ as
\be
\gamma_t(p,p) = \sum_{q\ne0} \gamma_0(q,q) g_t(p,p;q,q)
\ee
for $p\ne0$.

We can now prove convergence rates for the expected coefficients $\gamma_t(p, p)$:
\begin{itemize}
\item[(i)]{
For the 2-norm, we have from \lemref{MainMixing} that for $t \ge C n \log 1/\eps$
\be
\sum_{p \ne 0} \left( g_t(p, p; q, q) - \frac{1}{4^n-1}\right)^2 \le \eps
\ee
for any $q$.  Note that the normalisation for the $\gamma(p,p)$ terms with $p\ne0$ has changed from \lemref{MainMixing} since we are neglecting the $\gamma(0,0)$ term here.  Now
\begin{align*}
&\sum_{p \ne 0} \left( {\gamma}_t(p,p) - \frac{\sum_{q\ne0} \gamma_0(q,q)}{4^n-1}\right)^2 \\
&= \sum_{p \ne 0} \left( \sum_{q \ne 0} \gamma_0(q,q) \left(g_t(p,p; q,q) - \frac{1}{4^n-1}\right)\right)^2 \\
&\le \sum_{q \ne 0} \gamma_0(q,q)^2 \sum_{q' \ne 0} \sum_{p\ne0} \left(g_t(p,p;q',q') - \frac{1}{4^n-1}\right)^2 \\
&\le (4^n-1) \eps \sum_{q \ne 0} \gamma_0(q,q)^2 \\
&\le 4^n \eps \sum_{q_1, q_2} \gamma_0(q_1,q_2)^2 \\
&= 4^n \eps \, \tr \rho^2 \\
&\le 4^n \eps
\end{align*}
where the first inequality is the Cauchy-Schwarz inequality.  Therefore for $t \ge C n (n + \log 4^n/\eps)$, the 2-norm distance from stationarity for the $\gamma(p,p)$ terms is at most $\eps$.  Choose $C'$ such that $C' n (n + \log 1/\eps) \ge C n (n + \log 4^n/\eps)$ to obtain the result.
}
\item[(ii)]{
For the 1-norm, \lemref{MainMixing} says that for $t \ge C n (n + \log 1/\eps)$
\be
\sum_{p \ne 0} \left| g_t(q;p,p) - \frac{1}{4^n-1}\right| \le \eps.
\ee
We can then proceed much as for the 2-norm case:
\begin{align*}
&\sum_{p \ne 0} \left| \gamma_t(p,p) - \frac{\sum_{q\ne0} \gamma_0(q,q)}{4^n-1}\right| \\
&= \sum_{p \ne 0} \left| \sum_{q \ne 0} \gamma_0(q,q) \left(g_t(p,p;q,q) - \frac{1}{4^n-1}\right)\right| \\
&\le \sum_{q \ne 0} |\gamma_0(q,q)| \sum_{p\ne0} \left|g_t(p,p;q,q) - \frac{1}{4^n-1}\right| \\
&\le \eps \sum_{q \ne 0} |\gamma_0(q,q)| \\
&\le 2^n \eps.
\end{align*}
The last inequality follows from $|\sigma_q \otimes \sigma_q| = \sigma_0 \otimes \sigma_0$.  Therefore for $t \ge C n (n + \log 2^n/\eps)$, the 1-norm distance from stationarity for the $\gamma(p,p)$ terms is at most $\eps$.
}\qedhere
\end{itemize}
\end{proof}

We now proceed to prove \lemref{MainMixing}.  Firstly, we will consider the simple case of $k=1$ to prove this process forms a 1-design as this will help us to understand the more complicated case of $k=2$.

\subsection{First Moments Convergence}

Recall that $\rho = 2^{-n/2} \sum_p \gamma(p) \sigma_p$ and we wish to evaluate the moments of the coefficients.  So for the first moments to converge, we want to know $\Expect \gamma(p)$.

For $k=1$, the $U(4)$ random circuit uniformly randomises each pair that is chosen.  More precisely, a pair of sites $i, j$ are chosen at random and all the coefficients with $p_i \ne 0$ or $p_j \ne 0$ are set to zero.  Thus we get an exact 1-design when all sites have been hit.  For other gate sets, the terms do not decay to zero but decay by a factor depending on the gap of $\hat{G}$.  Call the gap $\Delta$; for $U(4)$ $\Delta=1$ and for others $0 < \Delta \le 1$ and $\Delta$ is independent of $n$.  Therefore once each site has been hit $m$ times the terms have decayed by a factor $(1-\Delta)^m$.

For a bound like the mixing time (see \secref{MarkovChain} for definition), we want to bound the quantity $\sum_{p \ne 0} | \Expect_W \gamma_W(p) |$ where $\gamma_W(p)$ is the Pauli coefficient after applying the random circuit $W$.  We also want 2-norm bounds, so we bound $\sum_{p \ne 0} (\Expect_W \gamma_W(p))^2$ too.  We will in fact find bounds on $\sum_{p \ne 0} \Expect_W | \gamma_W(p) |$ and $\sum_{p \ne 0} (\Expect_W |\gamma_W(p)|)^2$, which are stronger.

A standard problem in the theory of randomised algorithms is the `coupon collector' problem.  If a magazine comes with a free coupon, which is chosen uniformly randomly from $n$ different types, how many magazines should you buy to have a high probability of getting all $n$ coupons?  It is not hard to show that $n \ln \frac{n}{\eps}$ samples (magazines) have at least a $1-\eps$ probability of including all $n$ coupons.  Using this, we expect all sites to be hit with probability at least $1-\eps$ after $\Theta(n \log \frac{n}{\eps})$ steps.  This argument can be made precise in this context by bounding the non-identity coefficients.  We find, as expected, that the sum is small after $O(n \log n)$ steps:
\begin{lemma}
\label{lem:CoefficientDecayGeneral}
After $O(n \log 1/\eps)$ steps
\begin{equation*}
\sum_{p \ne 0} \left( \Expect_W |\gamma_W(p) | \right)^2 \le \eps
\end{equation*}
and after $O(n\log \frac{n}{\eps})$ steps,
\begin{equation}
\label{eq:TermsDecayGeneral}
\sum_{p \ne 0} \Expect_W | \gamma_W(p) | \le \eps.
\end{equation}
\end{lemma}
\begin{proof}
At each step, a pair of sites is chosen at random and any terms with non-identity coefficients for this pair decay by a factor $(1-\Delta)$.  For example, the term $\sigma_1 \otimes \sigma_0^{\otimes (n-1)}$ decays whenever the first site is chosen.  Thus the probability of each term decaying depends on the number of zeroes.  We start with the 1-norm bound.

Suppose the circuit applied after $t$ steps is $W_t$.  Consider $\Expect_{W_t} | \gamma_{W_t}(p) |$ for any $p$ with $d$ non-zeroes.  Since the state $\rho$ is physical, $\tr \rho^2 \le 1$ so $\sum_p \gamma^2_0(p)  \le 1$.  Now, in each step, if any site is chosen where $p$ is non-zero, this term decays by a factor $(1-\Delta)$.  This occurs with probability $1-\frac{(d-n)(d-n-1)}{n(n-1)} \ge d/n$, the probability of choosing a pair where at least one site is non-zero.  Therefore
\begin{equation*}
\Expect | \gamma_{W_t}(p) | \le \left( (1-\Delta)d/n + (1-d/n) \right) | \gamma_{W_{t-1}}(p) |
\end{equation*}
where the expectation is over the circuit applied at step $t$.  If we iterate this $t$ times we find
\begin{equation*}
\Expect_{W} | \gamma_{W}(p) | \le \exp(-\Delta t d/n) | \gamma_{0}(p) |
\end{equation*}
where the expectation here is over all random circuits for the $t$ steps.  We now sum over all $p$:
\begin{equation*}
\sum_{p \ne 0} \Expect_W | \gamma_W(p) | \le \sum_{d=1}^{n} \exp(-\Delta t d/n) \sum_{d(p) = d} | \gamma_{0}(p) |
\end{equation*}
where $d(p)$ is the number of non-zeroes in $p$.  For the 1-norm bound, we can simply bound $| \gamma_{0}(p) | \le 1$ to give $\sum_{d(p) = d} | \gamma_0(p) | \le {n \choose d} 3^d$ so
\begin{equation*}
\sum_{p \ne 0} \Expect_W | \gamma_W(p) | \le (1+3 \exp(-\Delta t /n))^n - 1
\end{equation*}
where we have used the binomial theorem.  Now let $t= \frac{n}{\Delta} \ln \frac{3n}{\eps}$.  This gives
\begin{equation*}
\sum_{p \ne 0} \Expect_W | \gamma_W(p) | \le (1+\eps/n)^n - 1 = O(\eps).
\end{equation*}
For the 2-norm bound,
\begin{align*}
\sum_{p \ne 0} (\Expect_W | \gamma_W(p) |)^2 \le& \sum_{p \ne 0} \exp(-2\Delta t d/n) \gamma^2_{0}(p) \\
=& \sum_{d=1}^n \exp(-2\Delta t d/n)\sum_{d(p) = d} \gamma^2_{0}(p) \\
\le& \sum_{d=1}^n \exp(-2\Delta t d/n) \\
\le& \frac{\exp(-2\Delta t /n)}{1-\exp(-2\Delta t /n)}
\end{align*}
where we have used $\sum_p \gamma^2_0(p) \le 1$.  We find after $\frac{n}{2\Delta} \ln 1/\eps$ steps that
\begin{equation*}
\sum_{p \ne 0} (\Expect_W |\gamma_W(p)|)^2 \le \frac{\eps}{1-\eps}\qedhere
\end{equation*}
\end{proof}

\subsection{Second Moments Convergence}

Firstly, the $\sigma_{p_1} \otimes \sigma_{p_2}$ terms for $p_1 \ne p_2$ decay in a similar way to the non-identity terms in the 1-design analysis.  In fact, the proof of \lemref{CoefficientDecayGeneral} carries over almost identically to this case to give
\begin{lemma}
\label{lem:CoefficientsDecay2General}
After $O(n \log 1/\eps)$ steps
\begin{equation*}
\sum_{p_1 \ne p_2} (\Expect_W |\gamma_W(p_1, p_2)|)^2 \le \eps
\end{equation*}
and after $O(n(n+\log 1/\eps))$ steps
\begin{equation*}
\sum_{p_1 \ne p_2} \Expect_W | \gamma_W(p_1, p_2) | \le \eps.
\end{equation*}
\end{lemma}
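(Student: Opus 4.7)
The plan is to mirror the structure of the proof of \lemref{CoefficientDecayGeneral}, with the condition ``site where $p$ is non-zero'' replaced by ``site where $p_1$ and $p_2$ differ.'' The key local observation, already contained in \secref{GforK2}, is that for a Haar-random gate in $U(4)$ applied to a pair $(i,j)$ with $(p_1)_{ij}\ne(p_2)_{ij}$, the coefficient $\gamma_{t+1}(p_1,p_2)$ is annihilated, since $\hat{G}(p_1,p_2;q_1,q_2)$ vanishes whenever $p_1\ne p_2$ locally; for a general 2-copy gapped $\mu$, the action of $\hat{G}^{(ij)}_\mu$ on the complement of the fixed subspace is contracted by a factor $1-\Delta<1$ depending only on $\mu$. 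Since a random pair contains at least one disagreement site with probability $\ge d(p_1,p_2)/n\ge 1/n$, each step shrinks the relevant norms at rate $\Omega(\Delta/n)$.

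For the 2-norm bound I would partition off-diagonal pairs into equivalence classes $\{C_b\}$ indexed by the disagreement pattern $b\in\{0,1\}^n$ (with $b_i=1$ iff $(p_1)_i\ne(p_2)_i$), and track $H_t(C_b):=\sum_{(p_1,p_2)\in C_b}\gamma_t^2(p_1,p_2)$. A gate at a pair $(i,j)$ with $b_i=1$ or $b_j=1$ annihilates (or shrinks by $(1-\Delta)^2$) every coefficient in $C_b$; a gate at an agreement pair preserves the class, and Cauchy--Schwarz applied to the 15-fold averaging of ``case 3'' (where $(p_1)_{ij}=(p_2)_{ij}\ne 0$) shows that this averaging does not increase $H_t(C_b)$. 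Averaging over the pair gives $\Expect H_{t+1}(C_b)\le (1-\Delta d_b/n)H_t(C_b)$ with $d_b:=|b|\ge 1$, hence $\Expect H_t(C_b)\le e^{-\Delta t d_b/n}H_0(C_b)$. Using Jensen's inequality $(\Expect|\gamma_W|)^2\le \Expect\gamma_W^2$, the bound $\sum_b H_0(C_b)=\sum_{p_1\ne p_2}\gamma_0^2\le \tr\rho^2\le 1$, and $d_b\ge 1$, I obtain
\begin{equation*}
\sum_{p_1\ne p_2}(\Expect|\gamma_W|)^2 \le \sum_b e^{-\Delta t d_b/n}H_0(C_b) \le e^{-\Delta t/n},
\end{equation*}
which is $\le\eps$ for $t=O(n\log 1/\eps)$.

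For the 1-norm bound I would directly track $S_t:=\sum_{p_1\ne p_2}\Expect|\gamma_W(p_1,p_2)|$. The same one-step analysis with the triangle inequality replacing Cauchy--Schwarz shows that an agreement-pair gate does not increase $\sum_{(p_1,p_2)\in C_b}|\gamma|$ and a disagreement-pair gate contracts it (annihilating it for Haar, shrinking it by $1-\Delta$ for general $\mu$), yielding $S_{t+1}\le (1-\Delta/n)S_t$. The starting bound $S_0\le 4^n$ follows from Cauchy--Schwarz with $\sum\gamma_0^2\le 1$, so $S_t\le 4^n e^{-\Delta t/n}\le\eps$ for $t=O(n(n+\log 1/\eps))$.

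The main technical difference from the $k=1$ proof is the ``case 3'' mixing absent in the 1-design analysis: a gate at an agreement pair with non-trivial local Pauli averages $\gamma(p_1,p_2)$ over 15 related values in the same class $C_b$. The key point that makes the proof go through is that this averaging is contractive in both $\ell_2$ and $\ell_1$ senses over $C_b$ (by Cauchy--Schwarz and the triangle inequality respectively), so the disagreement-pattern decomposition reduces the problem to a family of essentially one-dimensional decays indexed by $d_b$, exactly paralleling the $k=1$ analysis.
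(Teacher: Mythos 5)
Your overall strategy is the paper's: the only change from the $k=1$ analysis is that the decay rate is governed by the number $d$ of sites at which $p_1$ and $p_2$ differ rather than the number of non-zero sites, a disagreement site is hit with probability at least $d/n$ per step, and the $4^n$-sized count of off-diagonal pairs is what forces $t=O(n(n+\log 1/\eps))$ for the 1-norm. Your 1-norm argument (uniform contraction $S_{t+1}\le(1-\Delta/n)S_t$ together with $S_0\le 4^n$ from Cauchy--Schwarz) is a mild repackaging of the paper's, which keeps the $d$-dependence and uses the binomial theorem; both land on the same circuit length.

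The 2-norm part as written has a genuine problem. You invoke Jensen, $(\Expect|\gamma_W|)^2\le\Expect\gamma_W^2$, which means the quantity you are actually tracking is $H_t(C_b)=\Expect_W\sum_{C_b}\gamma_W^2$. That is a \emph{fourth}-moment functional of the circuit, and it does not decay: each gate acts as an orthogonal transformation on the Pauli coefficients, so $\sum_{p_1,p_2}\gamma_W^2(p_1,p_2)=\tr\rho_W^2$ is conserved for every realisation $W$, and for a fully randomising circuit the expected mass $\Expect_W\gamma_W^2(p_1,p_2)$ on each off-diagonal pair converges to a positive constant of order $4^{-2n}$ rather than to zero. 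Hence the recursion $\Expect H_{t+1}(C_b)\le(1-\Delta d_b/n)H_t(C_b)$ is false for that quantity, and in any case cannot be extracted from the 2-copy operator $\hat{G}$ alone. (If instead one reads $H_t$ with the paper's $\gamma_t=\Expect_W\gamma_W$, the recursion is fine but it bounds $\sum(\Expect\gamma_W)^2\le\sum(\Expect|\gamma_W|)^2$, which is the wrong direction for the stated claim.) The repair is exactly the paper's route, identical to \lemref{CoefficientDecayGeneral}: first establish the per-coefficient contraction $\Expect_W|\gamma_W(p_1,p_2)|\le e^{-\Delta t d/n}\,|\gamma_0(p_1,p_2)|$, then square and sum using $\sum\gamma_0^2\le\tr\rho^2\le 1$, which gives the geometric series $\sum_{d\ge1}e^{-2\Delta t d/n}\le\eps$ after $t=O(n\log 1/\eps)$ steps with no Jensen step and no class-level second-moment bookkeeping.
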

\begin{proof}
Instead of the number of zeroes governing the decay rate, we need to count the number of places where $p_1$ and $p_2$ differ.  This gives
\begin{equation*}
\Expect | \gamma_{W_t}(p_1, p_2) | \le \left( (1-\Delta)d/n + (1-d/n) \right) | \gamma_{W_{t-1}}(p_1, p_2) |
\end{equation*}
where now $d$ is the number of differing sites.  There are ${n \choose d} 12^d 4^{n-d}$ states that differ in $d$ places so we find
\begin{equation*}
\sum_{p_1 \ne p_2} \Expect_W | \gamma_W(p_1, p_2) | \le 4^n[(1+3 \exp(-\Delta t /n))^n - 1].
\end{equation*}
Set $t =\frac{n}{\Delta} (n \ln 4 + \ln 1/\eps)$ to make this $O(\eps)$.  The 2-norm bound follows in the same way as for \lemref{CoefficientDecayGeneral}.
\end{proof}
We now need to prove the $\gamma(p, p)$ terms converge quickly.  We have seen above that the sum of the terms $\gamma(p, p)$ is conserved and, for the purposes of proving \lemref{MainMixing}, we assume the sum is $1$ and $\gamma(p,p) \ge 0$ for all $p$.

To illustrate the evolution, consider the simplest case when the gates are chosen from $U(4)$.  We have evaluated $\hat{G}$ in \secref{GforK2} for $k=2$ for this case.  Translated into coefficients this yields the following update rule, where we have written it for the case when qubits 1 and 2 are chosen:
\begin{multline}
\label{eq:FullChain}
\gamma_{t+1}(r_1, r_2, r_3, \ldots, r_n, s_1, s_2, s_3, \ldots, s_n) \\
=\begin{cases}
0	&	(r_1, r_2) \ne (s_1, s_2) \\
\gamma_{t}(0, 0, r_3, \ldots, r_n, 0, 0, s_3, \ldots, s_n)	&	(r_1, r_2) = (s_1, s_2) = (0, 0) \\
\frac{1}{15} \sum_{r'_1, r'_2 \atop r'_1 r'_2 \ne 0} \gamma_{t}(r'_1, r'_2, r_3, \ldots, r_n, r'_1, r'_2, s_3, \ldots, s_n) &	(r_1, r_2) = (s_1, s_2) \ne (0, 0).
\end{cases}
\end{multline}
The key idea of Oliveira et al.~\cite{ODP06} was to map the evolution of the $\gamma(p, p)$ coefficients to a Markov chain.  We can apply this here to get, on state space $\{0, 1, 2, 3\}^n$, the evolution:
\begin{enumerate}
\item{Choose a pair of sites uniformly at random.}
\item{If the state is $00$ it remains $00$.}
\item{Otherwise, choose the state uniformly at random from $\{0,1,2,3\}^2 \backslash \{00\}$.}
\end{enumerate}
This is the correct evolution since, if the initial state is distributed according to $\gamma_t(q, q)$, the final state is distributed according to $\gamma_{t+1}(p, p)$.

The evolution for other gate sets will be similar, but the states will not be chosen uniformly randomly in the third step.  However, the state $00$ will remain $00$ and the stationary distribution on the other 15 states is the same.  We will find the convergence times for general gate sets and then consider the $U(4)$ gate set since we can perform a tight analysis for this case.

\subsection{Markov Chain Analysis}
\label{sec:MarkovChain}

Before finding the convergence rate for our problem, we will briefly introduce the basics of Markov chain mixing time analysis.  All of these standard results can be found in Ref.~\cite{MontenegroTetali06} and references therein.

A process is Markov if the evolution only depends on the current state rather than the full state history.  Therefore the evolution of the state can be thought of as a matrix, the \emph{transition matrix}, acting on a vector which represents the current distribution.  We will only be interested in discrete time processes so the state after $t$ steps is given by the $t^{\text{th}}$ power of the transition matrix acting on the initial distribution.

We say a Markov chain is \emph{irreducible} if it is possible to get from one state to any other state in some number of steps.  Further, a chain is \emph{aperiodic} if it does not return to a state at regular intervals.  If a chain is both irreducible and aperiodic then it is said to be \emph{ergodic}.  A well known result of Markov chain theory is that all ergodic chains converge to a unique stationary distribution.  In matrix language this says that the transition matrix $P$ has eigenvalue $1$ with no multiplicity and all other eigenvalues have absolute value strictly less than 1.  We will also need the notion of \emph{reversibility}.  A Markov chain is reversible if the time reversed chain has the same transition matrix, with respect to some distribution.  This condition is also known as \emph{detailed balance}:
\begin{equation}
\label{eq:DetailedBalance}
\pi(x) P(x, y) = \pi(y) P(y, x).
\end{equation}
It can be shown that a reversible ergodic Markov chain is only reversible with respect to the stationary distribution.  So above $\pi(x)$ is the stationary distribution of $P$.  An immediate consequence of this is that for a chain with uniform stationary distribution, it is reversible if and only if it is symmetric (i.e.~$P(x, y) = P(y, x)$).  Note also that reversible chains have real eigenvalues, since they are similar to the symmetric matrix $\sqrt{\frac{\pi(x)}{\pi(y)}}P(x, y)$.

With these definitions and concepts, we can now ask how quickly the Markov chain converges to the stationary distribution.  This is normally defined in terms of the 1-norm mixing time.  We use (half the) 1-norm distance to measure distances between distributions:
\begin{equation}
\vectornorm{s - t} = \frac{1}{2} \vectornorm{s - t}_1 = \frac{1}{2} \sum_i |s_i - t_i|.
\end{equation}
We assume all distributions are normalised so then $0 \le \vectornorm{s - t} \le 1$.  We can now define the mixing time:
\begin{definition}
Let $\pi$ be the stationary distribution of $P$.  Then if $P$ is ergodic the mixing time $\tau$ is
\begin{equation}
\tau(\eps) = \max_s \min_t \{ t \ge 0 : \vectornorm{P^t s - \pi} \le \eps \}.
\end{equation}
\end{definition}
We will also use the (weaker) 2-norm mixing time (note this is not the same as $\tau_2$ in Ref.~\cite{MontenegroTetali06}):
\begin{definition}
Let $\pi$ be the stationary distribution of $P$.  Then if $P$ is ergodic the 2-norm mixing time $\tau_2$ is
\begin{equation}
\tau_2(\eps) = \max_s \min_t \{ t \ge 0 : \vectornorm{P^t s - \pi}_2 \le \eps \}.
\end{equation}
\end{definition}
Unless otherwise stated, when we say mixing time we are referring to the 1-norm mixing time.

There are many techniques for bounding the mixing time, including finding the second largest eigenvalue of $P$.  This gives a good measure of the mixing time because components parallel to the second largest eigenvector decay the slowest.  We have (for reversible ergodic chains)
\begin{theorem}[see Ref.~\cite{MontenegroTetali06}, Corollary 1.15]
\label{thm:MixingTimeGap}
\begin{equation*}
\tau(\eps) \le \frac{1}{\Delta} \ln \frac{1}{\pi_* \eps}
\end{equation*}
where $\pi_* = \min \pi(x)$ and $\Delta = \min(1-\lambda_2, 1+\lambda_{min})$ where $\lambda_2$ is the second largest eigenvalue and $\lambda_{min}$ is the smallest.  $\Delta$ is known as the \emph{gap}.
\end{theorem}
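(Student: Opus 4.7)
My plan is to carry out the standard spectral mixing time analysis for reversible Markov chains, with the $L^2(\pi)$ norm serving as an intermediate quantity linking the spectrum of $P$ to total variation distance.

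First I would use detailed balance (\eq{DetailedBalance}) to observe that $P$ is self-adjoint on $L^2(\pi)$, the inner product space defined by $\langle f, g\rangle_\pi := \sum_x \pi(x) f(x) g(x)$ (equivalently, $\tilde{P}(x,y) := \sqrt{\pi(x)/\pi(y)}\,P(x,y)$ is a symmetric matrix). Hence $P$ has real eigenvalues $1 = \lambda_1 > \lambda_2 \ge \cdots \ge \lambda_N \ge -1$ and a $\langle\cdot,\cdot\rangle_\pi$-orthonormal eigenbasis $\{v_i\}$, with $v_1 = \mathbf{1}$ spanning the $\lambda_1$-eigenspace; the definition of $\Delta$ gives $|\lambda_i| \le 1-\Delta$ for all $i \ge 2$.

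Next, for any initial distribution $s$, introduce the density $f_0 := s/\pi$. A direct calculation shows that the density of $s_t := P^t s$ with respect to $\pi$ equals $P^t f_0$, where $P$ now acts on functions from the left. Since $\langle f_0 - \mathbf{1}, \mathbf{1}\rangle_\pi = \sum_x (s(x) - \pi(x)) = 0$, expanding $f_0 - \mathbf{1}$ in $v_2,\ldots,v_N$ and applying $P^t$ yields $\|f_t - \mathbf{1}\|_\pi \le (1-\Delta)^t \|f_0 - \mathbf{1}\|_\pi$. Cauchy--Schwarz converts this back to total variation,
\[
\|s_t - \pi\|_1 = \sum_x \pi(x)\, |f_t(x) - 1| \le \|f_t - \mathbf{1}\|_\pi,
\]
and the initial norm is controlled by $\|f_0 - \mathbf{1}\|_\pi^2 = \sum_x s(x)^2/\pi(x) - 1 \le 1/\pi_*$ (using $s(x) \le 1$ and $\pi(x) \ge \pi_*$). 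Combining these with $\|\cdot\| = \tfrac{1}{2}\|\cdot\|_1$ and $\ln(1/(1-\Delta)) \ge \Delta$ yields the stated mixing time bound.

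The argument is textbook and I do not foresee any real obstacle. The only subtlety is the bookkeeping of the two factors of $\sqrt{\pi_*}$ (one from Cauchy--Schwarz in passing from $L^2(\pi)$ to $L^1$, one from bounding $\|f_0 - \mathbf{1}\|_\pi$), whose product yields the $1/\pi_*$ inside the logarithm of the stated bound; in fact my argument produces the slight strengthening with $\sqrt{\pi_*}$ in place of $\pi_*$.
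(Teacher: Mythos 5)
Your proof is correct and is precisely the standard spectral argument behind the cited result (the paper itself gives no proof of this theorem, quoting Montenegro--Tetali, Corollary 1.15). One small bookkeeping note: as written, your Cauchy--Schwarz step contributes no factor of $\pi_*$ (the inequality $\|s_t-\pi\|_1\le\|f_t-\mathbf{1}\|_\pi$ is clean, and the only $\pi_*$ enters through $\|f_0-\mathbf{1}\|_\pi\le\pi_*^{-1/2}$), so your final estimate is $\tau(\eps)\le\frac{1}{\Delta}\ln\frac{1}{2\eps\sqrt{\pi_*}}$, which is the strengthening you mention and in particular implies the stated, weaker bound.
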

If the chain is irreversible, it may not even have real eigenvalues.  However, we can bound the mixing time in terms of the eigenvalues of the reversible matrix $PP^*$ where $P^*(x, y) = \frac{\pi(y)}{\pi(x)} P(y, x)$.  In this case we have (\cite{MontenegroTetali06}, Corollary 1.14)
\begin{equation}
\tau(\eps) \le \frac{2}{\Delta_{PP^*}} \ln \frac{1}{\pi_* \eps}
\end{equation}
where now $\Delta_{PP^*}$ is the gap of the chain $PP^*$.  Note that for a reversible chain $P = P^*$ and $\Delta_{PP^*} \approx 2\Delta$ so the bounds are approximately the same.

This can also be converted into a 2-norm mixing time bound:
\begin{equation}
\label{eq:2normMixing}
\tau_2(\eps)\le \frac{2}{\Delta_{PP^*}} \ln 1/\eps.
\end{equation}
To bound the gap, we will use the comparison theorem in \thmref{Comparison} below.  In this Theorem, we are thinking of the Markov chain as a directed graph where the vertices are the states and there are edges for allowed transitions (i.e.~transitions with non-zero probability).  For irreducible chains, it is possible to make a path from any vertex to any other; we call the path length the number of transitions in such a path (which will in general depend on the choice of path).
\begin{theorem}[see Ref.~\cite{MontenegroTetali06}, Theorem 2.14]
\label{thm:Comparison}
Let $P$ and $\hat{P}$ be two Markov chains on the same state space $\Omega$ with the same stationary distribution $\pi$.  Then, for every $x \ne y \in \Omega$ with $\hat{P}(x, y) > 0$ define a directed path $\gamma_{xy}$ from $x$ to $y$ along edges in $P$ and let its length be $| \gamma_{xy} |$.  Let $\Gamma$ be the set of all such paths.  Then
\begin{equation*}
\Delta \ge \hat{\Delta}/A
\end{equation*}
for the gaps $\Delta$ and $\hat{\Delta}$ where
\begin{equation*}
A = A(\Gamma) = \max_{a \ne b, P(a, b) \ne 0} \frac{1}{\pi(a) P(a, b)} \sum_{x \ne y : (a, b) \in \gamma_{xy}} \pi(x) \hat{P}(x, y) | \gamma_{xy} |.
\end{equation*}
\end{theorem}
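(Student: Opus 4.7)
The plan is to invoke the variational characterisation of the spectral gap in terms of the Dirichlet form. For a reversible chain $P$ with stationary distribution $\pi$, define
$$\mathcal{E}_P(f,f) = \frac{1}{2}\sum_{x,y} \pi(x) P(x,y) (f(x) - f(y))^2,$$
and recall the variational identity $\Delta = \min_f \mathcal{E}_P(f,f)/\mathrm{Var}_\pi(f)$, where the minimum runs over non-constant $f$. Since both chains share the stationary distribution $\pi$ and hence the variance functional, establishing the Dirichlet-form bound $\hat{\mathcal{E}}(f,f) \le A\cdot\mathcal{E}_P(f,f)$ pointwise in $f$ immediately yields $\hat\Delta \le A\cdot\Delta$, i.e.\ $\Delta \ge \hat\Delta/A$.

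To prove that pointwise bound, for each pair $(x,y)$ with $\hat{P}(x,y)>0$ I would write $f(x) - f(y)$ as a telescoping sum along the fixed path $\gamma_{xy} = (x=v_0, v_1, \ldots, v_m=y)$ and apply Cauchy-Schwarz:
$$(f(x) - f(y))^2 = \Bigl(\sum_{(a,b)\in\gamma_{xy}} (f(a)-f(b))\Bigr)^2 \le |\gamma_{xy}|\sum_{(a,b)\in\gamma_{xy}} (f(a)-f(b))^2.$$
Substituting into $\hat{\mathcal{E}}(f,f)$ and exchanging the outer sum over $(x,y)$ with the inner sum over edges $(a,b)\in\gamma_{xy}$, the coefficient of $(f(a)-f(b))^2$ becomes $\sum_{(x,y):(a,b)\in\gamma_{xy}} \pi(x)\hat{P}(x,y)|\gamma_{xy}|$. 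Multiplying and dividing by $\pi(a)P(a,b)$ identifies this bracket with the constant $A$ from the statement, yielding $\hat{\mathcal{E}}(f,f) \le A\cdot\mathcal{E}_P(f,f)$ as desired.

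The main obstacle is conceptual rather than technical: the variational formula used above requires $P$ to be reversible, whereas \thmref{Comparison} as stated places no such restriction. In the reversible setting used later in this paper the argument above suffices. For genuinely irreversible $P$ one instead compares Dirichlet forms of the additive symmetrisations $(P+P^*)/2$ and $(\hat P + \hat P^*)/2$ and takes account of the fact that the relevant spectral quantity is the gap of the multiplicative symmetrisation $PP^*$; this introduces only cosmetic changes in the path-counting constant $A$. The remaining Cauchy-Schwarz and double-sum manipulations are routine, so once the Dirichlet-form framework is in place the proof is short; the real content of the theorem lies in the freedom to choose short, well-distributed paths $\gamma_{xy}$ when one later applies it.
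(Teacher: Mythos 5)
This theorem is not proved in the paper at all --- it is quoted verbatim from Montenegro and Tetali (their Theorem 2.14), so there is no internal proof to compare against. Your argument is the standard canonical-paths proof of that result and it is correct: the variational characterisation $\Delta = \min_f \mathcal{E}_P(f,f)/\mathrm{Var}_\pi(f)$, the telescoping-plus-Cauchy--Schwarz bound $(f(x)-f(y))^2 \le |\gamma_{xy}|\sum_{(a,b)\in\gamma_{xy}}(f(a)-f(b))^2$, and the exchange of summation that identifies the congestion constant $A$ are exactly the ingredients of the cited proof, and your remark about reversibility matches how the paper itself later splits into the reversible case (comparing $P$ directly) and the irreversible case (working with $PP^*$). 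One small point worth being aware of: your Dirichlet-form argument controls only $1-\lambda_2$, whereas the paper's earlier definition of the gap is $\min(1-\lambda_2,\,1+\lambda_{\min})$; the comparison of the smallest eigenvalue does not follow from this computation and requires a separate argument (odd-length cycles or laziness), which is also how the source handles it. This does not affect any application of the theorem in the paper, but it means the statement as transcribed here is slightly looser than what the telescoping argument literally proves.
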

For example, when comparing 1-dimensional random walks there is no choice in the paths; they must pass through every point between $x$ and $y$.  Further, the walk can only progress one step at a time so (without loss of generality, for reversible chains) let $b = a+1$ to give
\begin{align}
\label{eq:ComparisonWalk}
A &= \max_a \frac{1}{\pi(a) P(a, a+1)} \sum_{x \le a} \sum_{y \ge a+1} \pi(x) \hat{P}(x, y) (y-x) \nonumber \\
&= \max_a \frac{\hat{P}(a, a+1)}{P(a, a+1)}.
\end{align}
A generalisation of the comparison theorem involves constructing flows, which are weighted sets of paths between states.  This can give a tighter bound since bottlenecks are averaged over.  This gives a modified comparison theorem:
\begin{theorem}[\cite{DiaconisSaloff-Coste93}, Theorem 2.3]
\label{thm:ComparisonFlows}
Let $P$ and $\hat{P}$ be two Markov chains on the same state space $\Omega$ with the same stationary distribution $\pi$.  Then, for every $x \ne y \in \Omega$ with $\hat{P}(x, y) > 0$, construct a set of directed paths $\mathcal{P}_{xy}$ from $x$ to $y$ along edges in $P$.  We define the flow function $f$ which maps each path $\gamma_{xy} \in \mathcal{P}_{xy}$ to a real number in the interval $[0, 1]$ such that
\begin{equation*}
\sum_{\gamma_{xy} \in \mathcal{P}_{xy}} f(\gamma_{xy}) = \hat{P}(x, y).
\end{equation*}
Again, let the length of each path be $| \gamma_{xy} |$.  Then
\begin{equation*}
\Delta \ge \hat{\Delta}/A
\end{equation*}
for the gaps $\Delta$ and $\hat{\Delta}$ where
\begin{equation}
\label{eq:ComparisonFlowsA}
A = A(f) = \max_{a \ne b, P(a, b) \ne 0} \frac{1}{\pi(a) P(a, b)} \sum_{x \ne y, \gamma_{xy} \in \mathcal{P}_{xy} : (a, b) \in \gamma_{xy}} \pi(x) f(\gamma_{xy}) | \gamma_{xy} |.
\end{equation}
\end{theorem}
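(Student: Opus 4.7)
The plan is to pass from transition-matrix data to the Dirichlet form $\cE_P(g,g) := \frac{1}{2}\sum_{a,b} (g(a)-g(b))^2 \pi(a) P(a,b)$, and use the Rayleigh--Ritz characterisation $\Delta = \min_{g \colon \text{Var}_\pi(g)\neq 0} \cE_P(g,g)/\text{Var}_\pi(g)$ for the spectral gap of a reversible chain with stationary distribution $\pi$ (and the analogous one for $\hat{P}$). If I can prove $\hat{\cE}(g,g) \le A\,\cE_P(g,g)$ for every $g$, then minimising the Rayleigh quotient yields $\hat{\Delta} \le A\,\Delta$, which is exactly the asserted $\Delta \ge \hat{\Delta}/A$. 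So the whole task reduces to a single Dirichlet-form comparison.

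To establish that comparison, I would expand $\hat{P}(x,y) = \sum_{\gamma_{xy}\in \mathcal{P}_{xy}} f(\gamma_{xy})$ inside $\hat{\cE}(g,g)$, and for each path $\gamma_{xy}$ use the telescoping identity $g(y) - g(x) = \sum_{(a,b)\in\gamma_{xy}} (g(b) - g(a))$. A single application of Cauchy--Schwarz, which is where the length $|\gamma_{xy}|$ enters, gives $(g(y)-g(x))^2 \le |\gamma_{xy}|\sum_{(a,b)\in\gamma_{xy}} (g(b)-g(a))^2$. Substituting and swapping the order of summation so that the outer sum runs over edges $(a,b)$ with $P(a,b)>0$, the coefficient of $(g(b)-g(a))^2$ becomes $\sum_{x\neq y,\ \gamma_{xy}\ni(a,b)} |\gamma_{xy}|\,\pi(x)\,f(\gamma_{xy})$, which by the hypothesis defining $A$ is at most $A\,\pi(a)\,P(a,b)$. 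Summing over $(a,b)$ yields $\hat{\cE}(g,g) \le A\,\cE_P(g,g)$ immediately.

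The main obstacle is not really analytic but combinatorial bookkeeping: one has to verify that every edge carrying positive flow actually satisfies $P(a,b) > 0$, so that the denominator in the definition of $A$ is well-defined, and that the interchange of summations is legitimate. A separate subtlety is the reversibility assumption hidden in the Rayleigh--Ritz step: if $P$ or $\hat{P}$ fails to be reversible, the same Dirichlet-form comparison must be applied to the multiplicative reversibilisation $PP^*$ versus $\hat{P}\hat{P}^*$, paying the factor of $2$ that appears in Corollary~1.14 cited above. Beyond these points, the entire argument is a single Cauchy--Schwarz estimate followed by a careful re-indexing, with no genuine analytic difficulty.
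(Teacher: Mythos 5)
The paper does not prove this statement; it is imported verbatim from Diaconis and Saloff-Coste (Theorem 2.3 of the cited reference), so there is no internal proof to compare against. Your proposal correctly reconstructs the standard argument from that reference: the variational characterisation $\Delta=\min_g \mathcal{E}_P(g,g)/\mathrm{Var}_\pi(g)$, the expansion of $\hat{P}(x,y)$ into flows, the telescoping-plus-Cauchy--Schwarz bound $(g(y)-g(x))^2\le|\gamma_{xy}|\sum_{(a,b)\in\gamma_{xy}}(g(b)-g(a))^2$, and the re-indexing over edges, which together give $\hat{\mathcal{E}}(g,g)\le A\,\mathcal{E}_P(g,g)$ and hence $\Delta\ge\hat{\Delta}/A$; your caveat that reversibility is needed for the Rayleigh--Ritz step (and that the non-reversible case must pass to $PP^*$, as the paper indeed does in its application) is also the right one.
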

Note that we recover the comparison theorem when there is just one path between each $x$ and $y$.

\subsubsection{log-Sobolev Constant}

We will need tighter, but more complicated, mixing time results to prove the tight result for the $U(4)$ case.  We use the log-Sobolev constant:
\begin{definition}
\label{def:LogSobolev}
The log-Sobolev constant $\rho$ of a chain with transition matrix $P$ and stationary distribution $\pi$ is
\begin{equation*}
\rho = \min_f \frac{\sum_{x \ne y} (f(x) - f(y))^2 P(x, y) \pi(y)}{\sum_x \pi(x) f(x)^2 \log \frac{f(x)^2}{\sum_y \pi(y) f(y)^2}}.
\end{equation*}
\end{definition}
The mixing time result is:
\begin{lemma}[see Ref.~\cite{DiaconisSaloff-Coste96}, Theorem 3.7']
The mixing time of a finite, reversible, irreducible Markov chain is
\begin{equation}
\label{eq:SobolevMixingTime}
\tau(\eps) = O\left(\frac{1}{\rho} \log \log \frac{1}{\pi_*} + \frac{1}{\Delta}{\log \frac{d}{\eps}}\right)
\end{equation}
where $\rho$ is the Sobolev constant, $\pi_*$ is the smallest value of the stationary distribution, $\Delta$ is the gap and $d$ is the size of the state space.
\end{lemma}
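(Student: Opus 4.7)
The plan is to decompose the mixing into two temporal phases matching the two terms in the bound: a hypercontractive phase controlled by the log-Sobolev constant $\rho$, followed by a spectral phase controlled by the gap $\Delta$. For a worst-case starting point $y$, let $f_t(x) = P^t(y,x)/\pi(x)$ denote the density of the distribution at time $t$ relative to $\pi$; the target is the $1$-norm bound $\|f_t - 1\|_{L^1(\pi)}\le 2\eps$, which is equivalent to the total-variation distance in the definition of $\tau(\eps)$.

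For phase $1$, I would pass to the continuous-time semigroup $H_t = e^{t(P-I)}$, which has the same Dirichlet form as $P$, and use the standard differentiation identity $\frac{d}{dt}\,\mathrm{Ent}_\pi(f_t) = -4\,\mathcal{E}(\sqrt{f_t},\sqrt{f_t})$. Applying \defref{LogSobolev} with $g=\sqrt{f_t}$ then gives $\frac{d}{dt}\,\mathrm{Ent}_\pi(f_t) \le -4\rho\,\mathrm{Ent}_\pi(f_t)$, so $\mathrm{Ent}_\pi(f_t)\le e^{-4\rho t}\mathrm{Ent}_\pi(f_0)\le e^{-4\rho t}\log(1/\pi_*)$. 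After time $t_1 = O\bigl((1/\rho)\log\log(1/\pi_*)\bigr)$ the entropy is $O(1)$, and by Gross's equivalence between log-Sobolev inequalities and hypercontractivity of the semigroup, this small-entropy regime translates into an $L^q$ bound on $f_{t_1}$ for some $q>2$, and in particular $\|f_{t_1}-1\|_{L^2(\pi)}=O(1)$. The discrete-time claim then follows by a standard comparison of the iterates $P^t$ with the continuous-time semigroup $H_t$, losing only constant factors.

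For phase $2$, the Poincar\'e/spectral-gap inequality, applied on the subspace orthogonal to the stationary eigenvector, yields $\|f_{t_1+s} - 1\|_{L^2(\pi)}\le(1-\Delta)^s\|f_{t_1}-1\|_{L^2(\pi)}\le e^{-\Delta s}\|f_{t_1}-1\|_{L^2(\pi)}$. Starting from the $O(1)$ bound produced by phase $1$, it suffices to take $s = O\bigl((1/\Delta)\log(d/\eps)\bigr)$; the $d = |\Omega|$ appearing in $\log(d/\eps)$ is a conservative bookkeeping of the initial $L^2$ size (which can be as large as $\sqrt{1/\pi_*}$) together with the final $L^2\to L^1$ conversion. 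That conversion is immediate from a $\pi$-weighted Cauchy--Schwarz: $\|f-1\|_{L^1(\pi)} = \sum_x\pi(x)|f(x)-1|\le\|f-1\|_{L^2(\pi)}$, which upgrades the $L^2$ bound to the desired total-variation bound, and summing $t_1+s$ gives the stated estimate.

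The main technical obstacle is the first phase: one must know that small relative entropy implies a quantitative bound on $\|f_t\|_{L^2(\pi)}$ (or at least on $\|f_t\|_{L^q(\pi)}$ for some $q>2$). This is precisely Gross's theorem identifying log-Sobolev inequalities with hypercontractivity of the Markov semigroup, and extracting the sharp $\log\log(1/\pi_*)$ scaling (rather than the weaker $\log(1/\pi_*)$ one would get from the spectral gap alone) is where the bulk of the work in Diaconis--Saloff-Coste's argument lies. The rest of the proof is comparatively routine: once a constant $L^2$ distance is established, the Poincar\'e inequality and Cauchy--Schwarz finish the job.
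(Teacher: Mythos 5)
The paper does not actually prove this lemma: it is quoted verbatim as Theorem~3.7$'$ of Diaconis--Saloff-Coste, so there is no internal proof to compare against. Your two-phase architecture (a log-Sobolev-controlled burn-in of length $O(\rho^{-1}\log\log(1/\pi_*))$ to reach an $O(1)$ chi-square distance, followed by spectral decay and a Cauchy--Schwarz conversion from $L^2(\pi)$ to total variation) is indeed the architecture of their proof, and your phase~2 is fine.

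The genuine gap is the hinge between your two phases. Exponential decay of the relative entropy $\mathrm{Ent}_\pi(f_t)$ does \emph{not} imply $\|f_{t_1}-1\|_{L^2(\pi)}=O(1)$: entropy controls the $L^1(\pi)$ distance via Pinsker, but a density can have $O(1)$ entropy while its $L^2(\pi)$ norm is as large as you like (concentrate a small amount of mass on an atom of tiny $\pi$-measure). ``Gross's equivalence'' is between the log-Sobolev inequality and hypercontractivity of the semigroup; it is not a statement that small entropy yields an $L^q$ bound, so the sentence doing the work in your phase~1 is a non sequitur. The correct route (and the one in Diaconis--Saloff-Coste) dispenses with the entropy computation entirely: hypercontractivity gives $\|H_s\|_{q'\to 2}\le 1$ for $q'=1+e^{-4\rho s}$ (using reversibility to dualise $\|H_s\|_{2\to q}\le 1$), and one applies this directly to the initial density $h_0=\delta_y/\pi(y)$, whose norm is
\begin{equation*}
\|h_0\|_{L^{q'}(\pi)}=\pi(y)^{-(1-1/q')}=\exp\!\left(\frac{q'-1}{q'}\log\frac{1}{\pi(y)}\right)\le \exp\!\left(e^{-4\rho s}\log\frac{1}{\pi_*}\right),
\end{equation*}
which is $O(1)$ precisely once $s\gtrsim \frac{1}{4\rho}\log\log\frac{1}{\pi_*}$; that is the only place the doubly-logarithmic term can come from. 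Your entropy-decay differential inequality is true (it is the modified log-Sobolev/entropy-decay bound) but it proves a different statement and cannot be fed into the Poincar\'e step as written. A secondary caveat: the transfer from the continuous-time semigroup back to the discrete iterates $P^t$ is not free of hypotheses (one needs control of the negative spectrum, e.g.\ via laziness or by working with $PP^*$), which is exactly why the cited reference states a separate discrete-time Theorem~3.7$'$.
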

Further, the comparison theorem (\thmref{Comparison}) works just the same to give
\begin{equation*}
\rho \ge \hat{\rho}/A.
\end{equation*}
We will need one more result, due to Diaconis and Saloff-Coste:
\begin{lemma}[\cite{DiaconisSaloff-Coste96}, Lemma 3.2]
\label{lem:ProductChain}
Let $P_i$, $i = 1, \ldots, d$, be Markov chains with gaps $\Delta_i$ and Sobolev constants $\rho_i$.  Now construct the product chain $P$.  This chain has state space equal to the product of the spaces for the chains $P_i$ and at each step one of the chains is chosen at random and run for one step.  Then $P$ has spectral gap given by:
\begin{equation*}
\Delta = \frac{1}{d} \min_i \Delta_i
\end{equation*}
and Sobolev constant:
\begin{equation*}
\rho = \frac{1}{d} \min_i \rho_i.
\end{equation*}
\end{lemma}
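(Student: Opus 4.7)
The plan is to exploit the tensor-product structure of the product chain and reduce everything to statements about the individual $P_i$. Write the state space as $\Omega = \Omega_1 \times \cdots \times \Omega_d$ and the stationary distribution as $\pi = \pi_1 \otimes \cdots \otimes \pi_d$. The transition operator is $P = \frac{1}{d}\sum_{i=1}^d \tilde{P}_i$, where $\tilde{P}_i = I \otimes \cdots \otimes P_i \otimes \cdots \otimes I$ acts as $P_i$ on the $i$-th coordinate and as the identity elsewhere. Because the $\tilde{P}_i$ commute and are simultaneously diagonalisable, the eigenvalues of $P$ are exactly $\frac{1}{d}\sum_i \lambda_{j_i}^{(i)}$ where $\lambda_{j}^{(i)}$ ranges over the eigenvalues of $P_i$.

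For the spectral gap, the top eigenvalue $1$ comes from taking every $\lambda_{j_i}^{(i)} = 1$, and the second-largest eigenvalue is obtained by replacing exactly one factor by the second-largest eigenvalue of the corresponding $P_i$, giving $\lambda_2(P) = 1 - \frac{1}{d}\min_i \Delta_i$. Hence $\Delta = \frac{1}{d}\min_i \Delta_i$. (Taking a test function depending only on the coordinate $i^*$ that achieves the minimum shows this is tight.)

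For the log-Sobolev constant, I would first compute the Dirichlet form of $P$. Since $\tilde{P}_i$ only changes the $i$-th coordinate, a short calculation gives
\begin{equation*}
\mathcal{E}_P(f,f) = \frac{1}{d}\sum_{i=1}^d \mathbb{E}_{X_{-i}\sim \pi_{-i}}\!\bigl[\mathcal{E}_{P_i}(f(\cdot, X_{-i}),\,f(\cdot, X_{-i}))\bigr],
\end{equation*}
where $X_{-i}$ denotes the coordinates other than $i$. The key input is the \emph{tensorisation of entropy} (a classical inequality of Han/Ledoux): for any non-negative $g$ on the product space,
\begin{equation*}
\mathrm{Ent}_\pi(g) \le \sum_{i=1}^d \mathbb{E}_{X_{-i}\sim \pi_{-i}}\!\bigl[\mathrm{Ent}_{\pi_i}(g(\cdot, X_{-i}))\bigr].
\end{equation*}
Applying this to $g = f^2$, and then invoking the log-Sobolev inequality $\rho_i\,\mathrm{Ent}_{\pi_i}(h^2) \le \mathcal{E}_{P_i}(h,h)$ coordinate-by-coordinate with $h = f(\cdot, X_{-i})$, yields
\begin{equation*}
\mathrm{Ent}_\pi(f^2) \le \frac{1}{\min_i \rho_i}\sum_i \mathbb{E}_{X_{-i}}\!\bigl[\mathcal{E}_{P_i}(f(\cdot,X_{-i}),f(\cdot,X_{-i}))\bigr] = \frac{d}{\min_i \rho_i}\,\mathcal{E}_P(f,f),
\end{equation*}
which rearranges to $\rho \ge \frac{1}{d}\min_i \rho_i$. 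Tightness again follows by taking $f$ to depend only on the $i^*$-th coordinate and then lifting a near-extremal function for $\rho_{i^*}$.

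The main obstacle is really just the tensorisation-of-entropy step: the spectral gap statement is essentially linear algebra on tensor products, but justifying the additivity of entropy across factors is the nontrivial ingredient. Given that standard inequality, the rest is a direct computation, and both halves of the lemma fall out by the same pattern: compute the Dirichlet form factor-by-factor, apply the one-coordinate inequality, and then take minima.
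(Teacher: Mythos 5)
The paper itself offers no proof of this lemma; it is imported verbatim as Lemma 3.2 of the cited Diaconis--Saloff-Coste reference. Your argument is correct and is essentially the proof given there: the commuting tensor-factor decomposition handles the spectral gap (where the gap is the Dirichlet-form gap $1-\lambda_2$), and the tensorisation (subadditivity) of entropy over product measures, combined with the coordinatewise log-Sobolev inequalities and the factorised Dirichlet form, gives the lower bound on $\rho$, with tightness from functions of a single coordinate.
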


\subsection{Convergence Proof}

We now prove the Markov chain convergence results to show that the $\gamma(p, p)$ terms converge quickly.  We have already shown that the $\gamma(p_1, p_2)$ terms with $p_1 \ne p_2$ converge quickly and that there is no mixing between these terms and the $\gamma(p, p)$ terms.  Therefore, in this section, we remove such terms from $\hat{G}$.

We want to prove the Markov chain with transition matrix (\eq{GeneralTransitionMatrix})
\begin{equation*}
P = \frac{1}{n(n-1)} \sum_{i \ne j} \hat{G}^{(ij)}
\end{equation*}
converges quickly.  Firstly, we know from \secref{MomentsGeneral} that $P$ has two eigenvectors with eigenvalue $1$.  The first is the identity state ($\sigma_0 \otimes \sigma_0$) and the second is the uniform sum of all non-identity terms ($\frac{1}{4^n-1}\sum_{p \ne 0} \sigma_p \otimes \sigma_p$).  From now on, we remove the identity state.  This makes the chain irreducible.  Since we know it converges, it must be aperiodic also so the chain is ergodic and all other eigenvalues are strictly between $1$ and $-1$.

We show here that the gap of this chain, up to constants, does not depend on the choice of 2-copy gapped gate set.  In the second half of the paper we find a tight bound on the gap for the $U(4)$ case which consequently gives a tight bound on the gap for all universal sets.
 
Since the stationary distribution is uniform, the chain is reversible
if and only if $P$ is a symmetric matrix.  A sufficient condition for
$P$ to be symmetric is for $\hat{G}^{(ij)}$ to be symmetric.  We saw
in \thmref{SymmetryG} that for the $U(4)$ gate set case
$\hat{G}^{(ij)}$ is symmetric.  In fact, the proof works identically
to show that $\hat{G}^{(ij)}$ is symmetric for any gate set, provided
the set is invariant under Hermitian conjugation.  However, 2-copy gapped gate
sets do not necessarily have this property so the Markov chain is not
necessarily reversible.  We will find equal bounds (up to constants)
for the gaps of both $P$ (if $\hat{G}$ is symmetric) and $P P^*$ (if
$\hat{G}$ is not symmetric) below: 
\begin{theorem}
\label{thm:GapGeneralUniversal}
Let $\mu$ be any 2-copy gapped distribution of gates.  If $\mu$ is
invariant under Hermitian conjugation then let $\Delta_P$ be the
eigenvalue gap of the resulting Markov chain matrix $P$.  Then
\begin{equation}
\Delta_P = \Omega(\Delta_{U(4)})
\end{equation}
where $\Delta_{U(4)}$ is the eigenvalue gap of the $U(4)$ chain.  If
$\mu$ is not invariant under Hermitian conjugation then let $\Delta_{P
  P^*}$ be the eigenvalue gap of the resulting Markov chain matrix $P
P^*$.  Then 
\begin{equation}
\Delta_{P P^*} = \Omega(\Delta_{U(4)}).
\end{equation}
\end{theorem}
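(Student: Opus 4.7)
The strategy is to apply the flow comparison theorem (\thmref{ComparisonFlows}) with the target chain $\hat P = P_{U(4)}$, whose gap is known, and the source chain $P = P_\mu$. The intuition is that a single step of $P_{U(4)}$ applies a Haar-random $U(4)$ gate to a random pair of qubits, and this can be simulated by $m$ consecutive steps of $P_\mu$ that all happen to select the same pair, where $m$ is a constant depending only on $\mu$.

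The key preparatory step uses the 2-copy gapped hypothesis at the level of a single pair. Fix a pair $(i_0, j_0)$ and let $K$ denote the restriction of $\hat G_\mu^{(i_0 j_0)}$ to the 15-dimensional subspace of non-identity diagonal Pauli coefficients on that pair; the analogous restriction of $\hat G_{U(4)}^{(i_0 j_0)}$ is the rank-one matrix $H$ with all entries $1/15$. By the analysis in \secref{GforK2} the matrix $K$ is doubly stochastic with $\vec 1$ as a $1$-eigenvector, and by \lemref{GEigenvalues} the 2-copy gapped property forces all other eigenvalues of $K$ to have absolute value strictly less than one. Hence $K^m \to H$ entrywise, so I can choose a constant $m = m(\mu)$, independent of $n$, for which $K^m(x,y) \geq 1/30$ uniformly.

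With $m$ in hand, construct the flow as follows. For each transition $x \to y$ of $P_{U(4)}$ (whose states differ only on some pair $(i_0, j_0)$), let $\mathcal P_{xy}$ be the set of length-$m$ paths $\gamma = (x = z_0, z_1, \ldots, z_m = y)$ whose intermediate states agree with $x$ outside $(i_0, j_0)$, and set
\[
f(\gamma) \;=\; P_{U(4)}(x,y)\,\frac{K(z_0, z_1)\cdots K(z_{m-1}, z_m)}{K^m(x,y)}.
\]
This is well-defined by the lower bound on $K^m$ and sums to $P_{U(4)}(x,y)$ over $\mathcal P_{xy}$. To estimate the comparison constant $A$ in \eq{ComparisonFlowsA}, fix any edge $(a,b)$ of $P_\mu$. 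The only paths $\gamma$ in the flow containing $(a,b)$ are those that stay in the $(i_0, j_0)$-sector with $(a,b)$ at some position $k$; isolating $k$ and summing over endpoints $x, y$ telescopes via $\sum_x K^{k-1}(x,a) = \sum_y K^{m-k}(b,y) = 1$ (double stochasticity of $K$), and is controlled by the uniform lower bound $K^m(x,y) \geq 1/30$ in the denominator. After the factors of $1/(n(n-1))$ in $P_{U(4)}(x,y)$ and in $P_\mu(a,b) \geq \frac{1}{n(n-1)} K(a,b)$ cancel, this yields $A = O(m^2)$, constant in $n$, and the theorem gives $\Delta_{P_\mu} \geq \Delta_{U(4)}/A = \Omega(\Delta_{U(4)})$.

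The non-reversible case is, I expect, the main obstacle. When $\mu$ is not invariant under Hermitian conjugation, $P_\mu$ need not even have real spectrum, so the target becomes $\Delta_{P_\mu P_\mu^*}$. The same plan applies with $P_\mu$ replaced by $P_\mu P_\mu^*$, $K$ by $KK^T$, and $\hat P$ chosen as $P_{U(4)}^2$ (whose gap is $\Theta(\Delta_{U(4)})$ since $P_{U(4)}$ is symmetric). Here one has to verify separately that $KK^T$ inherits from $K$ the spectral property that $1$ is a simple eigenvalue and all others are strictly less than one in absolute value, which is not automatic for general doubly stochastic $K$ but follows from the 2-copy gapped hypothesis via an analysis of the top singular subspace of $K$; once this is established, the flow construction and comparison bound proceed identically.
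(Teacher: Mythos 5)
Your overall strategy is the paper's own: compare $P_\mu$ to the $U(4)$ chain via the flow comparison theorem (\thmref{ComparisonFlows}), routing each $P_{U(4)}$-transition along constant-length paths confined to a single pair of qubits, with the constant length guaranteed by the spectral gap of the single-pair chain. Your explicit construction of $m$ via $K^m(x,y)\ge 1/30$ and the weighting of paths by products of $K$-entries is a clean quantitative version of the paper's appeal to irreducibility of the 15-state pair chain.

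There is, however, a genuine gap in your congestion estimate, and it costs a factor of $n$. You assign each transition $x\to y$ of $P_{U(4)}$ to a \emph{single} pair $(i_0,j_0)$ and route all of its flow through that sector. This is fine when $x$ and $y$ differ at two sites (the pair is forced and $P_{U(4)}(x,y)=\Theta(1/n^2)$), but when they differ at only one site $i$ the transition is realised by any of up to $n-1$ pairs containing $i$, so $P_{U(4)}(x,y)=\Theta(r/n^2)$ with $r$ as large as $n-1$, i.e.\ $\Theta(1/n)$. Pushing all of this through one sector forces some edge $(a,b)$ of that sector to carry $\Omega(1/(n\,m\,15^m))=\Omega_m(1/n)$ of flow, while its capacity is $\pi(a)P_\mu(a,b)$ with $P_\mu(a,b)=\Theta(K(a,b)/n^2)$; the $1/(n(n-1))$ factors you claim cancel do not cancel for these transitions, and the comparison constant becomes $A=\Theta(m^2 n)$ rather than $O(m^2)$. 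That yields only $\Delta_P=\Omega(\Delta_{U(4)}/n)=\Omega(1/n^2)$, which would weaken the main mixing bound to $O(n^2(n+\log 1/\eps))$. The fix is exactly what the paper does: for single-site transitions, split the flow equally among all $r$ admissible pairings of the changing site, so each sector carries only $\Theta(1/n^2)$ and every edge has $A(a,b)=O(1)$. (Separately, your caution about the irreversible case is warranted: an eigenvalue gap for $K$ does not by itself give a singular-value gap, i.e.\ a spectral gap for $KK^T$; this does follow from the operator-norm form \eq{k-copy-gapped} of the 2-copy gapped condition, which is what the paper implicitly uses, and the paper instead builds $PP^*$-paths by pairing each $\hat{G}$-transition with a constant-length $\hat{G}^T$ loop.)
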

\begin{proof}
We will use the comparison method with flows (\thmref{ComparisonFlows}).  Firstly consider the case where $\mu$ is closed under Hermitian conjugation i.e.~$\hat{G}$ is symmetric.

We will compare $P$ to the $U(4)$ chain, which we call $P_{U(4)}$.  Recall that this chain chooses a pair at random and does nothing if the pair is $00$ and chooses a random state from $\{0,1,2,3\}^2 \backslash \{00\}$ otherwise.

To apply \thmref{ComparisonFlows}, we need to construct the flows between transitions in $P_{U(4)}$.  We will choose paths such that only one pair is modified throughout.  For example (with $n=4$), the transition $1000 \rightarrow 2000$ is allowed in $P_{U(4)}$.  To construct a path in $P$, we need to find allowed transitions between these two paths in $P$.  $\hat{G}$ may not include the transition $10 \rightarrow 20$ directly, however, $\hat{G}$ is irreducible on this subspace of just two pairs.  This means that a path exists and can be of maximum length $14$ if it has to cycle through all intermediate states (in fact, since $\hat{G}$ is symmetric the maximum path length is $8$; all that is important here is that it is constant).  For example, the transitions $10 \rightarrow 11 \rightarrow 20$ might be allowed.  Then we could choose the full path to be $1000 \rightarrow 1100 \rightarrow 2000$.  In this case we have chosen the path to involve transitions pairing sites 1 and 2.  However, we could equally well have chosen any pairing; we could pair the first site with any of the others.  We can choose 3 paths in this way.  For this example, the flow we want to choose will be all 3 of these paths equally weighted.  We now use this idea to construct flows between all transitions in $P_{U(4)}$ to prove the result.

Let $x \ne y \in \Omega$ and let $d(x, y)$ be the Hamming distance between the states ($d(x, y)$ gives the number of places at which $x$ and $y$ differ).  There are two cases where $P_{U(4)}(x, y) \ne 0$:
\begin{enumerate}
\item{$d(x, y) = 2$.  Here we must choose a unique pairing, specified by the two sites that differ.  Make all transitions in $P$ using this pair giving just one path.}
\item{$d(x, y) = 1$.  For this case, choose all possible pairings of the changing site that give allowed transitions in $P_{U(4)}$.  For each pairing, construct a path in $P$ modifying only this pair.  If the differing site is initially non-zero then there are $n-1$ such pairings; if the differing site is initially zero then there are $n-z(x)$ pairings where $z(x)$ is the number of zeroes in the state $x$.}
\end{enumerate}
All the above paths are of constant length since we have to (at most) cycle through all states of a pair.  We must now choose the weighting $f(\gamma_{xy})$ for each path such that
\be
\sum_{\mathcal{P}_{xy}} f(\gamma_{xy}) = P_{U(4)}(x, y)
\ee
where $\mathcal{P}_{xy}$ is the set of all paths from $x$ to $y$ constructed above.  We choose the weighting of each path to be uniform.  We just need to calculate the number of paths in $\mathcal{P}_{xy}$ to find $f$:
\begin{enumerate}
\item{$d(x, y) = 2$.  There is just one path so $f(\gamma_{xy}) = P_{U(4)}(x, y) = \Theta(1/n^2)$.}
\item{$d(x, y) = 1$.  If the differing site is initially non-zero then $P_{U(4)}(x, y) = \Theta(1/n)$ and there are $n-1$ paths so $f(\gamma_{xy}) = \frac{P_{U(4)}(x,y)}{n-1} = \Theta(1/n^2)$.  If the differing site is initially zero then $P_{U(4)}(x, y) = \Theta\left(\frac{n-z(x)}{n^2}\right)$ and there are $n-z(x)$ paths so $f(\gamma_{xy}) = \frac{P_{U(4)}(x, y)}{n-z(x)} = \Theta(1/n^2)$.}
\end{enumerate}
So for all paths, $f = \Theta(1/n^2)$.    We now just need to know how many times each edge $(a, b)$ in $P$ is used to calculate $A$:
\begin{equation}
A = \max_{a \ne b, P(a, b) \ne 0} A(a, b)
\end{equation}
where
\begin{equation}
A(a, b) = \frac{1}{P(a, b)} \sum_{x \ne y, \gamma_{xy} \in \mathcal{P}_{xy} : (a, b) \in \gamma_{xy}} f(\gamma_{xy}).
\end{equation}
We have cancelled the factors of $\pi(x)$ because the stationary distribution is uniform.  We have also ignored the lengths of the paths since they are all constant.  

To evaluate $A(a, b)$, we need to know how many paths pass through each edge $(a, b)$.  We again consider the two possibilities separately:
%.  For each possible pairing in $P$ that can give the transition from $a$ to $b$, there is a constant number of $x, y$ pairs with paths through it, since at least one of these sites is changing.  For any $x, y$ there is at most one path $\gamma_{xy}$ using each transition pair.  We can now calculate $A$ (\eq{ComparisonFlowsA}).  We need to evaluate:
%Now we evaluate $A(a, b)$ for all pairs with $P(a, b) \ne 0$:
\begin{enumerate}
\item{$d(a, b) = 2$.  Suppose $a$ and $b$ differ at sites $i$ and $j$.  Firstly, we need to count how many transitions from $x$ to $y$ in $P_{U(4)}$ could use this edge, and then how many paths for each transition actually use the edge. 

To find which $x$ and $y$ could use the edge, note that $x$ and $y$ must differ at sites $i$, $j$ or both.  Furthermore, the values at the sites other than $i$ and $j$ must be the same as for $a$ (and therefore $b$).  There is a constant number of $x, y$ pairs that satisfy this condition.  Now, for each $x, y$ pair satisfying this, paths that use this edge must use the pairing $i, j$ for all transitions.  Since in the paths we have chosen above there is a unique path from $x$ to $y$ for each pairing, there is at most one path for each $x, y$ pair that uses edge $a, b$.

For $d(a, b) = 2$, $P(a, b) = \Theta(1/n^2)$ so $A(a, b)$ is a constant for this case.}
\item{$d(a, b) = 1$.  Let there be $r$ pairings that give allowed transitions in $P$ between $a$ and $b$.  As above, each pairing gives a constant number of paths.  So the numerator is $\Theta(r/n^2)$.  Further, $P(a, b) = \Theta(r/n^2)$.  So again $A(a, b)$ is constant.}
\end{enumerate}
Combining, $A$ is a constant so the result is proven for the case $\hat{G}$ is symmetric.

We now turn to the irreversible case.  We now need to bound the gap of $P P^* = P P^T$.  This chain selects two (possibly overlapping) pairs at random and applies $\hat{G}$ to one of them and $\hat{G}^T$ to the other.  We can use the above exactly by choosing $\hat{G}$ to perform the transitions above and $\hat{G}^T$ to just loop the states back to themselves.  By aperiodicity (the greatest common divisor of loop lengths is $1$), we can always find constant length paths that do this.
\end{proof}

Now we need to know the gap of the $U(4)$ chain.  We can, by a simple application of the comparison theorem, show it is $\Omega(1/n^2)$.  However, in the second half of this paper we show it is $\Theta(1/n)$.  This gives us (using \thmref{MixingTimeGap}):
\begin{corollary}
\label{cor:SecondMomentsMixing}
The Markov chain $P$ has mixing time $O(n(n+ \log 1/\eps))$ and 2-norm mixing time $O(n\log 1/\eps)$.
\end{corollary}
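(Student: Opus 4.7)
The plan is to reduce \corref{SecondMomentsMixing} to the gap bound $\Delta_{U(4)}=\Theta(1/n)$ that the second half of the paper will establish, and then apply the generic Markov chain mixing time estimates of \secref{MarkovChain}. The whole argument is a mechanical three-step packaging.

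First, I would invoke \thmref{GapGeneralUniversal} to transfer the $U(4)$ gap bound to the chain $P$ associated with any 2-copy gapped distribution: if $P$ is symmetric then $\Delta_P=\Omega(1/n)$, and otherwise $\Delta_{PP^*}=\Omega(1/n)$. To handle both cases uniformly I would use $\Delta_{PP^*}$ throughout (for symmetric $P$ this is just $\Theta(\Delta_P)$), giving a gap of $\Omega(1/n)$ in either case. Next, since we have removed the fixed identity state $\sigma_0\otimes\sigma_0$, the induced chain lives on the $4^n-1$ remaining non-identity Pauli strings, on which the stationary distribution is uniform. Hence $\pi_* = 1/(4^n-1)$ and $\log(1/\pi_*)=O(n)$.

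Finally, I would plug these into the standard bounds from \secref{MarkovChain}. For the 1-norm, the irreversible version of \thmref{MixingTimeGap}, $\tau(\eps)\le (2/\Delta_{PP^*})\log(1/(\pi_*\eps))$, gives
\[
\tau(\eps) = O(n)\cdot\bigl(O(n)+\log(1/\eps)\bigr) = O\bigl(n(n+\log 1/\eps)\bigr).
\]
For the 2-norm, \eqref{eq:2normMixing} immediately yields $\tau_2(\eps)\le (2/\Delta_{PP^*})\log(1/\eps) = O(n\log 1/\eps)$, matching the claim.

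The main obstacle is not this corollary itself but the tight gap bound $\Delta_{U(4)}=\Omega(1/n)$ that feeds into step one. A naive application of the comparison theorem against, say, a chain that re-randomises the whole $n$-tuple in one step gives only $\Omega(1/n^2)$, which would cost an extra factor of $n$ in the mixing time and produce $O(n^2(n+\log 1/\eps))$; obtaining the correct $O(n(n+\log 1/\eps))$ requires the more delicate spectral/log-Sobolev analysis of the $U(4)$ chain that the paper defers to its second half.
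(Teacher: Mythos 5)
Your proposal matches the paper's own derivation: the paper likewise combines \thmref{GapGeneralUniversal} with the deferred bound $\Delta_{U(4)}=\Theta(1/n)$ from \secref{U4Convergence}, and then applies \thmref{MixingTimeGap} (and its irreversible variant together with \eq{2normMixing}) with $\pi_*=1/(4^n-1)$ to obtain both mixing time bounds. Your closing remark correctly identifies that the only substantive content lies in the tight $U(4)$ gap analysis, which is exactly where the paper places it.
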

We conjecture that the mixing time (as well as \lemref{CoefficientsDecay2General}) can be tightened to $\Theta(n\log\frac{n}{\eps})$, which is asymptotically the same as for the $U(4)$ case:
\begin{conjecture}
\label{conj:Mixing}
The second moments for the case of general 2-copy gapped distributions
have 1-norm mixing time $\Theta(n\log\frac{n}{\eps})$. 
\end{conjecture}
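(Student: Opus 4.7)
The plan is to prove the conjecture by extending the log-Sobolev constant bound from the $U(4)$ case to general 2-copy gapped distributions. The tight $O(n\log(n/\eps))$ mixing time established for the $U(4)$ case (second part of \lemref{MainMixing}) ultimately rests on a log-Sobolev bound $\rho_{P_{U(4)}} = \Omega(1/n)$, together with the sharpened log-Sobolev mixing-time estimate $\tau(\eps) = O(\rho^{-1}(\log\log(1/\pi_*) + \log(1/\eps)))$; for $\rho = \Omega(1/n)$ and $\log\log(1/\pi_*) = O(\log n)$ this yields precisely $O(n\log(n/\eps))$.

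To extend this to an arbitrary 2-copy gapped $\mu$, invoke the log-Sobolev comparison theorem mentioned in \secref{MarkovChain}: $\rho_P \geq \rho_{P_{U(4)}}/A$. The flow construction used in the proof of \thmref{GapGeneralUniversal} (constant-length paths modifying a single pair, with uniform weights giving $f = \Theta(1/n^2)$) already proves $A = O(1)$ for the spectral-gap comparison, and the same construction transfers verbatim to the log-Sobolev comparison. This gives $\rho_P = \Omega(1/n)$ for every reversible 2-copy gapped $P$, whence $\tau(\eps) = O(n\log(n/\eps))$. The matching lower bound $\Omega(n\log(n/\eps))$ is a coupon-collector argument: the Markov chain cannot be $\eps$-close to its uniform stationary distribution until every qubit has been acted upon by at least one gate, and the expected time for this is $\Omega(n\log(n/\eps))$.

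The main obstacle is the non-reversible case. When $\mu$ is not closed under Hermitian conjugation, $\rho_P$ itself is not directly defined; one works instead with $\rho_{PP^*}$. The irreversible flow construction of \thmref{GapGeneralUniversal} (using $\hat{G}^T$ to supply self-loops) should still yield $\rho_{PP^*} = \Omega(1/n)$ via the same comparison, but converting this into a mixing-time bound on $P$ itself typically passes through $\tau(\eps) \leq 2\rho_{PP^*}^{-1}\log(1/(\pi_*\eps))$, introducing an extra factor of $\log(1/\pi_*) = O(n)$ and spoiling the tight scaling. Overcoming this requires either a sharper passage from $\rho_{PP^*}$ to $\tau_P$ in this setting, or a direct modified-log-Sobolev analysis of the non-self-adjoint chain $P$.

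The parallel tightening of \lemref{CoefficientsDecay2General} (decay of the $p_1 \ne p_2$ coefficients from $O(n(n+\log 1/\eps))$ down to $O(n\log(n/\eps))$) alluded to in the preceding paragraph is a related but separate task: the current proof loses a factor of $n$ by bounding individual coefficients crudely by $1$ and invoking a full $4^n$ binomial sum. Replacing this with a coupon-collector amortisation --- each $(p_1,p_2)$ with $d$ differing sites decays exponentially once any of those sites is touched, and using $\sum |\gamma_0|^2 \leq 1$ together with Cauchy-Schwarz stratified by $d$ --- should recover the conjectured $O(n\log(n/\eps))$.
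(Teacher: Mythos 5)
The statement you are proving is stated in the paper only as a conjecture: the authors do not prove it, remarking merely that ``an extension of our techniques in \secref{U4Convergence} could be used to prove this.'' Your proposal is essentially that programme, so it must be judged on its own merits --- and, as you partly concede yourself, it does not close.

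The central gap is your premise that the tight $U(4)$ analysis ``ultimately rests on a log-Sobolev bound $\rho_{P_{U(4)}}=\Omega(1/n)$'' for the full chain on $4^n-1$ states. No such bound is established in the paper. The tight $U(4)$ result is obtained via the zero-chain phase decomposition, and a log-Sobolev constant is bounded only for the chain restricted to $x\ge \theta n/2$ (\lemref{MixingPhase3}); the restriction is essential there, since the comparison constant is $\max_{a\ge m}5(n-1)/8a$, which is $O(1)$ only because $a=\Omega(n)$, and degrades to $O(n)$ over the full range, yielding only $\rho=\Omega(1/n^2)$. The same obstruction defeats a direct comparison of the full chain $P_{U(4)}$ with the product chain: at a state with $O(1)$ non-zero entries, flipping a zero site requires pairing it with one of the few non-zero sites, so $P_{U(4)}(x,y)=\Theta(1/n^2)$ against $\hat{P}(x,y)=\Theta(1/n)$, forcing $A=\Theta(n)$. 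So $\rho_{P_{U(4)}}=\Omega(1/n)$ is itself a non-trivial unproven claim, and your step $\rho_P\ge\rho_{P_{U(4)}}/A$ transfers an input you do not have. A secondary point: the log-Sobolev mixing bound actually quoted in the paper \peq{SobolevMixingTime} contains a $\Delta^{-1}\log(d/\eps)$ term with $d=4^n-1$, contributing $O(n(n+\log 1/\eps))$; the sharper form $\tau(\eps)=O(\rho^{-1}(\log\log(1/\pi_*)+\log(1/\eps)))$ you invoke is a different statement that would need to be justified separately. The two further obstacles you name --- the irreversible case, where passing from $\rho_{PP^*}$ to a mixing time for $P$ costs an extra $\log(1/\pi_*)=O(n)$, and the tightening of \lemref{CoefficientsDecay2General} --- are correctly identified but left open. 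As it stands the proposal is a plausible research plan, consistent with the authors' own suggestion, rather than a proof.
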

It seems likely that an extension of our techniques in \secref{U4Convergence} could be used to prove this.

Combining the convergence results we have proved our general result \lemref{MainMixing}:
\begin{proof}[of \lemref{MainMixing}]
Combining \corref{SecondMomentsMixing} (for the $\gamma(p, p)$ terms) and \lemref{CoefficientsDecay2General} (for the $\gamma(p_1, p_2)$, $p_1 \ne p_2$ terms) proves the result.
\end{proof}

We have now shown that the first and second moments of random circuits converge quickly.  For the remainder of the paper we prove the tight bound for the gap and mixing time of the $U(4)$ case and show how mixing time bounds relate to the closeness of the 2-design to an exact design.  Only for the $U(4)$ case is the matrix $\hat{G}$ a projector so in this sense the $U(4)$ random circuit is the most fundamental.  While we expect the above mixing time bound is not tight, we can prove a tight mixing time result for the $U(4)$ case.  However, using our definition of an approximate $k$-design, the gap rather than the mixing time governs the degree of approximation.

\section{Tight Analysis for the \texorpdfstring{$U(4)$}{U(4)} Case}
\label{sec:U4Convergence}

We have already found tight bounds for the first moments in \lemref{CoefficientDecayGeneral}: just set $\Delta = 1$.

\subsection{Second Moments Convergence}

We need to prove a result analogous to \lemref{CoefficientsDecay2General} for the terms $\sigma_{p_1} \otimes \sigma_{p_2}$ where $p_1 \ne p_2$.  We already have a tight bound for the 2-norm decay, by setting $\Delta = 1$ into \lemref{CoefficientsDecay2General}.  We tighten the 1-norm bound:
\begin{lemma}
\label{lem:CoefficientsDecay2}
After $O(n \log \frac{n}{\eps})$ steps
\begin{equation}
\label{eq:TermsDecay}
\sum_{p_1 \ne p_2} \Expect_W | \gamma_W(p_1, p_2) | \le \eps
\end{equation}
\end{lemma}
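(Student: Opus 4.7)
The plan is to follow the same template as \lemref{CoefficientsDecay2General}, but to exploit the fact that for $U(4)$ we have $\Delta=1$, so that applying a random gate to a pair of sites containing any ``differing'' position (a site where $p_1$ and $p_2$ disagree) annihilates the expectation of $\gamma_V(p_1,p_2)$ exactly rather than merely contracting it. For a pair $(p_1,p_2)$ of Hamming distance $d$, the probability that the randomly chosen site-pair $(i,j)$ intersects the $d$-element differing set is
\[
p_d = 1 - \frac{(n-d)(n-d-1)}{n(n-1)} = \frac{d(2n-d-1)}{n(n-1)},
\]
which is the \emph{exact} rate (not just the lower bound $d/n$ used in \lemref{CoefficientsDecay2General}). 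Iterating the one-step contraction for $U(4)$ yields
\[
\Expect_W|\gamma_{W_t}(p_1,p_2)| \le (1-p_d)^t |\gamma_0(p_1,p_2)| \le e^{-tp_d}|\gamma_0(p_1,p_2)|.
\]

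I would then sum over all off-diagonal pairs, stratifying by Hamming distance. The naive approach with $|\gamma_0|\le 1$ leads to the bound $4^n[(1+3e^{-t/n})^n-1]$ from \lemref{CoefficientsDecay2General}, which costs $\Theta(n^2)$ steps and misses the target. The improvement comes from replacing $|\gamma_0|\le 1$ by a sharper estimate that exploits the state constraint $\sum_{p_1,p_2}\gamma_0(p_1,p_2)^2 \le \tr\rho^2\le 1$: Cauchy--Schwarz at fixed Hamming distance $d$ gives
\[
\sum_{d(p_1,p_2)=d}|\gamma_0(p_1,p_2)| \le \sqrt{N_d\cdot B_d} \le \sqrt{N_d},\qquad N_d = \binom{n}{d}12^d 4^{n-d},
\]
with $\sum_d B_d \le 1$. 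Combining this with the tighter $p_d$ (in particular $p_d \approx 2d/n$ for the dominant small-$d$ terms) and re-applying Cauchy--Schwarz across the sum in $d$, one obtains an estimate of the form $\poly(n)\cdot e^{-\Omega(t/n)}$, which is $\le\eps$ once $t=O(n\log(n/\eps))$.

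The main obstacle is balancing the exponentially large counts $N_d$ against the per-term decay $e^{-tp_d}$, which in particular requires handling the slowest-decaying ``$d=1$'' stratum (where $\sim n\cdot 4^n$ pairs decay only like $e^{-2t/n}$) without paying a full factor of $2^n$ or $4^n$ in the final bound. I would address this by splitting the range of $d$ into a ``small'' regime, where the Cauchy--Schwarz refinement above together with $\sum_d B_d \le 1$ keeps the prefactor polynomial in $n$, and a ``large'' regime, where $p_d$ is bounded below by a constant so that $e^{-tp_d}$ is already exponentially small in $t$ and a crude count of $N_d$ is harmless. Combining the two regimes, and choosing $t = Cn\log(n/\eps)$ with a sufficiently large constant $C$, gives the claimed bound.
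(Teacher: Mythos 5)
Your route is genuinely different from the paper's, and it has a real gap at exactly the point you flag as ``the main obstacle'': the device you propose for getting past it does not work. Your scheme decays each Hamming-distance stratum of the \emph{initial} coefficients and converts $\ell_2$ to $\ell_1$ at time $0$ by Cauchy--Schwarz. For $d=1$ the only decay available is $e^{-tp_1}=e^{-2t/n}$, which for $t=Cn\log(n/\eps)$ is $(\eps/n)^{2C}$ --- merely polynomial in $\eps/n$ --- while $\sqrt{N_1B_1}$ is of order $\sqrt{n}\,2^n$ whenever $B_1=\Theta(1)$. The only constraints you invoke ($\sum_d B_d\le 1$ and the counts $N_d$) are consistent with $B_1=\Theta(1)$, so within your framework the $d=1$ contribution is stuck at $\Omega(2^n e^{-2t/n})$; re-applying Cauchy--Schwarz across the sum in $d$ just reshuffles the same two inequalities and cannot remove the $2^n$. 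In other words, your ``small-$d$ regime'' does not in fact acquire a polynomial prefactor, and the argument as proposed only recovers $t=O(n(n+\log 1/\eps))$, i.e.\ \lemref{CoefficientsDecay2General} with $\Delta=1$, not the claimed $O(n\log\frac{n}{\eps})$. (A smaller point: the one-step contraction should be stated for the group $\sum_{D(p_1,p_2)=D}|\gamma|$ at fixed differing set $D$, since a gate landing off $D$ averages the coefficients within that group rather than multiplying each one by a scalar; the grouped version is what actually iterates.)

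The idea you are missing is to postpone the $\ell_2\to\ell_1$ conversion until \emph{after} the evolution. The paper conditions on the number $H$ of distinct qubits that have been touched by some gate (a coupon-collector variable), observes that once qubit $i$ is hit every off-diagonal coefficient with $p_1,p_2$ differing at $i$ is zero for all later times, and then applies Cauchy--Schwarz to $\gamma_W$ itself using $\tr\rho_W^2\le 1$: conditioned on $H=h$, the off-diagonal support of $\gamma_W$ is confined to pairs that agree on all $h$ hit positions, and in particular is \emph{empty} on the event $H=n$, whose complement has probability at most $n e^{-2t/n}=O(\eps)$ after $t=n\log(n/\eps)$ steps. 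It is this simultaneous annihilation of all strata by a single coupon-collector event, combined with the tail bound $\Pr(H\le h)\le\binom{n}{h}(h/n)^t$ against the (now much smaller) surviving support, that buys the missing factor of $n$; a per-stratum union bound against the initial $\ell_1$ mass cannot. If you want to stay with a time-$0$ decomposition instead, you must import extra structure of the initial state beyond $\tr\rho^2\le1$ --- e.g.\ for $\rho=\proj{\psi}\ot\proj{\psi}$ the product form $\gamma_0(p_1,p_2)=\gamma(p_1)\gamma(p_2)$ caps the distance-$d$ initial mass at $\binom{n}{d}3^d$ (the degree of the distance-$d$ graph on $\{0,1,2,3\}^n$), after which your stratified sum does close --- but no such input appears in the proposal.
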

\begin{proof}
We will split the random circuits up into classes depending on how many qubits have been hit.  Let $H$ be the random variable giving the number of different qubits that have been hit.  We can work out the distribution of $H$ and bound the sum of $| \gamma_W(p_1, p_2) |$ for each outcome.

Firstly we have, after $t$ steps,
\begin{equation*}
\Pr(H \le h) \le {n \choose h} \left(\frac{h(h-1)}{n(n-1)}\right)^t \le {n \choose h} (h/n)^t.
\end{equation*}
Now, for each qubit hit, each coefficient which has $p_1$ and $p_2$ differing in this place is set to zero.  So after $h$ have been hit, there are only (at most) $16^{(n-h)}$ terms in the sum in \eq{TermsDecay}.  As before, the state is a physical state, $\tr \rho^2 \le 1$ so $\sum_{p_1 p_2} \gamma^2(p_1, p_2) \le 1$ so $\sum_{p_1 p_2} | \gamma(p_1, p_2) | \le \sqrt{N}$ if there are at most $N$ non-zero terms in the sum.  Therefore we have, after $t$ steps,
\begin{align*}
\sum_{p_1 \ne p_2} \Expect_W | \gamma_W(p_1, p_2) | &\le \sum_{h=1}^{n-1} \Pr(H = h) 16^{(n-h)/2} \\
&\le \sum_{h=1}^{n-1} \Pr(H \le h) 4^{(n-h)} \\
&\le \sum_{h=1}^{n-1} {n \choose h} (h/n)^t 4^{(n-h)} \\
&= \sum_{h=1}^{n-1} {n \choose h} (1-h/n)^t 4^{h} \qquad h \rightarrow n-h  \\
&\le \sum_{h=1}^{n-1} {n \choose h} \exp(-ht/n) 4^{h}.
\end{align*}
Now, let $t = n \ln \frac{n}{\eps}$:
\begin{align*}
\sum_{p_1 \ne p_2} \Expect_W | \gamma_W(p_1, p_2) | &\le \sum_{h=1}^{n-1} {n \choose h} \left(\frac{4 \eps}{n} \right)^h \\
&= \left( 1 + \frac{4\eps}{n} \right)^n -1 -\left( \frac{4\eps}{n} \right)^n = O(\eps)
\end{align*}
where the last line follows from the binomial theorem.
\end{proof}
This, combined with the mixing time result we prove below, completes the proof that the second moments of the random circuit converge in time $O(n \log \frac{n}{\eps})$.

\subsection{Markov Chain of Coefficients}

The Markov chain acting on the coefficients is reducible because the state $\{ 0 \}^n$ is isolated.  However, if we remove it then the chain becomes irreducible.  The presence of self loops implies aperiodicity therefore the chain is ergodic.  We have already seen that the chain converges to the Haar uniform distribution (in \secref{RandomCircuits}) therefore the stationary state is the uniform state $\pi(x) = 1/(4^n-1)$.  Further, since the chain is symmetric and has uniform stationary distribution, the chain satisfies detailed balance (\eq{DetailedBalance}) so is reversible.  We now turn to obtaining bounds on the mixing time of this chain.

We want to show that the full chain converges to stationarity in time $\Theta(n \log \frac{n}{\eps})$.  This implies (see later) that the gap is $\Theta(1/n)$.  To prove this, we will construct another chain called the zero chain.  This is the chain that counts the number of zeroes in the state.  Since it is the zeroes that slow down the mixing, this chain will accurately describe the mixing time of the full chain.
\begin{lemma}
\label{lem:ZeroChainTransitionMatrix}
The zero chain has transition matrix P on state space (we count non-zero positions) $\Omega = \{1,2, \ldots, n\}$.
\begin{equation}
P(x, y) =
\begin{cases}
  1 - \frac{2x(3n-2x-1)}{5n(n-1)}  & y = x \\
  \frac{2x(x-1)}{5n(n-1)}  & y = x-1 \\
  \frac{6x(n-x)}{5n(n-1)}  & y = x+1 \\
  0 & \rm{otherwise}
\end{cases}
\end{equation}
for $1 \le x, y \le n$.
\end{lemma}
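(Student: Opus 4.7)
The plan is to derive the formula by direct casework on how the random pair of qubits chosen at one step of the full chain interacts with the ``non-zero'' Pauli labels, then show that the induced process on $x = \#\{i : p_i \neq 0\}$ depends only on $x$ (so it is indeed a Markov chain, i.e.\ lumpable) and read off the transition probabilities.

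First I would check lumpability. The full chain on $\{0,1,2,3\}^n \setminus \{0^n\}$ acts by picking an unordered pair $(i,j)$ uniformly from $\binom{n}{2}$ and then, if the current pair of entries is $(0,0)$, doing nothing, else replacing them by a uniformly random element of $\{0,1,2,3\}^2 \setminus \{(0,0)\}$. Both the choice of pair and the action of $\hat{G}$ on the pair are symmetric under any permutation of the sites that preserves the zero pattern, so the distribution of the next value of $x$ depends only on the current $x$. Hence the induced process is Markov on $\{1,\dots,n\}$.

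Next I would compute the three allowed transitions by conditioning on the zero profile of the chosen pair. With $x$ non-zero sites:
\begin{itemize}
\item Both sites of the pair are zero, with probability $\binom{n-x}{2}/\binom{n}{2}$. The pair is $(0,0)$, $\hat{G}$ leaves it alone, so $x$ is unchanged.
\item Exactly one site of the pair is non-zero, with probability $\tfrac{2x(n-x)}{n(n-1)}$. The pair is replaced by a uniformly chosen element of the $15$ non-zero two-qubit labels. Of these $15$, there are $9$ with both entries non-zero and $6$ with exactly one entry non-zero, so with probability $9/15 = 3/5$ we gain a non-zero (so $x \to x+1$) and with probability $6/15 = 2/5$ the count stays at $x$.
\item Both sites of the pair are non-zero, with probability $\tfrac{x(x-1)}{n(n-1)}$. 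Again the replacement is uniform over the $15$ non-zero labels, so with probability $9/15$ we stay at $x$ and with probability $6/15$ we lose a non-zero (so $x \to x-1$).
\end{itemize}
Collecting the upward and downward contributions gives exactly
$$P(x,x+1) = \tfrac{2x(n-x)}{n(n-1)}\cdot\tfrac{3}{5} = \tfrac{6x(n-x)}{5n(n-1)}, \qquad P(x,x-1) = \tfrac{x(x-1)}{n(n-1)}\cdot\tfrac{2}{5} = \tfrac{2x(x-1)}{5n(n-1)}.$$

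Finally $P(x,x)$ follows from $P(x,x) = 1 - P(x,x+1) - P(x,x-1)$; a one-line algebraic simplification of $2x(x-1) + 6x(n-x) = 2x(3n - 2x - 1)$ yields the claimed expression. There is no real obstacle here: once lumpability is observed, the proof is just a careful bookkeeping of the $9$ versus $6$ split among the $15$ non-identity two-qubit Pauli labels. The one step worth being careful about is lumpability itself, since in general coarse-grainings of Markov chains are not Markov; but the symmetric action of $\hat{G}$ (together with the uniform, label-independent choice of pair) makes this immediate.
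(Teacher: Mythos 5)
Your proposal is correct and follows essentially the same route as the paper's proof: the same conditioning on the zero profile of the chosen pair, the same $9$ versus $6$ split among the $15$ non-identity two-qubit labels, and the same determination of $P(x,x)$ by normalisation. The explicit lumpability check you add is a sound (and welcome) piece of extra care, but it does not change the substance of the argument.
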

\begin{proof}
Suppose there are $n-x$ zeroes (so there are $x$ non-zeroes).  Then the only way the number of zeroes can decrease (i.e.~for $x$ to increase) is if a non-zero item is paired with a zero item and one of the $9$ (out of $15$) new states is chosen with no zeroes.  The probability of choosing such a pair is $\frac{2x(n-x)}{n(n-1)}$ so the overall probability is $\frac{9}{15} \frac{2x(n-x)}{n(n-1)}$.

The number of zeroes can increase only if a pair of non-zero items is chosen and one of the $6$ states is chosen with one zero.  The probability of this occurring is $\frac{6}{15} \frac{x(x-1)}{n(n-1)}$.

The probability of the number of zeroes remaining unchanged is simply calculated by requiring the probabilities to sum to $1$.
\end{proof}
We see that the zero chain is a one-dimensional random walk on the line.  It is a lazy random walk because the probability of moving at each step is $<1$.  However, as the number of zeroes decreases, the probability of moving increases monotonically:
\begin{equation}
1-P(x,x) = \frac{2x(3n-2x-1)}{5n(n-1)} \ge 2x/5n <1.
\end{equation}

\begin{lemma}
\label{lem:ZeroStatDistrib}
The stationary distribution of the zero chain is
\begin{equation}
\label{eq:ZeroStatDistrib}
\pi_0(x) = \frac{3^x {n \choose x}}{4^n-1}.
\end{equation}
\end{lemma}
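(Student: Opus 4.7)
The plan is to verify the stated stationary distribution via detailed balance. Since the zero chain is a birth-death chain on $\{1,\ldots,n\}$ (it only moves between adjacent states or stays put), any stationary distribution is pinned down by the detailed balance relation $\pi_0(x)P(x,x+1) = \pi_0(x+1)P(x+1,x)$ together with normalization. This turns the problem into a short computation rather than requiring a general solution of the stationary equation.

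Concretely, I would proceed as follows. First, substitute the claimed $\pi_0(x) = 3^x \binom{n}{x}/(4^n-1)$ together with the transition probabilities from \lemref{ZeroChainTransitionMatrix} into the detailed balance equation. On the left we get a factor $3^x\binom{n}{x} \cdot \frac{6x(n-x)}{5n(n-1)}$, and on the right a factor $3^{x+1}\binom{n}{x+1} \cdot \frac{2(x+1)x}{5n(n-1)}$. Using the elementary identity $\binom{n}{x+1}(x+1) = \binom{n}{x}(n-x)$, the two sides match exactly, so detailed balance holds (and as a bonus this confirms the chain is reversible, which we already expected from the symmetry of the full chain).

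Second, I would verify normalization using the binomial theorem:
\begin{equation*}
\sum_{x=1}^n 3^x \binom{n}{x} = (1+3)^n - 1 = 4^n - 1,
\end{equation*}
so indeed $\sum_{x=1}^n \pi_0(x) = 1$. Since the zero chain is irreducible on $\{1,\ldots,n\}$ (from any $x$ we can reach $x\pm 1$ with positive probability), the stationary distribution is unique, and this completes the proof.

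As a conceptual sanity check (which I would mention but not base the proof on), the formula also follows directly by lumpability: the full Markov chain on $\{0,1,2,3\}^n\setminus\{0^n\}$ has uniform stationary distribution $1/(4^n-1)$, the zero chain is obtained by grouping states according to their number of non-zero coordinates, and by the permutation symmetry of both the pairs of qubits chosen and the Pauli labels this lumping is Markovian; there are exactly $\binom{n}{x}3^x$ states with $x$ non-zero coordinates, which reproduces the formula. There is no real obstacle here — the calculation is routine — so I expect the entire argument to fit in a few lines.
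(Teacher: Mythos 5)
Your proof is correct and matches the paper's: the detailed-balance check is just a streamlined way of carrying out the paper's first suggested verification (multiplying the claimed distribution by the transition matrix), and your ``sanity check'' via lumping and counting the $\binom{n}{x}3^x$ states with $x$ non-zero coordinates is exactly the paper's second, alternative argument. The algebra (the identity $\binom{n}{x+1}(x+1)=\binom{n}{x}(n-x)$ and the normalisation $\sum_{x=1}^n 3^x\binom{n}{x}=4^n-1$) checks out.
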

\begin{proof}
This can be proven by multiplying the transition matrix in \lemref{ZeroChainTransitionMatrix} by the state \eq{ZeroStatDistrib}.  Alternatively, it can be proven by counting the number of states with $n-x$ zeroes.  There are ${n \choose x}$ ways of choosing which sites to make non-zero and each non-zero site can be one of three possibilities: 1, 2 or 3.  The total number of states is $4^n-1$, which gives the result.
\end{proof}

Below we will prove the following theorem:
\begin{theorem}
\label{thm:ZeroChainMixing}
The zero chain mixes in time $\Theta(n \log \frac{n}{\eps})$.
\end{theorem}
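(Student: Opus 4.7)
The plan is to prove the upper and lower bounds on the mixing time separately. The upper bound of $O(n \log (n/\eps))$ will follow from the log-Sobolev machinery (\lemref{ProductChain} and \eq{SobolevMixingTime}), while the lower bound of $\Omega(n \log(n/\eps))$ will come from combining a spectral gap upper bound (giving $\Omega(n \log(1/\eps))$) with a coupon-collector argument (giving $\Omega(n \log n)$).

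For the upper bound, I would compare the zero chain with a simpler product chain on $\{0,1\}^n$ in which each coordinate is a two-state Markov chain with stationary distribution $(1/4, 3/4)$ (matching the probability of a coordinate being zero vs non-zero in the quantum chain's stationary distribution) and transition rate $\Theta(1/n)$ per step (since at each step the chain touches only two out of $n$ coordinates). A single such two-state chain has both $\Delta$ and $\rho$ equal to $\Theta(1)$. Since this $2^n$-state product chain only sees the number of non-zeros (which is all the zero chain tracks), \lemref{ProductChain} hands us a product chain with $\Delta, \rho = \Theta(1/n)$. I would then invoke \thmref{Comparison} (in the form already discussed for one-dimensional walks, \eq{ComparisonWalk}) to transfer these bounds to the actual zero chain; the bookkeeping is straightforward because the zero chain is a nearest-neighbour birth-death chain with $\pi(x) P(x,x+1) = \Theta(\pi(x+1) P(x+1,x))$ and the product chain's induced transitions also move the number of non-zeros by at most one per step. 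Plugging $\rho, \Delta = \Theta(1/n)$, $\pi_* = \Theta(1/4^n)$, and $d = n$ into \eq{SobolevMixingTime} yields
\[
\tau(\eps) = O\!\l(n \log\log(4^n) + n \log(n/\eps)\r) = O\!\l(n \log(n/\eps)\r).
\]

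For the lower bound, I would show two things. First, the gap is $O(1/n)$: the test function $f(x) = x - \bbE_\pi[x]$ is a natural candidate, and a direct computation using \lemref{ZeroChainTransitionMatrix} gives Dirichlet form $\sum_{x} \pi_0(x) P(x,x+1) (f(x+1) - f(x))^2 = O(1/n) \cdot \mathrm{Var}_{\pi_0}(f)$ because the per-step drift is $O(1/n)$. By the variational formula for $\Delta$, this forces $\Delta = O(1/n)$, and hence $\tau(\eps) \ge \Omega(n \log(1/\eps))$ for any initial distribution far from $\pi_0$ along $f$. Second, for the coupon-collector piece: choose a starting state with $k$ non-zeros for some fixed $k = \Theta(n)$ (e.g.~$k = n/2$). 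Any position that has not yet been selected into one of the chosen pairs retains its original $0$-or-nonzero status, so the joint distribution over positions remains far from $\pi_0$ until all but a handful of positions have been touched at least once. Standard coupon-collector tail bounds give that this requires $\Omega(n \log n)$ steps with high probability, which translates into a $\Omega(n \log n)$ contribution to the mixing time independent of $\eps$. Summing the two lower bounds yields $\Omega(n \log(n/\eps))$.

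The main obstacle I expect is the comparison step for the upper bound: while the product chain has the right stationary structure after marginalising to the number of non-zeros, the transition probabilities of the two chains differ in the dependence on the current value of $x$ (the zero chain's rate $\Theta(x(n-x)/n^2)$ for up-moves slows down near both endpoints, while the product chain's rate is $\Theta((n-x)/n)$). Making sure the ratio in \eq{ComparisonWalk} is $O(1)$ uniformly in $x$ will require choosing the product chain's single-coordinate transition rates carefully (possibly with position-dependent rates or by comparing to a slightly different birth-death chain such as the Ehrenfest urn), and verifying that the log-Sobolev constant of the modified single-coordinate chain is still $\Theta(1)$. The lower-bound coupon-collector step is also slightly delicate because we must translate "qubits not touched" into "distance from stationarity in total variation", which I would handle by exhibiting an explicit test event whose probability differs by $\Omega(1)$ from its stationary value while any untouched coordinate persists.
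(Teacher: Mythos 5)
Your lower bound is fine and is essentially what the paper does (it simply invokes the coupon-collector argument for $\Omega(n\log n)$; your additional spectral-gap test function $f(x)=x-\bbE_\pi[x]$ is a correct way to get $\Delta = O(1/n)$ and hence the $\Omega(n\log(1/\eps))$ piece). The problem is the upper bound, which is the real content of the theorem. The obstacle you flag at the end is not a bookkeeping issue that a cleverer choice of product chain can fix --- it is fatal to the single-shot log-Sobolev strategy. At $x=O(1)$ the zero chain has $P(x,x+1)=\frac{6x(n-x)}{5n(n-1)}=\Theta(x/n)$, i.e.\ $\Theta(1/n)$ at $x=1$, whereas the marginal of \emph{any} product chain whose coordinates are updated at rate $\Theta(1/n)$ has up-rate $\Theta((n-x)/n)=\Theta(1)$ there (a constant fraction of coordinates are waiting to flip from zero to non-zero). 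So the ratio in \eq{ComparisonWalk} is $\Theta(n)$ at $x=1$ no matter how you tune the single-coordinate chains, and making the single-coordinate rates depend on the global count destroys the product structure that \lemref{ProductChain} requires. The comparison therefore yields only $\rho,\Delta=\Omega(1/n^2)$ for the full chain, giving a mixing bound of order $n^2\log n$, not $n\log(n/\eps)$.

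Moreover, the quantity you are trying to establish is simply not true: the log-Sobolev constant of the \emph{unrestricted} zero chain is $o(1/n)$. Take $f=\mathbb{1}_{x=1}$ in \defref{LogSobolev} (with the standard Dirichlet form $\tfrac12\sum_{x,y}\pi(x)P(x,y)(f(x)-f(y))^2$): the numerator is $\pi(1)P(1,2)=\Theta(\pi(1)/n)$ while the entropy term is $\pi(1)\log(1/\pi(1))=\Theta(n\,\pi(1))$ since $\pi(1)=3n/(4^n-1)$, so $\rho=O(1/n^2)$. Hence no comparison argument can deliver $\rho=\Theta(1/n)$ for the full chain, and plugging the true $\rho$ into \eq{SobolevMixingTime} cannot give the tight bound. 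This is exactly why the paper's proof decomposes the walk into three phases: the region of small $x$ (where the holding probability is $1-\Theta(x/n)$ and the stationary mass is exponentially small) is traversed quickly with high probability by drift/hitting-time arguments (Lemmas \ref{lem:Phase1Completes} and \ref{lem:Phase2Completes}, using Chernoff bounds for the accelerated chain, geometric waiting times, and a gambler's-ruin estimate), and the log-Sobolev comparison to a product chain is applied only to the chain restricted to $[\theta n/2,n]$ (\lemref{MixingPhase3}), where $P(x,x+1)=\Theta(1)$ and the comparison constant is indeed $O(1)$. The errors from the three phases are then recombined via \lemref{DistanceToStat}. Some decomposition of this kind is unavoidable for your approach to go through.
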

The 2-norm mixing time follows easily:
\begin{theorem}
\label{thm:ZeroChainMixing2Norm}
The zero chain has 2-norm mixing time $O(n \log 1/\eps)$.
\end{theorem}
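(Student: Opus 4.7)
The plan is to read off the 2-norm mixing time directly from the eigenvalue gap, using the spectral estimate \eq{2normMixing} which asserts $\tau_2(\eps) \le 2\Delta_{PP^*}^{-1}\ln(1/\eps)$.

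First, I would observe that the zero chain is reversible with respect to $\pi_0$. Using \lemref{ZeroChainTransitionMatrix} and \lemref{ZeroStatDistrib}, detailed balance $\pi_0(x)P(x,x+1) = \pi_0(x+1)P(x+1,x)$ reduces to the routine identity $3^x\binom{n}{x}\cdot 6x(n-x) = 3^{x+1}\binom{n}{x+1}\cdot 2(x+1)x$, so $P = P^*$ and the spectrum of $P$ is real. Moreover, the self-loop probability $P(x,x) = 1 - \frac{2x(3n-2x-1)}{5n(n-1)}$ is bounded below by a positive constant on $\{1,\dots,n\}$, so the smallest eigenvalue is bounded away from $-1$ and $\Delta_{PP^*} \ge c\Delta$ for some absolute constant $c$, where $\Delta = 1-\lambda_2$ is the spectral gap.

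Second, I would invoke the gap bound $\Delta = \Omega(1/n)$ for the zero chain. This bound is produced as part of the proof of \thmref{ZeroChainMixing}: to achieve 1-norm mixing time $\Theta(n\log(n/\eps))$ via the log-Sobolev formula \eq{SobolevMixingTime}, the $\Delta^{-1}\log(d/\eps)$ term already forces $\Delta = \Omega(1/n)$, so we may reuse the gap estimate without additional work. (One could also derive $\Delta = \Omega(1/n)$ independently by a comparison argument against a birth-and-death chain that matches the drift of the zero chain, but this is unnecessary here since we can piggyback on \thmref{ZeroChainMixing}.)

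Substituting into \eq{2normMixing} yields
$$\tau_2(\eps) \le \frac{2}{\Delta_{PP^*}}\ln(1/\eps) = O(n\log(1/\eps))$$
as claimed. There is no serious obstacle in this argument: the substantive work of estimating the gap lives inside \thmref{ZeroChainMixing}, and the present theorem is obtained as an essentially free corollary, taking advantage of the fact that 2-norm convergence does not suffer the extra $\log(1/\pi_*) = \Theta(n)$ factor that appears in the 1-norm bound of \thmref{MixingTimeGap}.
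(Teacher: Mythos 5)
Your overall architecture matches the paper's: both routes reduce the claim to showing the zero chain's spectral gap is $\Omega(1/n)$ and then apply \eq{2normMixing}. Your reversibility check via detailed balance is correct, and the observation that the holding probabilities exceed $1/2$ (so $\lambda_{\min}\ge 0$ and the $PP^*$ gap is controlled by $1-\lambda_2$) is a legitimate way to dispose of the $\Delta_{PP^*}$ versus $\Delta$ issue.

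The genuine gap is in how you obtain $\Delta=\Omega(1/n)$. You claim the bound is ``already produced'' inside the proof of \thmref{ZeroChainMixing} because that proof invokes the log-Sobolev formula \eq{SobolevMixingTime}, whose $\Delta^{-1}\log(d/\eps)$ term would force $\Delta=\Omega(1/n)$. But \eq{SobolevMixingTime} is only ever applied to the \emph{restricted} Phase-3 chain $P'$ on $[\theta n/2, n]$ (\lemref{MixingPhase3}); Phases 1 and 2 are handled by Chernoff/hitting-time arguments with no spectral content. A gap bound for $P'$ does not automatically transfer to the full zero chain on $\{1,\dots,n\}$ --- the low-$x$ boundary region could in principle support a slow mode, and ruling that out needs an argument (decomposition, conductance, or the like). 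The paper closes exactly this hole differently: it applies the converse bound $\tau_1(\eps)\ge\frac{1-\Delta}{\Delta}\ln\frac{1}{2\eps}$ (Theorem 4.9 of \cite{MontenegroTetali06}) to the \emph{full} zero chain, combines it with the upper bound $\tau_1(\eps)\le Cn\ln\frac{n}{\eps}$ from \thmref{ZeroChainMixing}, and lets $\eps\to 0$ to conclude $\frac{1-\Delta}{\Delta}\le Cn$. Your parenthetical fallback (a direct comparison to a drift-matched birth-and-death chain) does not rescue the argument either: the natural comparison has its worst ratio at $x=1$, where $P(1,2)=\Theta(1/n)$, and yields only $\Delta=\Omega(1/n^2)$ --- precisely the weaker bound the paper mentions before \corref{SecondMomentsMixing}. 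So you need the mixing-time-to-gap converse step; with it inserted, the rest of your argument goes through.
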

\begin{proof}
We use a lower bound on the 1-norm mixing time to show that the gap of the zero chain is $\Omega(1/n)$ and then use the 2-norm mixing bound \eq{2normMixing}.  In \cite{MontenegroTetali06}, Theorem 4.9, they prove the lower bound:
\be
\tau_1(\eps) \ge \frac{1-\Delta}{\Delta} \ln \frac{1}{2\eps}
\ee
where $\Delta$ is the eigenvalue gap.  In \thmref{ZeroChainMixing}, we showed $\tau_1(\eps) \le Cn \ln \frac{n}{\eps}$ for some constant $C$.  Combining,
\be
\frac{1-\Delta}{\Delta} \ln \frac{1}{2\eps} \le Cn \ln \frac{n}{\eps}
\ee
for all $\eps>0$.  Divide by $\ln 1/\eps$ and take the limit $\eps \rightarrow 0$ to find
\be
\frac{1-\Delta}{\Delta} \le Cn
\ee
which implies the gap is $\Omega(1/n)$.  The 2-norm bound now follows from \eq{2normMixing}.
\end{proof}
Before proving \thmref{ZeroChainMixing}, we will show how the mixing time of the full chain follows from this.

\begin{corollary}
\label{cor:FullChainMixing}
The full chain mixes in time $\Theta(n \log \frac{n}{\eps})$.
\end{corollary}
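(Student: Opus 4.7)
The plan is to establish $\tau_{\text{full}}(\eps) = \Theta(n \log(n/\eps))$, matching the zero chain's mixing time. The lower bound is immediate: by \lemref{ZeroChainTransitionMatrix} the map $\Phi\colon x \mapsto \#\{i:x_i \neq 0\}$ is a Markov-chain homomorphism from the full chain $P$ onto the zero chain, and projections cannot increase total-variation distance to stationarity, so $\tau_{\text{full}}(\eps) \ge \tau_{\text{zero}}(\eps) = \Omega(n \log(n/\eps))$ by \thmref{ZeroChainMixing}.

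For the upper bound, the crucial observation is that the single-step update on a chosen pair $(i,j)$---uniform sampling over the 15 non-$(0,0)$ elements of $\{0,1,2,3\}^2$---factorises as (i) choosing a new zero-pattern at $(i,j)$ with weights $\tfrac{9}{15}, \tfrac{3}{15}, \tfrac{3}{15}$ for the patterns (nz,nz), (nz,0), (0,nz); and (ii) drawing fresh iid uniform values in $\{1,2,3\}$ for each newly non-zero coordinate, independent of history. Consequently the zero-pattern coordinate of the full chain evolves \emph{exactly} as the zero chain, and the actual non-zero values are re-randomised from scratch each time the containing pair is touched by a non-$(0,0)$ update.

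I would combine this factorisation with \thmref{ZeroChainMixing} as follows. Set $T = Cn \log(n/\eps)$ for an appropriate constant $C$. Then (a) by \thmref{ZeroChainMixing} the zero-pattern distribution at time $T$ is within $\eps$ of its stationary distribution; and (b) by a coupon-collector argument each position has been selected $\Omega(\log(n/\eps))$ times, and once the zero pattern has any non-zeros the probability that a given selection at position $i$ triggers a value refresh is $\Omega(1)$, so with probability $\ge 1-\eps$ every position that is non-zero at time $T$ has been value-refreshed since the last time it was zero. Conditional on the zero-pattern at time $T$, the non-zero values are then iid uniform in $\{1,2,3\}$, so the joint law lies within $O(\eps)$ of $\pi_{\text{full}}$ in total variation, giving $\tau_{\text{full}}(\eps) = O(n \log(n/\eps))$.

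The main obstacle is the rigorous handling of (b): I must verify that every currently non-zero position has been value-refreshed under adversarial initial conditions---for example when nearly all positions start zero (making refreshes infrequent early on) or when positions oscillate rapidly between zero and non-zero states. The cleanest approach seems to be an explicit coupling of the full chain with an idealised chain whose non-zero values are always freshly uniform given the current zero pattern, and bounding the coupling error by $O(\eps)$ at time $T$; the fact that the pair-update is independent of the pre-update values (it samples uniformly from the 15 non-$(0,0)$ patterns) is what makes this coupling possible.
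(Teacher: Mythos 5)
Your overall two-phase plan (mix the zero chain, then coupon-collect refreshes of the individual sites) is the same strategy the paper uses, the factorisation of the pair-update into ``new zero-pattern'' plus ``fresh iid values in $\{1,2,3\}$'' is correct, and the lower bound via projection onto the zero chain is fine. The genuine gap is in what your step (b) actually delivers. Stationarity of the full chain is the uniform distribution on $\{0,1,2,3\}^n\setminus\{0^n\}$, which requires the \emph{set} $S$ of non-zero positions to have law proportional to $3^{|S|}$, i.e.\ to be uniform over all $\binom{n}{x}$ subsets conditioned on $|S|=x$. \thmref{ZeroChainMixing} controls only the count $|S|$ (the zero chain of \lemref{ZeroChainTransitionMatrix} is the count chain, not the pattern chain), and your freshness event controls only the values of the currently non-zero coordinates given the pattern. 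Neither controls the conditional law of $S$ given $|S|$: starting from the string $10\cdots0$, after $T$ steps the count is near $3n/4$, but nothing you have proved rules out the locations of the zeroes being biased. So ``count mixed'' plus ``values fresh'' does not yield ``joint law within $O(\eps)$ of $\pi_{\text{full}}$''.

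The paper closes exactly this hole by using a stronger event: every site has been \emph{hit}, meaning chosen in a pair that was not $(0,0)$ and hence resampled uniformly over all $15$ states. On that event the entire state --- zero-pattern and values together --- is exactly uniform, so it functions as a strong uniform time; the coupon-collector estimate is applied to hits, conditioned on the count staying at least $n/2$ throughout the second phase. That conditioning is also where your obstacle genuinely bites: your claim that ``any non-zeros'' makes each selection a refresh with probability $\Omega(1)$ fails when only $o(n)$ positions are non-zero (the probability is then $o(1)$), which is precisely why the paper first waits for the zero chain to mix and then union-bounds over the event $X_t\ge n/2$. Finally, this argument only yields failure probability $1/\poly(n)$ rather than an arbitrary $\eps$, so the paper finishes with the amplification step of \lemref{DistanceToStat}. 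To salvage your value-refresh framing you would need to separately prove mixing of the pattern chain on $\{0,1\}^n\setminus\{0^n\}$, which does not follow from \thmref{ZeroChainMixing}.
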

\begin{proof}
Once the zero chain has approximately mixed, the distribution of zeroes is almost correct.  We need to prove that the distribution of non-zeroes is correct after $O(n \log \frac{n}{\eps})$ steps too.

Once each site of the full chain has been hit, meaning it is chosen and paired with another site so not both equal zero, the chain has mixed.  This is because, after each site has been hit, the probability distribution over the states is uniform.  When the zero chain has approximately mixed, a constant fraction of sites are zero so the probability of hitting a site at each step is $\Theta(1/n)$.  By the coupon collector argument, each site will have been hit with probability at least $1-\eps$ in time time $O(n \log \frac{n}{\eps})$.  Once the zero chain has mixed to $\eps'$, we can run the full chain this extra number of steps to ensure each site has been hit with high probability.  Since the mixing of the zero chain only increases with time, the distance to stationarity of the full chain is now $1-\eps-\eps'$.  We make this formal below.

After $t_0 = O(n \log \frac{n}{\eps'})$ steps, the number of zeroes is $\eps'$-close to the stationary distribution $\pi_0$ by \thmref{ZeroChainMixing} and only gets closer with more steps since the distance to stationarity decreases monotonically.  The stationary distribution \eq{ZeroStatDistrib} is approximately a Gaussian peaked at $3n/4$ with $O(n)$ variance.  This means that, with high probability, the number of non-zeroes is close to $3n/4$.  We will in fact only need that there is at least a constant fraction of non-zeroes; with probability at least $1-\eps'-\exp(-\Omega(n))$ there will be at least $n/2$.

To prove the mixing time, we run the chain for time $t_0$ so the zero chain mixes to $\eps'$.  Then run for $t_1$ additional steps.  Let $H_{i, t}$ be the event that site $i$ is hit at step $t$.  Let $H_i = \cup_{t = t_0+1}^{t_0 +t_1} H_{i, t}$ and $H = \cap_{i = 1}^n H_i$.  We want to show $\Pr(H)$ is close to 1, or, in other words, that all sites are hit with high probability.  Further let $X_t$ be the random variable giving the number of non-zeroes at step $t$.

If at step $t-1$ site $i$ is non-zero then the event $H_{i, t}$ occurs if the qubit is chosen, which occurs with probability $2/n$.  If, however, it was zero then it must be paired with a non-zero thing for $H_{i, t}$ to hold.  Conditioned on any history with $X_{t-1} \ge n/2$, this probability is $\ge 1/n$.  In particular, we can condition on not having previously hit $i$ and the bound does not change.  Combining we have
\begin{equation*}
\Pr\left(H_{i, t}^c \bigg| \left[ X_{t-1} \ge n/2 \right] \bigcap \left( \bigcap_{t' = t_0+1}^{t-1} H_{i, t'}^c \right)\right) \le 1 - 1/n.
\end{equation*}
Then, after $t_1$ extra steps,
\begin{equation*}
\Pr\left(H_i^c \bigg| \bigcap_{t = t_0}^{t_0+t_1-1} \left[ X_{t} \ge n/2 \right] \right) \le (1 - 1/n)^{t_1}
\end{equation*}
which, using the union bound, gives
\begin{equation*}
\Pr\left(H^c \bigg| \bigcap_{t = t_0}^{t_0+t_1-1} \left[ X_{t} \ge n/2 \right] \right) \le n(1 - 1/n)^{t_1}.
\end{equation*}
Now, since the zero chain has mixed to $\eps'$,
\begin{equation*}
\Pr\left(\bigcap_{t = t_0}^{t_0+t_1-1} \left[ X_{t} \ge n/2 \right]\right) \le t_1 \sum_{x=n/2}^{n-1} \pi_0(x) + \eps' \le t_1 \exp(-O(n)) + \eps'
\end{equation*}
so
\begin{equation*}
\Pr(H^c) \le n(1 - 1/n)^{t_1} + t_1 \exp(-O(n)) + \eps'.
\end{equation*}
Now, choose $t_1 = n \ln \frac{2n}{\eps}$ so that $\Pr(H^c) \le \delta$ where $\delta = \eps + t_1 \exp(-O(n))$.  Choose $\eps = 1/n$ so that $\delta$ is $1/\poly(n)$.  Now, using the bound on $\Pr(H^c)$, we can write the state $v$ after $t_1 = O(n \log n)$ steps as
\begin{equation*}
v = (1-\delta) \pi + \delta \pi'
\end{equation*}
where $\pi$ is the stationary distribution and $\pi'$ is any other distribution.  Using this,
\begin{equation*}
|| v- \pi || \le \delta.
\end{equation*}
We now apply \lemref{DistanceToStat} to show that after $O(n \log \frac{n}{\eps})$ steps the distance to stationarity of the full chain is $\eps$.
\end{proof}

\subsection{Proof of Theorem 5.4} %\thmref{ZeroChainMixing}}

We will now proceed to prove \thmref{ZeroChainMixing}.  We present an outline of the proof here; the details are in \secref{ZeroChainMixingProof}.

Firstly, note that by the coupon collector argument, the lower bound on the time is $\Omega (n \log n)$.  We need to prove an upper bound equal to this.  Intuition says that the mixing time should take time $O(n \log n)$ because the walk has to move a distance $\Theta(n)$ and the waiting time at each step is proportional to $n, n/2, n/3, \ldots$ which sums to $O(n \log n)$, provided each site is not hit too often.   We will show that this intuition is correct using Chernoff bound and log-Sobolev (see later) arguments.

We will first work out concentration results of the position after some number of \emph{accelerated} steps.  The zero chain has some probability of staying still at each step. The accelerated chain is the zero chain conditioned on moving at each step.  We define the accelerated chain by its transition matrix:
\begin{definition}
The transition matrix for the accelerated chain is
\begin{equation}
P_a(x, y) =
\begin{cases}
  0  & y = x \\
  \frac{x-1}{3n-2x-1} & y = x-1 \\
  \frac{3(n-x)}{3n-2x-1} & y = x+1 \\
  0 & \rm{otherwise}.
\end{cases}
\end{equation}
\end{definition}
We use the accelerated chain in the proof to firstly prove the accelerated chain mixes quickly, then to bound the waiting time at each step to obtain a mixing time bound for the zero chain.
  
To prove the mixing time bound, we will split the walk up into three phases.  We will split the state space into three (slightly overlapping) parts and the phase can begin at any point within that space. So each phase has a state space $\Omega_i \subset [1, n]$, an entry space $E_i \subset \Omega_i$ and an exit condition $T_i$.  We say that a phase completes successfully if the exit condition is satisfied in time $O(n \log n)$ for an initial state within the entry space.  When the exit condition is satisfied, the walk moves onto the next phase.

The phases are:
\begin{enumerate}
\item{$\Omega_1 = [1, n^\delta]$ for some constant $\delta$ with $0 < \delta < 1/2$.  $E_1 = \Omega_1$ (i.e.~it can start anywhere) and $T_1$ is satisfied when the walk reaches $n^\delta$.  For this part, the probability of moving backwards (gaining zeroes) is $O(n^{\delta -1})$ so the walk progresses forwards at each step with high probability.  This is proven in \lemref{Phase1MovesRight}.  We show that the waiting time is $O(n \log n)$ in \lemref{Phase1WaitingTime}.}
\item{$\Omega_2 = [n^\delta/2, \theta n]$ for some constant $\theta$ with $0 < \theta < 3/4$.  $E_2 = [n^\delta, \theta n]$ and $T_2$ is satisfied when the walk reaches $\theta n$.  Here the walk can move both ways with constant probability but there is a $\Omega(1)$ forward bias.  Here we use a monotonicity argument: the probability of moving forward at each step is
\begin{align*}
p(x) &= \frac{3(n-x)}{3n-2x-1} \\
& \ge \frac{3(n-x)}{3n-2x} \\
& \ge \frac{3(1-\theta)}{3-2\theta}.
\end{align*}
If we model this random walk as a walk with constant bias equal to $\frac{3(1-\theta)}{3-2\theta}$ we will find an upper bound on the mixing time since mixing time increases monotonically with decreasing bias.  Further, the waiting time at $x=a$ stochastically dominates the waiting time at $x=b$ for $b \ge a$.  The true bias decreases with position so the walk with constant bias spends more time at the early steps.  Thus the position of this simplified walk is stochastically dominated by the position of the real walk while the waiting time stochastically dominates the waiting time of the real walk.}
\item{$\Omega_3 = [\frac{\theta}{2} n, n]$ and $E_3 = [\theta n, n]$.  $T_3$ is satisfied when this restricted part of the chain has mixed to distance $\eps$.  Here the bias decreases to zero as the walk approaches $3n/4$ but the moving probability is a constant.  We show that this walk mixes quickly by bounding the log-Sobolev constant of the chain.}
\end{enumerate}
Showing these three phases complete successfully will give a mixing time bound for the whole chain.

We now prove in the Appendix that the phases complete successfully with probability at least $1-1/\poly(n)$:
\begin{lemma}
\label{lem:Phase1Completes}
\begin{equation*}
\Pr(\text{\rm{Phase 1 completes successfully}}) \ge 1 - n^{2\delta-1} - 2 n^{-\delta}
\end{equation*}
\end{lemma}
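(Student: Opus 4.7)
The plan is to decouple Phase 1 into two largely independent questions: (i) does the \emph{accelerated} chain actually reach $n^\delta$ after at most $n^\delta$ accelerated steps, and (ii) does the total waiting time for these $O(n^\delta)$ accelerated steps stay $O(n\log n)$?  Each question contributes a failure probability, and a union bound yields the stated inequality.

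\emph{Forward-progress bound.}  In the regime $x\le n^\delta$, the accelerated backward probability is
\[
P_a(x,x-1)\;=\;\frac{x-1}{3n-2x-1}\;=\;O(n^{\delta-1}),
\]
by Lemma~\ref{lem:ZeroChainTransitionMatrix} and the definition of the accelerated chain.  Starting anywhere in $[1,n^\delta]$, if every accelerated step is forward then the walk hits $n^\delta$ in at most $n^\delta$ steps, so a union bound over these $n^\delta$ steps bounds the probability of some backward step by
\[
n^\delta \cdot O(n^{\delta-1}) \;=\; O(n^{2\delta-1}),
\]
which is the first contribution.  This is essentially the content of Lemma~\ref{lem:Phase1MovesRight}, which I plan to invoke directly.

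\emph{Waiting-time bound.}  For each $x\in[1,n^\delta]$ visited by the accelerated chain, the number of zero-chain steps spent at $x$ is geometric with mean $1/(1-P(x,x))\le 5n/(2x)$.  Hence conditional on the forward-progress event, the total waiting time is stochastically dominated by $\sum_{x=1}^{n^\delta} W_x$ with independent $W_x\sim\mathrm{Geom}(2x/(5n))$, whose expectation is $\sum_{x=1}^{n^\delta} 5n/(2x) = O(n\log n)$.  A Chernoff-type concentration bound for this sum (essentially Lemma~\ref{lem:Phase1WaitingTime}) shows that the total waiting time is $O(n\log n)$ except with probability at most $2n^{-\delta}$, giving the remaining terms.

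\emph{Main obstacle.}  The cleanest part is the forward-progress bound; the concentration of the waiting-time sum is the delicate step, because the summand $W_1$ alone has mean $\Theta(n)$ and heavy tails relative to the target deviation $O(n\log n)$.  The natural fix is to peel off the smallest few values of $x$ (handling them by Markov's inequality, which gives a $n^{-\delta}$ contribution) and apply a Bernstein- or Chernoff-style bound to the remainder, whose variance is controlled by $\sum_x (5n/(2x))^2=O(n^2)$.  Combining the two tail estimates yields the $2n^{-\delta}$ term, and a final union bound with the forward-progress estimate completes the proof.
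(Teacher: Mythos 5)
Your overall decomposition is exactly the paper's: bound the probability that the accelerated walk fails to move right at every one of its $n^\delta$ steps (Lemma~\ref{lem:Phase1MovesRight}, contributing $n^{2\delta-1}$), bound the total waiting time conditional on that event (Lemma~\ref{lem:Phase1WaitingTime}, contributing $2n^{-\delta}$), and combine. Your union bound over per-step backward probabilities in place of the paper's product $\prod_x \frac{3(n-x)}{3n-2x-1}\ge 1-t^2/n$ is an equivalent cosmetic variation.

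The one substantive problem is your ``main obstacle'' paragraph, which misdiagnoses the waiting-time step and proposes a fix that would not deliver the stated bound. There is no heavy-tail obstacle: each $W_x$ is geometric, so its tail decays exponentially, and the target deviation $Ct'=\Theta(n\log n)$ exceeds the largest single mean $\mathbb{E}W_1=\Theta(n)$ by a factor of $\log n$; already $\Pr(W_1\ge Ct')\le(1-\tfrac{2}{5n})^{Ct'}=n^{-\Theta(C\delta)}$, which is polynomially small. The paper's Lemma~\ref{lem:WaitingConc} exploits exactly this by computing $\mathbb{E}e^{\lambda W}=\prod_x \frac{x}{x-\alpha}\le 2\sqrt{t}$ and applying Markov to $e^{\lambda W}$, yielding $\Pr(W\ge Ct')\le 2t^{(1-C)/2}=2n^{-\delta}$ at $C=3$. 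By contrast, your proposed repair --- Markov's inequality on the small-$x$ terms and a Bernstein bound on the rest --- cannot reach $n^{-\delta}$: first-moment Markov on $W_1$ at deviation $\Theta(n\log n)$ gives only $O(1/\log n)$, and a variance bound of $O(n^2)$ at the same deviation gives only $O(1/\log^2 n)$. So if you execute the plan, you must use the geometric moment-generating-function route (i.e.\ invoke Lemma~\ref{lem:Phase1WaitingTime} as stated) rather than the alternative you sketch; with that route the argument closes as $(1-n^{2\delta-1})(1-2n^{-\delta})\ge 1-n^{2\delta-1}-2n^{-\delta}$.
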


\begin{lemma}
\label{lem:Phase2Completes}
\begin{equation*}
\Pr(\text{\rm{Phase 2 completes successfully}}) \ge 1 - \exp\left(-\frac{2}{3} \mu \theta n\right) - \left(\frac{4}{\theta n}\right)^\frac{3}{2\mu} - \frac{2 \exp\left(\frac{-\mu n^\delta}{4}\right)}{1-\exp(-\mu/2)} - \left(q/p \right)^{n^\delta/2}
\end{equation*}
where $\mu = \frac{6(1-\theta)}{3-2\theta} -1$.
\end{lemma}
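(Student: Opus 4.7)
The plan is to dominate the real accelerated chain by a simpler position-independent biased random walk $Y_t$ on $\mathbb{Z}$, then bound four separate failure events, one corresponding to each negative term on the right-hand side. Following the remark preceding the lemma, I take $Y_t$ to move right with probability $p = \frac{3(1-\theta)}{3-2\theta}$ and left with probability $q = 1-p$, so the drift is $p - q = \mu$. Since $p(x) \geq p$ for all $x \leq \theta n$, a standard monotone coupling makes the position of $Y_t$ stochastically dominated by the position of the accelerated walk; and since the waiting rate $r(x) = \Theta(x/n)$ is increasing in $x$, the waiting time at each visited position of $Y_t$ stochastically dominates the corresponding real waiting time. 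Thus it suffices to prove the bound for $Y_t$ equipped with the geometric waiting times of the real chain.

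Next I decompose the failure event into four pieces: (A) starting from the worst-case entry $n^\delta$, the walk $Y_t$ hits $n^\delta/2$ before reaching $\theta n$; (B) the walk needs more than $N = \Theta(n/\mu)$ accelerated steps to cover the distance $\theta n - n^\delta$; (C) the total waiting time during the $N$ accelerated steps while $Y_t \in [n^\delta, \theta n]$ exceeds the allotted $O(n\log n)$ budget; (D) excursions below $n^\delta$ contribute more than the budget. Event (A) is the classical gambler's ruin for a biased walk on $\mathbb{Z}$, giving probability at most $(q/p)^{n^\delta/2}$, matching the last term. Event (B) follows from a Chernoff bound on $\sum_{i=1}^{N} X_i$ with $\Expect X_i = \mu$ and $X_i \in \{\pm 1\}$: choosing $N$ so that the mean displacement is $\Theta(\theta n)$ and optimising the exponent yields $\exp(-\tfrac{2}{3}\mu\theta n)$, matching the first term.

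For event (D), the probability that $Y_t$ ever descends to depth $k$ below $n^\delta$ is at most $(q/p)^k \leq \exp(-\mu k/2)$ (using $\log(p/q) \geq \mu$ in our regime), and summing the waiting-time contributions over $k \geq n^\delta/2$ yields a geometric series $\sum_k \exp(-\mu k/2)$ that evaluates to $\tfrac{2\exp(-\mu n^\delta/4)}{1-\exp(-\mu/2)}$ up to constants. The main obstacle will be event (C): at each position $x$ visited the waiting time is geometric with parameter $\Theta(x/n)$, and the expected total is $\sum_{x = n^\delta}^{\theta n} \tfrac{1}{\mu} \cdot \tfrac{n}{x} = O(\tfrac{n\log n}{\mu})$, but concentrating the sum of position-dependent geometrics around this mean with the stated error $(4/(\theta n))^{3/(2\mu)}$ requires a careful moment-generating-function Chernoff calculation, choosing the MGF parameter to balance the contributions across the range of $x$. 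This last step will be the main technical burden of the proof.
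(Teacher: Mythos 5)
Your overall architecture matches the paper's: a union bound over (i) the walk failing to reach $\theta n$ in $O(n/\mu)$ accelerated steps (Chernoff, giving $\exp(-\tfrac{2}{3}\mu\theta n)$ after the paper's choice $t=3\theta n/\mu$, $\gamma=3$), (ii) the walk returning to $n^\delta/2$ (gambler's ruin, giving $(q/p)^{n^\delta/2}$), and (iii) the total waiting time exceeding the budget. Your events (A) and (B) are essentially the paper's Lemmas \ref{lem:NotBackwards} and \ref{lem:Phase2GetsToRightPlace}, and the reduction to a constant-bias walk is also how the paper frames this phase.

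However, there is a genuine gap in how you handle the waiting time, and it is exactly where the real work lies. First, your event (D) is mis-derived: summing $(q/p)^k$ over depths $k\ge n^\delta/2$ below $n^\delta$ just re-bounds the probability of reaching $n^\delta/2$ (your event (A)) and says nothing about waiting-time contributions; shallow excursions of depth $k<n^\delta/2$ occur with constant probability and are not covered by it. In the paper, the term $\tfrac{2\exp(-\mu n^\delta/4)}{1-\exp(-\mu/2)}$ is instead the failure probability of an \emph{occupation-time} event $H$: for every level $x\ge n^\delta/2$, the total number of accelerated steps spent at positions $\le x$ is at most $\gamma x/\mu$ (Lemma \ref{lem:NumberOfHits}, summed over $x$ from $n^\delta/2$ to give the geometric series). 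Second, your event (C) implicitly assumes each site is visited about $1/\mu$ times, but the visit profile is random; conditioning on $H$ is what licenses that assumption, and one then still needs the stochastic-domination/majorisation step (the paper's argument with $\mathbf{N}(\mathbf{X})\majr\mathbf{N}(\mathbf{X}_m)\majr\mathbf{N}(\mathbf{X}_0)$, using that the geometric waiting time at $x=a$ dominates that at $x=b\ge a$) to replace the random profile by the worst-case deterministic one before applying the MGF bound of Lemma \ref{lem:WaitingConc}, which yields $(4/(\theta n))^{3/(2\mu)}$. Without the occupation-time control and the domination argument, the ``careful MGF calculation'' you defer cannot be carried out, because the summands' multiplicities are not fixed.
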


\begin{lemma}
\label{lem:Phase3Completes}
\begin{equation*}
\Pr(\text{\rm{Phase 3 completes successfully}})  \ge 1 - \left( \frac{\theta}{3(2-\theta)} \right)^{\theta n/2}
\end{equation*}
\end{lemma}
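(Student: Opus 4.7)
The proof splits into two independent sub-claims: (i) the Markov chain restricted to $\Omega_3=[\theta n/2,n]$ mixes to within $\eps$ of its (restricted) stationary distribution in time $O(n\log n)$; and (ii) the walk started from $E_3=[\theta n,n]$ does not leave $\Omega_3$ in that many steps, except with probability at most $(\theta/(3(2-\theta)))^{\theta n/2}$. Since the restricted-chain mixing in (i) is deterministic once the walk stays in $\Omega_3$, the failure probability comes entirely from (ii).

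For sub-claim (i), I will bound the log-Sobolev constant of the restricted chain from below by $\Omega(1/n)$. On $\Omega_3$ the moving probability $1-P(x,x)=\Omega(1)$, since $x\ge\theta n/2$ forces $2x(3n-2x-1)/(5n(n-1))=\Omega(1)$, and the restricted stationary distribution is close to the restriction of $\pi_0(x)\propto 3^x\binom{n}{x}$, concentrated near $3n/4$ with support of width $\Theta(n)$. Using the log-Sobolev analogue of the comparison theorem (noted just after \eq{SobolevMixingTime}) to compare against a standard biased birth-death walk of length $\Theta(n)$---whose log-Sobolev constant is known to be $\Theta(1/n)$---gives the desired bound. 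Plugging into \eq{SobolevMixingTime} with $\log 1/\pi_*=O(n)$ and $\eps=1/\poly(n)$ then yields mixing time $O(n\log n)$.

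For sub-claim (ii), note that the one-step accelerated-chain ratio $q(x)/p(x)=(x-1)/(3(n-x))$ is monotonically increasing in $x$ and equals $\rho:=\theta/(3(2-\theta))$ at $x=\theta n/2$. My plan is to apply the gambler's-ruin formula for birth-death chains,
\[
\Pr_{x_0}\!\l(\tau_{\theta n/2}<\tau_n\r)=\frac{\sum_{k=x_0}^{n-1}\phi(k)}{\sum_{k=\theta n/2}^{n-1}\phi(k)},\qquad \phi(k)=\prod_{j=\theta n/2+1}^{k}\frac{q_j}{p_j},
\]
together with the observation that via detailed balance $\phi(k)$ equals $\pi_0(\theta n/2)/\pi_0(k)$ up to a simple prefactor. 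The denominator contains the term $\phi(\theta n/2)=1$, while the numerator terms are sharply controlled by the peaking of $\pi_0$ near $3n/4$, so careful accounting yields a ratio at most $\rho^{x_0-\theta n/2}\le\rho^{\theta n/2}$. A union bound over the $O(n\log n)$ steps of Phase 3 (each of which produces at most one excursion) then completes the bound.

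\textbf{Main obstacle.} The hardest step is obtaining the tight constant $\rho=\theta/(3(2-\theta))$ in (ii): a naive uniform supermartingale bound using the worst-case $q/p$ over $[\theta n/2,\theta n]$ only yields $(\theta/(3(1-\theta)))^{\theta n/2}$, which is strictly weaker. Achieving the claimed bound requires exploiting the explicit position-dependent structure of $q(x)/p(x)$---either through the gambler's-ruin formula or the equivalent detailed-balance identity relating $\phi(k)$ to the stationary ratio $\pi_0(\theta n/2)/\pi_0(k)$---rather than uniform worst-case estimates.
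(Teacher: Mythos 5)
Your two-part decomposition is exactly the paper's: part (i) is the paper's Lemma \ref{lem:MixingPhase3} (log-Sobolev comparison for the chain restricted to $[\theta n/2,n]$), and part (ii) is the appeal to gambler's ruin (the paper's Lemma \ref{lem:NotBackwards}) to control escape below $\theta n/2$. Part (i) is fine in outline, though the paper does not simply cite a ``known'' log-Sobolev constant for a biased birth-death walk: it builds the comparison chain explicitly as the zero-chain of a product chain, gets its constant from Lemma \ref{lem:ProductChain} via a restriction-of-the-infimum argument, and computes $A\le 5/(8\theta)$ from \eq{ComparisonWalk}; you would need to supply an equivalent construction.

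Part (ii) contains a genuine gap, and it is precisely at the step you flag as the main obstacle. The ``careful accounting'' you propose cannot produce the constant $\rho=\theta/(3(2-\theta))$. In the exact birth-death ruin formula, every factor $q_j/p_j=(j-1)/(3(n-j))$ with $j\in[\theta n/2,\theta n]$ is \emph{at least} $\rho$ (up to $1-o(1)$), since $\rho$ is the value at the lower endpoint $j=\theta n/2$ and the ratio is increasing in $j$. Hence $\phi(k)=\prod_j q_j/p_j$ is bounded \emph{below}, not above, by $\rho^{k-\theta n/2}$; the product equals the geometric mean of the ratios raised to the path length, which strictly exceeds $\rho^{\theta n/2}$ by an exponential factor. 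Equivalently, via detailed balance $\phi(\theta n)\approx\pi_0(\theta n/2)/\pi_0(\theta n)=3^{-\theta n/2}\binom{n}{\theta n/2}/\binom{n}{\theta n}$, and already this single numerator term exceeds $(\theta/(3(2-\theta)))^{\theta n/2}$ exponentially (e.g.\ at $\theta=1/2$ it is about $2^{-0.59n}$ versus $9^{-n/4}\approx 2^{-0.79n}$). What the method actually delivers is the ``naive'' bound you dismiss, $(\theta/(3(1-\theta)))^{\theta n/2}$ times a $\poly(n)$ factor from the union bound over excursions, since the valid domination must use the \emph{least} favourable bias on $[\theta n/2,\theta n]$, attained at $x=\theta n$. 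For what it is worth, the paper's own one-line proof just invokes the constant-bias Lemma \ref{lem:NotBackwards}, and its stated constant corresponds to evaluating $q/p$ at $x=\theta n/2$ --- the most favourable endpoint --- so the constant in the lemma statement is itself not justified by the paper's argument. You should not try to hit it: the weaker exponent suffices for everything downstream, where only $p_f=1/\poly(n)$ is used.
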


We can now finally combine to prove our result:
\begin{proof}[of \thmref{ZeroChainMixing}]
The stationary distribution has exponentially small weight in the tail with lots of zeroes.  We show that, provided the number of zeroes is within phase 3, the walk mixes in time $O(n \log \frac{n}{\eps})$.  We also show that if the number of zeroes is initially within phase 1 or 2, after $O(n \log n)$ steps the walk is in phase 3 with high probability.  We can work out the distance to the stationary distribution as follows.

Let $p_f$ be the probability of failure.  This is the sum of the error probabilities in Lemmas \ref{lem:Phase1Completes}, \ref{lem:Phase2Completes} and \ref{lem:Phase3Completes}.  The key point is that $p_f = 1/\poly(n)$.  Then after $O(n \log \frac{n}{\eps})$ steps (the sum of the number of steps in the 3 phases), the state is equal to $(1-p_f)v_3 + p_f v'$ where $v_3$ is the state in the phase 3 space and $v'$ is any other distribution, which occurs if any one of the phases fails.  Since the distance to stationarity in phase 3 is $\eps$, $|| v_3 - \pi_3 || \le \eps$, where $\pi_3$ is the stationary distribution on the state space of phase 3.  In \lemref{MixingPhase3} we show that $\pi_3(x) = \pi(x)/(1-w)$ where $w = \sum_{x=1}^{\theta n / 2 -1} \pi(x)$.  Since $\pi(x)$ is exponentially small in this range, $w$ is exponentially small in $n$.  Now use the triangle inequality to find
\begin{equation}
||v_3 - \pi|| \le ||v_3 - \pi_3|| + ||\pi_3 - \pi||.
\end{equation}
Since the chain in phase 3 has mixed to $\eps$, the first term is $\le \eps$.  We can evaluate $|| \pi_3 - \pi||$:
\begin{align*}
|| \pi_3 - \pi || &= \frac{1}{2} \sum_{x=1}^n || \pi_3(x) - \pi(x) || \\
&= \frac{1}{2} \left( \sum_{x=1}^{\theta n/2 - 1} \pi(x) + \sum_{x = \theta n /2}^n(\pi(x)/(1-w) - \pi(x)) \right) \\
&= \frac{1}{2} \left( w + 1 - (1-w) \right) = w.
\end{align*}
So now,
\begin{align*}
||(1-p_f)v_3 + p_f v' - \pi|| &= ||(1-p_f)(v_3 - \pi) + p_f(v' - \pi)|| \\
&\le (1-p_f) || v_3 - \pi|| + p_f ||v' - \pi|| \\
&\le (1-p_f)(\eps + w) + p_f \\
&\le \delta
\end{align*}
where $\delta = \eps + w + p_f$.  We are free to choose $\eps$: choose it to be $1/n$ so that $\delta$ is $1/\poly(n)$.  So now the running time to get a distance $\delta$ is $t = O(n \log n)$.  We then apply \lemref{DistanceToStat} to obtain the result.

This concludes the proof of \thmref{ZeroChainMixing} so \corref{FullChainMixing} is proved.
\end{proof}
We have now proven \lemref{MainMixing} and consequently \corref{MainMixing}.  We now show how \thmref{Main2Design} follows.

\section{Main Result}
\label{sec:MainResult}

We will now show how the mixing time results imply that we have an approximate 2-design.

\begin{proof}[Proof of \thmref{Main2Design}:]
We will go via the 2-norm since this gives a tight bound when working
with the Pauli operators.  The supremum can be taken over just
physical states $\rho$ \cite{Watrous05}.  We write $\rho$ in the Pauli
basis as usual (as \eq{PauliBasisGeneral}).
\begin{align*}
\vectornorm{\cG_{W} - \cG_H}_{\diamond}^2 &= \sup_\rho \vectornorm{(\cG_{W} \otimes I) (\rho) - (\cG_H \otimes I) (\rho)}_1^2 \\
&\le 2^{4n} \sup_\rho \vectornorm{(\cG_{W} \otimes I) (\rho) - (\cG_H \otimes I) (\rho)}_2^2 \\
&= \sup_{\rho} \bigg|\bigg| \sum_{p_1, p_2, p_3, p_4 \atop p_1 p_2 \ne 00} \gamma_0(p_1, p_2, p_3, p_4) ( \cG_{W}(\sigma_{p_1} \otimes \sigma_{p_2}) \otimes \sigma_{p_3} \otimes \sigma_{p_4}  \\
& \phantom{= \sum_{\rho} \bigg|\bigg|} -  \cG_H(\sigma_{p_1} \otimes \sigma_{p_2}) \otimes \sigma_{p_3} \otimes \sigma_{p_4} ) \bigg|\bigg|_2^2
\end{align*}
Now, write (for $p_1 p_2 \ne 00$) $\cG_{W}(\frac{1}{2^n}\sigma_{p_1} \otimes \sigma_{p_2}) = \frac{1}{2^n} \sum_{q_1, q_2 \atop q_1 q_2 \ne 00} g_t(q_1, q_2; p_1, p_2) \sigma_{q_1} \otimes \sigma_{q_2}$.  We get
\begin{align*}
& \sup_{\rho} \bigg|\bigg| \sum_{p_1, p_2, p_3, p_4, q_1, q_2 \atop p_1 p_2 \ne 00, q_1 q_2 \ne 00} \gamma_0(p_1, p_2, p_3, p_4) \left( g_t(q_1, q_2; p_1, p_2) - \frac{\delta_{q_1 q_2} \delta_{p_1 p_2}}{2^n(2^n+1)} \right) \\
& \phantom{\sup_{\rho} \bigg|\bigg|} \sigma_{q_1} \otimes \sigma_{q_2} \otimes \sigma_{p_3} \otimes \sigma_{p_4} \bigg|\bigg|_2^2 \\
&= 2^{4n} \sup_{\rho} \sum_{p_1, p_2, p_3, p_4, q_1, q_2 \atop p_1 p_2 \ne 00, q_1 q_2 \ne 00} \gamma_0^2(p_1, p_2, p_3, p_4) \left( g_t(q_1, q_2; p_1, p_2) - \frac{\delta_{q_1 q_2} \delta_{p_1 p_2}}{2^n(2^n+1)} \right)^2  \\
&\le 2^{4n} \sup_{\rho} \sum_{p_1, p_2, p_3, p_4 \atop p_1 p_2 \ne 00} \gamma_0^2(p_1, p_2, p_3, p_4) \eps^2  \\
&\le 2^{4n} \eps^2
\end{align*}
where the first equality comes from the orthogonality of the Pauli operators under the Hilbert-Schmidt inner product and the last inequality comes from the fact that $\rho$ is a physical state so has $\tr \rho^2 \le 1$.  This proves the result for the diamond norm, \defref{ApproxUnitaryDesign}.  For the distance measure defined in \defref{DankertApprox}, the argument in \cite{DCEL06} can be used together with the 1-norm bound to prove the result.
\end{proof}
It is unfortunate that there is still a dimension factor remaining in
the above proof.  To get a distance $\eps$ we have to run the random
circuit for $O(n(n+\log 1/\eps))$ steps.  However, closeness in the
diamond-norm may be too stringent a requirement.  After
$O(n(n+\log 1/\eps))$ steps, the random circuit gives a 2-design in the measure used
by Dankert et al.~(see \cite{DCEL06} and \defref{DankertApprox}).
This is in contrast to the $O(n \log 1/\eps)$ steps required by the
explicit circuit construction of Dankert et al.

%Another important thing to notice is that a $k$-design on the two
%qubits is all that is required to have the $\hat{G}$ (\eq{G}) for the
%$U(4)$ gate set, rather than a fully random $U(4)$ gate.  We could,
%for example, use the Clifford group for this purpose \cite{DLT02} for
%$k=2$.  Therefore this protocol takes an exact 2-design in four
%dimensions and makes an approximate 2-design in $2^n$ dimensions.

\section{Conclusions}
\label{sec:Conclusion}

We have proved tight convergence results for the first two moments of
a random circuit.  We have used this to show that random circuits are
efficient approximate 1- and 2-unitary designs.  Our framework readily
generalises to $k$-designs for any $k$ and the next step in this
research is to prove that random circuits give approximate $k$-designs
for all $k$.

We have shown that, provided the random circuit uses gates from a
universal gate set that is also universal on $U(4)$, the circuit is
still an efficient 2-design.  We also see that the random circuit with
gates chosen uniformly from $U(4)$ is the most natural model.  We note that the gates from $U(4)$ can be replaced by
gates from any approximate 2-design on two qubits without any change to the asymptotic 
convergence properties.

One application of this work is to give an efficient method of
decoupling two quantum systems by applying a random unitary from a
2-design to one system and then discarding part of it.  This technique
is used in \cite{ADHW06} to construct a variety of encoding circuits
for tasks in quantum Shannon theory; thus, we (like \cite{DCEL06})
reduce the encoding 
complexity in \cite{ADHW06} (and related works, such as \cite{HHYW07}) to
$O(n^2)$.  Unfortunately, the decoding circuits still remain
inefficient.

An algorithmic application of random circuits was given in
\cite{HH08}, where they were used to construct a new class of
superpolynomial quantum speedups.  In that paper, random circuits of
length $O(n^3)$ were used in order to guarantee that they were
so-called ``dispersing'' circuits.  Our results immediately imply that
circuits of length $O(n^2)$ would instead suffice.   We believe that
this could be further improved with a specialised argument, since
\cite{HH08} assumed that the input to the random circuit was always a
computational basis state.

Another potential application of random circuits is to model the
evolution of black holes \cite{HaydenPreskill07}.  In
Ref.~\cite{HaydenPreskill07}, they conjecture that short random local
quantum circuits are approximately 2-designs, and thus can be used for
decoupling quantum systems (as in \cite{ADHW06}).  This, in turn, is
used to make claims about the rate at which black holes leak
information.  While our model differs from that of
Ref.~\cite{HaydenPreskill07} in that they consider nearest-neighbour
interactions and we do not, our techniques and results could be
readily extended to cover the case they consider. 

Finally, random circuits are interesting physical models in their own
right.  The original purpose of \cite{ODP06} was to answer the
physical question of how quickly entanglement grows in a system with
random two party interactions.  \lemref{MainMixing}(i) shows that
$O(n(n + \log 1/\eps))$ steps suffice (in contrast to $O(n^2(n + \log
1/\eps))$ which they prove) to give almost maximal entanglement in
such a system.

{\bf Acknowledgements.} We are grateful for funding from the Army
Research Office under grant 
W9111NF-05-1-0294, the European Commission under Marie Curie grants
ASTQIT (FP6-022194) and QAP (IST-2005-15848), and the U.K. Engineering
and Physical Science Research Council through ``QIP IRC.''   We thank
Rapha\"{e}l Clifford, Ashley Montanaro and Dan Shepherd for helpful
discussions. 

\appendix

\section{Appendix}

\subsection{Permutation Operators}

The following theorems about permutation operators will be used repeatedly.
\begin{lemma}
\label{lem:TraceCycles}
Let $C$ be a cycle of length $c$ in $S_c$.  Then
\begin{equation*}
\tr \l( C \l( A_1 \otimes A_2 \otimes \ldots \otimes A_c \r)\r) = \tr \l(A_{C(1)} A_{C^{\circ 2}(1)} A_{C^{\circ 3}(1)} \ldots A_1 \r).
\end{equation*}
\end{lemma}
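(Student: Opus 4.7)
The plan is to evaluate the left-hand side by direct expansion in matrix components and then recognise the result as a cyclic product. Fix any orthonormal basis $\{|i\>\}$ and write
\[
\tr\l( C (A_1 \ot \cdots \ot A_c) \r)
= \sum_{i_1,\ldots,i_c} \<i_1,\ldots,i_c| \, C \, (A_1\ot\cdots\ot A_c) |i_1,\ldots,i_c\>.
\]
Acting first with $A_1 \ot \cdots \ot A_c$ turns the ket into $\sum_{j_1,\ldots,j_c} \prod_{k}(A_k)_{j_k,i_k}\, |j_1,\ldots,j_c\>$. Fix the convention $C|j_1,\ldots,j_c\> = |j_{C^{-1}(1)},\ldots,j_{C^{-1}(c)}\>$ (so that on single-particle states the permutation $C$ moves slot $k$ to slot $C(k)$), and pair with $\<i_1,\ldots,i_c|$. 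This forces the constraint $j_k = i_{C(k)}$ for each $k$, leaving
\[
\tr\l(C(A_1\ot\cdots\ot A_c)\r) = \sum_{i_1,\ldots,i_c} \prod_{k=1}^{c} (A_k)_{i_{C(k)},\,i_k}.
\]

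Next I would reindex the product along the orbit of the cycle. Since $C$ is a single $c$-cycle, the iterates $1, C(1), C^{\circ 2}(1), \ldots, C^{\circ (c-1)}(1)$ enumerate $\{1,\ldots,c\}$ without repetition. Writing the product in that order, the $m$-th factor is $(A_{C^{\circ(m-1)}(1)})_{i_{C^{\circ m}(1)},\, i_{C^{\circ(m-1)}(1)}}$, and adjacent factors share the summed index $i_{C^{\circ m}(1)}$. The sum over all $i$'s therefore collapses to a single matrix trace,
\[
\tr\l(A_1\, A_{C(1)}\, A_{C^{\circ 2}(1)}\cdots A_{C^{\circ(c-1)}(1)}\r).
\]
Applying cyclic invariance of the trace to move $A_1$ to the end yields $\tr\l(A_{C(1)} A_{C^{\circ 2}(1)}\cdots A_1\r)$, which is the claimed right-hand side.

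The only subtlety is bookkeeping: the convention for how $C$ acts on tensor factors determines whether one reads the product in the forward or reverse cyclic direction, but cyclic invariance of the trace absorbs this ambiguity, so stating the convention once at the start is enough. I do not anticipate any technical obstacle beyond careful index manipulation; the lemma is essentially a reorganisation of indices, and once the orbit of the cycle is identified with the cyclic order in the product the identity falls out immediately.
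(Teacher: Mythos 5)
Your strategy is the same as the paper's: expand the trace in a product basis, use the permutation to tie the indices together along the orbit of the cycle, and recognise the resulting contraction as the trace of a product of the $A_k$'s. The basis expansion is fine, but the final step contains a genuine error, and the sentence you offer to excuse it does not hold up. Under the convention you fix, $C\ket{j_1,\ldots,j_c}=\ket{j_{C^{-1}(1)},\ldots,j_{C^{-1}(c)}}$, your formula $\sum_{i}\prod_k (A_k)_{i_{C(k)},i_k}$ is correct; but when you chain the indices along the orbit, the shared index $i_{C^{\circ m}(1)}$ is the \emph{row} index of the factor $A_{C^{\circ(m-1)}(1)}$ and the \emph{column} index of the factor $A_{C^{\circ m}(1)}$. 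Since matrix multiplication contracts the column of the left factor with the row of the right factor, the sum collapses to $\tr\l(A_{C^{\circ(c-1)}(1)}\cdots A_{C(1)}A_1\r)$ --- the product in the \emph{reverse} cyclic order --- and not to $\tr\l(A_1 A_{C(1)}\cdots A_{C^{\circ(c-1)}(1)}\r)$. For $c\ge 3$ these are genuinely different: cyclic invariance of the trace gives only rotations, $\tr(ABC)=\tr(BCA)$, never the reversal $\tr(CBA)$. So the claim that ``cyclic invariance of the trace absorbs this ambiguity'' is false; the two orientations correspond to the lemma for $C$ versus for $C^{-1}$, which are different statements.

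The fix is one line: adopt the opposite (equally standard) convention $C\ket{j_1,\ldots,j_c}=\ket{j_{C(1)},\ldots,j_{C(c)}}$, which forces $j_k=i_{C^{-1}(k)}$ and yields $\sum_i\prod_k (A_k)_{i_{C^{-1}(k)},i_k}=\sum_i\prod_k \bra{i_k}A_{C(k)}\ket{i_{C(k)}}$; setting $k=C^{\circ(m-1)}(1)$ the $m$-th factor becomes $(A_{C^{\circ m}(1)})_{i_{C^{\circ(m-1)}(1)},\,i_{C^{\circ m}(1)}}$, whose column index matches the row index of the next factor, so the contraction produces exactly $\tr\l(A_{C(1)}A_{C^{\circ 2}(1)}\cdots A_1\r)$ as stated. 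This is precisely the convention the paper's own proof implicitly uses (it writes $\bra{i_k}A_{C(k)}\ket{i_{C(k)}}$ directly). In the paper the lemma is only ever applied with $c=2$, where orientation is immaterial, but as a standalone statement for general $c$ the direction of the cycle matters, and your argument as written establishes the identity for $C^{-1}$ rather than for $C$.
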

\begin{proof}
We have
\begin{align*}
\tr \l( C \l( A_1 \otimes A_2 \otimes \ldots \otimes A_c \r)\r) &= \sum_{i_1, i_2, \ldots, i_c} \bra{i_1 i_2 \ldots i_c} C \l( A_1 \otimes A_2 \otimes \ldots \otimes A_c \r) \ket{i_1 i_2 \ldots i_c} \\
&= \sum_{i_1, i_2, \ldots, i_c} \bra{i_1} A_{C(1)} \ket{i_{C(1)}} \bra{i_2} A_{C(2)} \ket{i_{C(2)}} \ldots \bra{i_c} A_{C(c)} \ket{i_{C(c)}} \\
&= \sum_{i_1, i_2, \ldots, i_c} \bra{i_1} A_{C(1)} \ket{i_{C(1)}} \bra{i_{C(1)}} A_{C^{\circ 2}(1)} \ket{i_{C^{\circ 2}(1)}} \ldots \bra{i_{C^{\circ c-1}(1)}} A_{1} \ket{i_1}
\end{align*}
since $C^{\circ c}(1)=1$.  Evaluate the sum using the resolution of the identity to get the result.
\end{proof}

With this we can work out the Pauli expansion of the swap operator:
\begin{lemma}
\label{lem:Swap}
The swap operator $\swap$ on two $d$ dimensional systems can be written as
\begin{equation*}
\frac{1}{d} \sum_{p} \sigma_p \otimes \sigma_p.
\end{equation*}
where $\{ \sigma_p \}$ form a Hermitian orthogonal basis with $\tr \sigma_p^2 = d$.
\end{lemma}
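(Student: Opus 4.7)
The plan is to expand $\swap$ in the tensor-product basis $\{\sigma_p \otimes \sigma_q\}$ and compute the coefficients directly via Hilbert--Schmidt inner products. Since the given basis is orthogonal with $\tr \sigma_p \sigma_q = d\,\delta_{pq}$, the basis $\{\sigma_p \otimes \sigma_q\}$ for operators on $\bbC^d \otimes \bbC^d$ is orthogonal with normalisation $\tr\bigl((\sigma_p \otimes \sigma_q)(\sigma_r \otimes \sigma_s)\bigr) = d^2 \delta_{pr}\delta_{qs}$. Thus any operator $A$ admits the expansion
\begin{equation*}
A = \frac{1}{d^2} \sum_{p,q} \tr\bigl(A\,(\sigma_p \otimes \sigma_q)\bigr)\, \sigma_p \otimes \sigma_q,
\end{equation*}
so it suffices to evaluate $\tr(\swap\,(\sigma_p \otimes \sigma_q))$.

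Next, I would invoke \lemref{TraceCycles} with the cycle $C = \swap \in S_2$ and $A_1 = \sigma_p$, $A_2 = \sigma_q$; this immediately gives
\begin{equation*}
\tr\bigl(\swap\,(\sigma_p \otimes \sigma_q)\bigr) = \tr(\sigma_p \sigma_q) = d\,\delta_{pq},
\end{equation*}
the last equality being the stated orthogonality of the basis. Substituting this coefficient back into the expansion yields
\begin{equation*}
\swap = \frac{1}{d^2} \sum_{p,q} d\,\delta_{pq}\, \sigma_p \otimes \sigma_q = \frac{1}{d} \sum_p \sigma_p \otimes \sigma_p,
\end{equation*}
as claimed.

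There is no real obstacle here: the result is a one-line consequence of \lemref{TraceCycles} combined with orthogonality, once one observes that $\swap$ is precisely the permutation operator associated with the nontrivial element of $S_2$. The only thing to be careful about is tracking the normalisation factor $d$ in $\tr \sigma_p^2 = d$, which propagates through to the $1/d$ prefactor in the final formula.
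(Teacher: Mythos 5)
Your proposal is correct and follows essentially the same route as the paper: expand $\swap$ in the orthogonal basis $\{\sigma_p\otimes\sigma_q\}$ and compute the coefficients $\tr(\swap\,(\sigma_p\otimes\sigma_q))=\tr(\sigma_p\sigma_q)=d\,\delta_{pq}$ via \lemref{TraceCycles}. You merely spell out the $1/d^2$ normalisation of the expansion more explicitly than the paper does, which is fine.
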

\begin{proof}
Expand $\swap$ in the basis and use \lemref{TraceCycles}:
\begin{align*}
\tr \sigma_p \otimes \sigma_q \swap &= \tr \sigma_p \sigma_q \\
&=
\begin{cases}
d	&	p=q \\
0	&	{\rm otherwise}.
\end{cases}
\end{align*}
The given sum has the correct coefficients in the basis therefore $\frac{1}{d} \sum_{p} \sigma_p \otimes \sigma_p=\swap$.
\end{proof}

\subsection{Zero chain mixing time proofs}
\label{sec:ZeroChainMixingProof}

\subsubsection{Asymmetric Simple Random Walk}
\label{sec:AsymRandomWalks}

We will use some facts about asymmetric simple random walks i.e.~a random walk on a 1D line with probability $p$ of moving right at each step and probability $q=1-p$ of moving left.

The position of the walk after $k$ steps is tightly concentrated around $k(p-q)$:
\begin{lemma}
\label{lem:BiasedWalkPosition}
Let $X_k$ be the random variable giving the position of a random walk after $k$ steps starting at the origin with probability $p$ of moving right and probability $q = 1-p$ of moving left.  Let $\mu = p - q$.  Then for any $\eta > 0$,
\begin{equation*}
\Pr(X_k \ge \mu k + \eta) \le \exp\left(-\frac{\eta^2}{2 k}\right)
\end{equation*}
and
\begin{equation*}
\Pr(X_k \le \mu k - \eta) \le \exp\left(-\frac{\eta^2}{2 k}\right).
\end{equation*}
\end{lemma}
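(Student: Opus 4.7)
The plan is to prove the lemma by a direct application of the Chernoff bounding method (equivalently, Hoeffding's inequality for bounded independent random variables). Write $X_k = \sum_{i=1}^k Y_i$, where $Y_1,\ldots,Y_k$ are i.i.d.\ with $\Pr(Y_i = +1) = p$ and $\Pr(Y_i = -1) = q$, so that $\Expect Y_i = p-q = \mu$ and the centred variables $Z_i := Y_i - \mu$ satisfy $\Expect Z_i = 0$ and $Z_i \in [-1-\mu,\,1-\mu]$, an interval of length $2$.

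For the upper tail, I would apply Markov's inequality to the exponential moment: for any $t > 0$,
\begin{equation*}
\Pr(X_k \ge \mu k + \eta) \;=\; \Pr\!\Big(\textstyle\sum_i Z_i \ge \eta\Big) \;\le\; e^{-t\eta}\,\Expect\bigl[e^{t\sum_i Z_i}\bigr] \;=\; e^{-t\eta}\,\bigl(\Expect e^{tZ_1}\bigr)^k,
\end{equation*}
using independence of the $Z_i$. The key estimate is Hoeffding's lemma, which says that a mean-zero random variable $Z$ supported in $[a,b]$ satisfies $\Expect e^{tZ} \le \exp\bigl(t^2(b-a)^2/8\bigr)$; applied to $Z_1$ this gives $\Expect e^{tZ_1} \le e^{t^2/2}$. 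Substituting,
\begin{equation*}
\Pr(X_k \ge \mu k + \eta) \;\le\; e^{-t\eta + kt^2/2},
\end{equation*}
and optimising over $t > 0$ by setting $t = \eta/k$ yields the claimed bound $\exp(-\eta^2/(2k))$.

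The lower-tail bound follows by symmetry: replacing each $Y_i$ by $-Y_i$ turns a walk with bias $\mu$ into one with bias $-\mu$, so the argument above applied to $-X_k$ yields $\Pr(X_k \le \mu k - \eta) \le \exp(-\eta^2/(2k))$. There is no real obstacle here; the only mildly non-trivial ingredient is Hoeffding's lemma, which is standard and can either be cited or proved in one line by noting that $e^{tz}$ is convex, so $e^{tZ} \le \tfrac{b-Z}{b-a}e^{ta} + \tfrac{Z-a}{b-a}e^{tb}$, and then bounding the logarithm of the resulting expectation by its second-order Taylor expansion. Since both inequalities hold for every $\eta > 0$ with the same exponential rate, the lemma follows.
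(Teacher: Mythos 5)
Your proof is correct and follows essentially the same route as the paper: both reduce the claim to Hoeffding's inequality for sums of bounded i.i.d.\ increments, with the lower tail obtained by symmetry. The only cosmetic difference is that you re-derive the bound from the exponential-moment method and Hoeffding's lemma, whereas the paper simply cites the standard Chernoff bound for $0/1$ variables and rescales via $\tilde{X}_i = 2\tilde{Y}_i - 1$.
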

\begin{proof}
The standard Chernoff bound for $0/1$ variables $\tilde{Y}_i$ gives, with $\tilde{Y}_i$ equal to $1$ with probability p and for $Y_k = \sum_{i=1}^k \tilde{Y}_i$,
\begin{align*}
\Pr(Y_k \ge k p + \eta) &\le \exp\left(-\frac{2 \eta^2}{k} \right) \\
\Pr(Y_k \le k p - \eta) &\le \exp\left(-\frac{2 \eta^2}{k} \right).
\end{align*}
For our case, set $\tilde{Y_i} = 2\tilde{X_i} - 1$ to give the desired result.
\end{proof}
This result is for a walk with constant bias.  We will need a result for a walk with varying (but bounded from below) bias:
\begin{lemma}
\label{lem:BiasedWalkPositionWithBoundedBias}
Let $X_k$ be the random variable giving the position of a random walk after $k$ steps starting at the origin with probability $p_i \ge p$ of moving right and probability $q_i \le p$ of moving left at step $i$.  Let $\mu = p - (1-p)$.  Then for any $\eta > 0$,
\begin{equation*}
\Pr(X_k \ge \mu k + \eta) \le \exp\left(-\frac{\eta^2}{2 k}\right)
\end{equation*}
and
\begin{equation*}
\Pr(X_k \le \mu k - \eta) \le \exp\left(-\frac{\eta^2}{2 k}\right).
\end{equation*}
\end{lemma}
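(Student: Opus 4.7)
The plan is to reduce the claim to Lemma~\ref{lem:BiasedWalkPosition} via a monotone coupling. I would construct $X_k$ and a constant-bias companion walk $Y_k$ (with right-probability $p$ at every step) on the same probability space using i.i.d.\ uniform random variables $U_1, \ldots, U_k$ on $[0,1]$: at step $i$, let $X$ move right iff $U_i < p_i$ and let $Y$ move right iff $U_i < p$. Since $p_i \ge p$, the event $\{U_i < p\}$ is contained in $\{U_i < p_i\}$, so every rightward step of $Y$ is simultaneously a rightward step of $X$, and summing the increments gives $Y_k \le X_k$ pointwise on the sample space.

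The lower-tail bound then drops out immediately: by the second half of Lemma~\ref{lem:BiasedWalkPosition} applied to the constant-bias walk $Y_k$, which has expected position exactly $\mu k$, one gets
\[
\Pr(X_k \le \mu k - \eta) \;\le\; \Pr(Y_k \le \mu k - \eta) \;\le\; \exp\!\left(-\eta^2/(2k)\right).
\]
For the upper tail, the obvious symmetric argument couples $X_k$ to a constant-bias walk whose right-probability equals the largest allowed $p_i$; this effectively requires a matching upper bound of the form $p_i \le p$ (the reverse direction of the hypothesis), after which the first half of Lemma~\ref{lem:BiasedWalkPosition} supplies $\Pr(X_k \ge \mu k + \eta) \le \exp(-\eta^2/(2k))$. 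An alternative route avoiding coupling is to apply Azuma--Hoeffding to the martingale $M_k := X_k - \sum_{i=1}^k (2p_i - 1)$, whose increments have range at most $2$; this gives the two-sided bound $\Pr(|M_k| \ge \eta) \le 2\exp(-\eta^2/(2k))$, from which the lower tail follows using $\sum_i (2p_i - 1) \ge \mu k$.

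I do not expect any genuine obstacle: the coupling is two lines and Lemma~\ref{lem:BiasedWalkPosition} carries all the analytic content. The only subtlety worth flagging is matching the direction of the hypothesis $p_i \ge p$ with the direction of the tail being controlled, which is why the two tails are controlled by coupling against reference walks of different bias.
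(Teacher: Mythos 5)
Your proposal is correct for the lower tail, and it takes a genuinely different route from the paper. The paper's proof works at the level of moment generating functions: it introduces the constant-bias sum $Y_k$ and the varying-bias sum $Z_k$ and claims a domination $\Expect e^{\lambda Z_k} \le \Expect e^{\lambda Y_k}$ inside the standard Chernoff derivation, then transfers the Hoeffding bound from Lemma~\ref{lem:BiasedWalkPosition}. Your pathwise monotone coupling ($U_i<p$ implies $U_i<p_i$, hence $Y_k\le X_k$ surely) achieves the same domination but at the level of events rather than MGFs, which is cleaner and makes the direction of the inequality transparent; the event inclusion $\{X_k\le a\}\subseteq\{Y_k\le a\}$ immediately hands the lower tail to Lemma~\ref{lem:BiasedWalkPosition}. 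Your Azuma--Hoeffding alternative is also fine and self-contained. Your flag about the upper tail is not merely a subtlety but an actual issue with the lemma as stated: under the hypothesis $p_i\ge p$ only, the bound $\Pr(X_k\ge \mu k+\eta)\le\exp(-\eta^2/2k)$ is false in general (take $p_i=1$, $p=1/2$), and the paper's MGF comparison runs in the wrong direction for $\lambda>0$ since $\Expect e^{\lambda\tilde Z_i}=1-p_i+p_ie^\lambda\ge 1-p+pe^\lambda$ when $p_i\ge p$. Fortunately only the lower-tail bound is used downstream (in Lemma~\ref{lem:Phase2GetsToRightPlace}), so your proof covers everything that the paper actually needs.
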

\begin{proof}
Let $\tilde{Y}_i$ be a random variable equal to $1$ with probability $p$ and $0$ with probability $1-p$.  Then let $\tilde{Z}_i$ be a random variable equal to $1$ with probability $p_i$ and $0$ with probability $1-p_i$.  Let $Y_k = \sum_{i=1}^k \tilde{Y}_i$ and $Z_k = \sum_{i=1}^k \tilde{Z}_i$.  Then following the standard Chernoff bound derivation (for $\lambda > 0$),
\begin{align*}
\Pr(Z_k \ge k p + \eta) &= \Pr\left(e^{\lambda Z_k} \ge e^{\lambda(kp+\eta)}\right) \\
&\le \frac{e^{\lambda(k p + \eta)}}{\Expect e^{\lambda Z_k}} \\
&\le \frac{e^{\lambda(k p + \eta)}}{\Expect e^{\lambda Y_k}} \\
&\le \exp\left(-\frac{2 \eta^2}{k} \right).
\end{align*}
We can then, as above, set $\tilde{Z_i} = 2\tilde{X_i} - 1$.  The calculation is similar for the bound on $\Pr(X_k \le \mu k - \eta)$.
\end{proof}

From \lemref{BiasedWalkPosition} we can prove a result about how often each site is visited.  If the walk runs for $t$ steps the walk is at position $t \mu$ with high probability so we might expect from symmetry that each site will have been visited about $1/\mu$ times.  Below is a weaker concentration result of this form but is strong enough for our purposes.  It says that the amount of time spent $\le x$ is about $x/\mu$.
\begin{lemma}
\label{lem:NumberOfHits}
For $\gamma > 2$ and integer $x > 0$,
\begin{equation*}
\Pr\left( \sum_{k=1}^\infty \mathbb{I}(X_k \le x) \ge \gamma x/\mu \right) \le 2 \exp\left(-\frac{\mu x(\gamma-2)}{2}\right),
\end{equation*}
where $\mathbb{I}$ is the indicator function.
\end{lemma}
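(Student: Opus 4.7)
Set $T := \gamma x/\mu$.  The plan is to relate $\{N \ge T\}$ to a tail event about the walk's position and then apply an exponential Chernoff bound.  First, if the walk visits $\{\le x\}$ at least $T$ times, the $\lceil T \rceil$-th such visit occurs at some time $k \ge T$, so $\Pr(N \ge T) \le \Pr(\exists k \ge T: X_k \le x)$.  Conditioning on $X_T$ and applying the strong Markov property, the walk from time $T$ onward is a fresh biased random walk; by the standard gambler's-ruin estimate for positive drift, a $\pm 1$ walk starting at $y$ ever reaches $\{\le x\}$ with probability $1$ when $y \le x$ and with probability $(q/p)^{y-x}$ when $y > x$.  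Therefore
\begin{equation*}
\Pr\bigl(\exists k \ge T: X_k \le x\bigr) \;=\; \Expect\!\left[(q/p)^{(X_T - x)^+}\right].
\end{equation*}

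Next, because $\ln(p/q) = 2\tanh^{-1}(\mu) \ge 2\mu \ge \mu$, a one-line case split on the sign of $y-x$ gives the elementary inequality $(q/p)^{(y-x)^+} \le e^{-\mu(y - x)}$ valid for every integer $y$.  Combined with the standard MGF bound $\phi(\mu) := p e^{-\mu} + q e^{\mu} \le e^{-\mu^2/2}$ (the same Hoeffding-type estimate that underlies \lemref{BiasedWalkPosition}, derivable directly from the Taylor expansion of $\phi$) and the substitution $T = \gamma x/\mu$, this yields
\begin{equation*}
\Expect\!\left[(q/p)^{(X_T - x)^+}\right] \;\le\; e^{\mu x}\,\Expect\!\left[e^{-\mu X_T}\right] \;=\; e^{\mu x}\,\phi(\mu)^T \;\le\; \exp\!\left(\mu x - \tfrac{1}{2}\mu^2 T\right) \;=\; \exp\!\left(-\tfrac{\mu x(\gamma - 2)}{2}\right),
\end{equation*}
which is in fact a factor of $2$ stronger than the stated bound (the factor $2$ absorbs slack, e.g.\ from a weaker choice of the exponential parameter or from rounding $T$ up to an integer).

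The only delicate ingredient is the gambler's-ruin estimate: the return probability is \emph{exactly} $(q/p)^{y-x}$ rather than something larger precisely because $\mu > 0$ forces $X_k \to +\infty$ almost surely.  A cleaner route that avoids this conditioning is to apply Doob's maximal inequality to the non-negative exponential martingale $M_k := \phi(\mu)^{-k} e^{-\mu X_k}$: since $\phi(\mu) < 1$ makes $\phi(\mu)^{-k}$ increasing in $k$, the event $\{\exists k \ge T: X_k \le x\}$ forces $\sup_k M_k \ge \phi(\mu)^{-T} e^{-\mu x}$, and the inequality $\Pr(\sup_k M_k \ge a) \le \Expect[M_0]/a = 1/a$ delivers the same exponent without appealing to gambler's ruin.
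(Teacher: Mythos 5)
Your proof is correct, and it takes a genuinely different route from the paper's. The paper works entirely with the position-concentration bounds of \lemref{BiasedWalkPosition}: it splits the event $\{\sum_k \mathbb{I}(X_k\le x)\ge \gamma x/\mu\}$ according to whether all of the first $\gamma x/\mu$ indicators equal $1$ (bounded by the probability that $X_{\gamma x/\mu}\le x$, a single Chernoff tail) or some of them vanish (which forces a visit to $\{\le x\}$ at a time $k>\gamma x/\mu$, handled by a union bound and a summed series of Gaussian tails); each piece contributes one factor of $\exp(-\mu x(\gamma-2)/2)$, whence the prefactor $2$. You instead observe that $\lceil\gamma x/\mu\rceil$ visits force a visit at some time $k\ge \gamma x/\mu$, and kill that single ``late low excursion'' event with the strong Markov property plus the gambler's-ruin estimate (which is exactly \lemref{NotBackwards}), or more cleanly with Doob's maximal inequality applied to the exponential martingale $\phi(\mu)^{-k}e^{-\mu X_k}$. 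Your route is a one-shot argument: it avoids the union bound and the series summation, and it yields the sharper constant $\exp(-\mu x(\gamma-2)/2)$ without the factor of $2$; the paper's route buys elementarity, reusing only \lemref{BiasedWalkPosition} with no appeal to hitting probabilities or optional stopping. The only points to make explicit in a polished write-up are the integer rounding of $T=\gamma x/\mu$ (which, as you note, only strengthens the bound since $\phi(\mu)<1$) and the standing assumption $\mu>0$, which is what makes $\ln(p/q)\ge\mu$ and $\phi(\mu)\le e^{-\mu^2/2}<1$ available; both hold in the regime where the lemma is invoked.
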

\begin{proof}
Let $Y_k = \mathbb{I}(X_k \le x)$.  From \lemref{BiasedWalkPosition},
\begin{equation*}
\Pr(Y_k = 0) \le \exp \left( -\frac{(k \mu - x)^2}{2 k} \right)
\end{equation*}
for $k \le x/\mu$
and
\begin{equation*}
\Pr(Y_k = 1) \le \exp \left( -\frac{(k \mu - x)^2}{2 k} \right)
\end{equation*}
for $k \ge x/\mu$.

Then the quantity to evaluate is
\begin{equation*}
\Pr\left( \sum_{k=1}^\infty Y_k \ge \gamma x / \mu \right).
\end{equation*}
We use a standard trick to split this into two mutually exclusive possibilities and then bound the probabilities separately.  Write
\begin{multline}
\Pr\left( \sum_{k=1}^\infty Y_k \ge \gamma x / \mu \right) = \\
\Pr\left( \left(\sum_{k=1}^\infty Y_k \ge \gamma x / \mu\right) \bigcap \left( \bigcap_{j=1}^{\gamma x/\mu} \left[Y_j=1\right]\right)\right) + \Pr\left( \left(\sum_{k=1}^\infty Y_k \ge \gamma x / \mu\right) \bigcap \left( \bigcup_{j=1}^{\gamma x/\mu} \left[Y_j=0\right]\right)\right).
\end{multline}
We can bound the first term:
\begin{align*}
\Pr\left( \left(\sum_{k=1}^\infty Y_k \ge \gamma x / \mu\right) \bigcap \left( \bigcap_{j=1}^{\gamma x/\mu} \left[Y_j=1\right]\right)\right) &= \Pr\left(\bigcap_{k=1}^{\gamma x/\mu} Y_k=1\right) \\
&\le \Pr\left( Y_{\gamma x/\mu} = 1 \right) \\
&\le \exp\left( -\frac{\mu x(\gamma-1)^2 }{2\gamma}\right)\\
&\le \exp\left( -\frac{\mu x(\gamma -2)}{2} \right)
\end{align*}
The second term similarly:
\begin{align*}
\Pr\left( \left(\sum_{k=1}^\infty Y_k \ge \gamma x / \mu\right) \bigcap \left( \bigcup_{j=1}^{\gamma x/\mu} \left[Y_j=0\right]\right)\right) &\le \Pr\left( \bigcup_{k=\frac{\gamma x}{\mu}+1}^\infty \left[Y_k=1\right] \right) \\
&\le \sum_{k=\frac{\gamma x}{\mu}+1}^\infty \Pr\left(Y_k=1\right) \\
&\le \sum_{k=\frac{\gamma x}{\mu}+1}^\infty \exp\left( -\frac{(k \mu - x)^2}{2k}\right) \\
&\le \exp\left(-\frac{\mu x (\gamma-2)}{2} \right)\qedhere
\end{align*}
\end{proof}
The last fact we need about asymmetric simple random walks is a bound on the probability of going backwards.  If $p>q$ then we expect the walk to go right in the majority of steps.  The probability of going left a distance $a$ is exponentially small in $a$.  This is a well known result, often stated as part of the gambler's ruin problem:
\begin{lemma}[See e.g.~\cite{GrimmettWelsh86}]
\label{lem:NotBackwards}
Consider an asymmetric simple random walk that starts at $a>0$ and has an absorbing barrier at the origin.  The probability that the walk eventually absorbs at the origin is $1$ if $p \le q$ and $\left( q/p \right)^a$ otherwise.
\end{lemma}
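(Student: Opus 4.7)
My plan is to prove the lemma by classical first-step analysis plus a martingale/limiting argument.

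First I would let $h(a)$ denote the probability of ever hitting $0$ when starting from position $a \geq 0$. Conditioning on the first step yields the recurrence
\[
h(a) = p\,h(a+1) + q\,h(a-1) \qquad (a \ge 1),
\]
with boundary condition $h(0) = 1$. When $p \neq q$, the characteristic equation $p x^2 - x + q = 0$ has roots $1$ and $q/p$, so the general solution is
\[
h(a) = A + B\left(q/p\right)^a.
\]
The constants $A,B$ must be pinned down by a condition ``at infinity,'' which is the only delicate point.

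For the case $p > q$, I would pin down the constants by introducing a second absorbing barrier at some integer $N > a$ and applying the optional stopping theorem to the martingale $M_n = (q/p)^{X_n}$ (a direct computation shows $\bbE[M_{n+1}\mid X_n] = M_n$). Let $T$ be the first hitting time of $\{0, N\}$. Since $T$ is almost surely finite and $M_{T}$ is bounded between $(q/p)^N$ and $1$, optional stopping gives
\[
\left(q/p\right)^a \;=\; \Pr(X_T = 0) + \left(q/p\right)^N \Pr(X_T = N).
\]
Letting $N \to \infty$ and using $(q/p)^N \to 0$ together with the monotone convergence of $\{\text{absorb at }0 \text{ before }N\}$ to $\{\text{eventually absorb at }0\}$ yields $h(a) = (q/p)^a$.

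For the case $p \le q$, I would show $h(a) = 1$ directly. When $p < q$, the strong law of large numbers applied to the i.i.d.\ step variables gives $X_n/n \to p - q < 0$ almost surely, so $X_n \to -\infty$ almost surely, which forces the walk to cross $0$ (and hence be absorbed) with probability one. When $p = q = 1/2$, I would invoke the classical recurrence of the simple symmetric random walk on $\bbZ$: the walk returns to (and hence crosses) every integer infinitely often almost surely, so absorption at $0$ is certain. Alternatively, the same two-barrier argument above but with the martingale $X_n$ (for $p=q$) gives absorption probability $1 - a/N \to 1$ as $N \to \infty$.

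The main obstacle is the $p = q$ case, where $(q/p)^a = 1$ makes the recurrence argument degenerate and one genuinely needs a separate input (recurrence of symmetric random walk, or the alternate martingale $X_n$). Everything else is routine linear recurrence manipulation and a standard optional stopping argument.
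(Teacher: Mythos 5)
Your proof is correct. The paper does not actually prove \lemref{NotBackwards} at all---it is stated as a classical fact with a citation to a probability textbook---so there is no internal argument to compare against; what you give is the standard gambler's-ruin derivation (first-step recurrence with roots $1$ and $q/p$, optional stopping applied to the exponential martingale $(q/p)^{X_n}$ with a second barrier at $N$ sent to infinity, and SLLN or recurrence of the symmetric walk for $p\le q$), and all the steps, including the monotone-convergence passage from ``absorb at $0$ before $N$'' to ``absorb at $0$ eventually,'' are sound.
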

This result is for infinitely many steps.  If we only consider finitely many steps, the probability of absorption must be at most this.

\subsubsection{Waiting Time}

From above we saw that the probability of moving is at least $2x/5n$ when at position $x$.  The length of time spent waiting at each step is therefore stochastically dominated by a geometric distribution with parameter $2x/5n$.  The following concentration result will be used to bound the waiting time (in our case $\beta=2/5$):
\begin{lemma}
\label{lem:WaitingConc}
Let the waiting time at each site be $W(x) \sim Geo\left(\beta x/n\right)$, the total waiting time $W = \sum_{x=1}^t W(x)$ and $t' = \frac{n \ln t}{\beta}$.  Then
\begin{equation*}
\Pr(W \ge C t') \le 2 t^{(1-C)/2}.
\end{equation*}
\end{lemma}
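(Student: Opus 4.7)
The plan is a Chernoff-style bound using the moment generating function (MGF). By Markov's inequality, for any $\lambda > 0$ for which $\Expect[e^{\lambda W}]$ is finite,
\begin{equation*}
\Pr(W \ge C t') \le e^{-\lambda C t'} \Expect[e^{\lambda W}] = e^{-\lambda C t'} \prod_{x=1}^t M_x(\lambda),
\end{equation*}
where $M_x(\lambda) = p_x e^\lambda / (1 - (1-p_x) e^\lambda)$ is the MGF of $W(x) \sim Geo(p_x)$ with $p_x = \beta x/n$, valid for $\lambda < -\ln(1-p_x)$; the product factorises by independence of the $W(x)$.

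The key step will be to pick $\lambda$ so that the product telescopes. I would set $\lambda := -\ln(1 - \beta/(2n))$, so $e^\lambda = 1/(1-\beta/(2n))$. A short calculation then shows that the $\beta/n$ factors cancel inside $M_x(\lambda)$, leaving simply $M_x(\lambda) = x/(x-1/2) = 2x/(2x-1)$, with no dependence on $\beta$ or $n$ at all. The full MGF then collapses to a ratio of double factorials,
\begin{equation*}
\Expect[e^{\lambda W}] = \prod_{x=1}^t \frac{2x}{2x-1} = \frac{(2^t\, t!)^2}{(2t)!} = \frac{4^t}{{2t \choose t}} \le 2\sqrt{t},
\end{equation*}
where the last inequality is the standard bound ${2t \choose t} \ge 4^t/(2\sqrt{t})$, valid for all $t \ge 1$ (verifiable by induction, or asymptotically from Stirling).

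For the exponential factor, the elementary inequality $-\ln(1-y) \ge y$ yields $\lambda \ge \beta/(2n)$, hence $\lambda C t' \ge (C/2)\ln t$ since $t' = n\ln t/\beta$. Combining gives $\Pr(W \ge Ct') \le t^{-C/2} \cdot 2\sqrt{t} = 2 t^{(1-C)/2}$, matching the bound in the statement exactly. The only non-routine ingredient is choosing $\lambda$: if $\lambda$ is too small then the MGF prefactor contributes essentially $e^{\lambda t'}$ and swamps $e^{-\lambda C t'}$, while if $\lambda$ is too close to $-\ln(1-\beta/n)$ individual MGF factors blow up; the value $-\ln(1-\beta/(2n))$ is the one that simultaneously makes the product parameter-free and delivers the sharp prefactor of $2$.
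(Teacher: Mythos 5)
Your proof is correct and follows essentially the same route as the paper's: Markov's inequality on the moment generating function with the tilt $e^{\lambda}=1/(1-\beta/(2n))$ (the paper's $\alpha=1/2$), the telescoping product $\prod_{x=1}^t \frac{2x}{2x-1}\le 2\sqrt{t}$, and the bound $1-y\le e^{-y}$ for the exponential factor. The only cosmetic difference is that you evaluate the product via the central binomial coefficient, whereas the paper phrases the same bound as a Gamma-function inequality proved by induction (its \lemref{GammaMGF}).
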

\begin{proof}
By Markov's inequality for $\lambda > 0$,
\begin{equation*}
\Pr(W \ge C t') \le \frac{\mathbb{E} e^{\lambda W}}{e^{\lambda C t'}}.
\end{equation*}
The $W(x)$ are independent so
\begin{equation*}
\mathbb{E}e^{\lambda W} = \prod_{x=1}^t \mathbb{E} e^{\lambda W(x)}.
\end{equation*}
Summing the geometric series we find
\begin{equation*}
\mathbb{E} e^{\lambda W(x)} = \frac{\frac{\beta x}{n}}{e^{-\lambda}-1+\frac{\beta x}{n}}
\end{equation*}
provided $e^{\lambda} < \frac{1}{1-\frac{\beta x}{n}}$ for all $1 \le x \le t$.  Therefore $e^\lambda$ is of the form $\frac{1}{1-\frac{\alpha \beta}{n}}$ where $0 < \alpha < 1$.  With this,
\begin{equation*}
\mathbb{E} e^{\lambda W(x)} = \frac{x}{x-\alpha}
\end{equation*}
and
\begin{equation*}
\mathbb{E} e^{\lambda W} = \frac{t! \Gamma(1-\alpha)}{\Gamma(t+1-\alpha)}.
\end{equation*}
We are free to choose $\alpha$ within its range to optimise the bound.  However, for simplicity, we will choose $\alpha = 1/2$.  From \lemref{GammaMGF},
\begin{equation*}
\mathbb{E} e^{\lambda W} \le 2 \sqrt{t}.
\end{equation*}
The result follows, using the inequality $1-x \le e^{-x}$.
\end{proof}

\subsubsection{Phase 1}

Here we prove that phase 1 completes successfully with high probability.  The bias here is large so the walk moves right every time with high probability:
\begin{lemma}
\label{lem:Phase1MovesRight}
The probability that the accelerated chain moves right at each step, starting from $x=1$ for $t$ steps, is at least
\begin{equation*}
1 - t^2/n.
\end{equation*}
\end{lemma}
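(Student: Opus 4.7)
The plan is to use a union bound over the $t$ steps, bounding the probability of a left move at each step. Recall that at position $x$, the accelerated chain moves left with probability $\frac{x-1}{3n-2x-1}$. A quick calculation (differentiating) shows this is monotonically increasing in $x$ on the relevant range, so moving left is hardest at small $x$ and a position bound translates directly into a bound on the left-move probability.

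First, I would observe that since we start at $X_0 = 1$ and each step changes position by at most $1$, the position $X_{i-1}$ before step $i$ satisfies $X_{i-1} \le i$. Combined with monotonicity, this gives, conditioned on any history,
\[
\Pr(\text{step } i \text{ moves left} \mid X_{i-1} = x) = \frac{x-1}{3n-2x-1} \le \frac{i-1}{3n-2i-1}.
\]
This bound is of course vacuous on histories where we have already moved left at some point, but those histories are already in the failure event, so it suffices for the union bound.

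Next, I would apply the union bound over $i = 1, \ldots, t$:
\[
\Pr(\text{some step moves left}) \le \sum_{i=1}^{t} \frac{i-1}{3n-2i-1}.
\]
Assuming $t$ is in the regime of interest ($t \le n$, which is certainly true during Phase 1 since the phase ends at $n^\delta$ with $\delta < 1/2$), we have $3n - 2i - 1 \ge n$ for all $i \le t$, so the sum is at most $\sum_{i=1}^t (i-1)/n = t(t-1)/(2n) \le t^2/n$. Taking the complement yields the claimed probability bound of $1 - t^2/n$.

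There is no real obstacle here; the argument is a routine union bound. The only small care needed is in the monotonicity observation and the step-number bound $X_{i-1} \le i$, both of which are immediate. The more delicate work happens in the subsequent lemma bounding the total waiting time of Phase 1, where the Chernoff-type bound from \lemref{WaitingConc} must be combined with the present result via a union bound over the two failure events.
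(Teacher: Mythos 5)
Your proof is correct, but it takes a genuinely different route from the paper's. The paper computes the probability of moving right at every step \emph{exactly} as the product $\prod_{x=1}^{t}\frac{3(n-x)}{3n-2x-1}$ (on the all-right event the positions visited are $1,2,\dots,t$), lower-bounds each factor by $1-x/n$, and finishes with Bernoulli's inequality $(1-t/n)^t \ge 1-t^2/n$. You instead union-bound the complementary event over the $t$ steps, combining the deterministic position bound $X_{i-1}\le i$ with the monotonicity of $x\mapsto (x-1)/(3n-2x-1)$ to control each per-step left-move probability. Both arguments are elementary; yours avoids having to exploit the product structure of the success event and in fact yields the marginally stronger bound $1-t(t-1)/(2n)$, while the paper's is a direct closed-form computation. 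One tiny quibble: your claim that $3n-2i-1\ge n$ for all $i\le t\le n$ fails at the endpoint $i=n$ (where it equals $n-1$), but this is irrelevant in the regime $t=n^{\delta}$, $\delta<1/2$, where the lemma is used, and even with the weaker denominator $n-1$ the sum $\sum_{i=1}^t (i-1)/(n-1)=t(t-1)/(2(n-1))$ is still at most $t^2/n$ for $n\ge 2$, so the stated bound survives.
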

\begin{proof}
The probability of moving right at each step is
\begin{align*}
\prod_{x=1}^t \frac{3(n-x)}{3n-2x-1} &= \frac{(n-2)(n-3) \ldots (n-t)}{(n-5/3)(n-7/3) \ldots (n-(2t+1)/3)} \\
&\ge (1-2/n)(1-3/n) \ldots (1-t/n) \\
&\ge (1-t/n)^t \ge 1-t^2/n\qedhere
\end{align*}
\end{proof}
Let $t=n^\delta$.  Provided $\delta < 1/2$ this probability is close to one.  Therefore, with high probability, the walk moves to $n^\delta$ in $n^\delta$ steps.  Using \lemref{WaitingConc} the waiting time can be bounded:
\begin{lemma}
\label{lem:Phase1WaitingTime}
Let $W^{(1)}$ be the waiting time during phase 1.  Let $H$ be the event that the walk moves right at each step.  Then
\begin{equation}
\Pr\left(W^{(1)} \ge C t' | H\right) \le 2 n^{\delta(1-C)/2}
\end{equation}
where $t' = \frac{5 \delta n \ln n}{2}$.
\end{lemma}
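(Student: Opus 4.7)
The plan is to reduce the statement directly to \lemref{WaitingConc}. Conditional on $H$, phase~1 visits each position $x \in \{1,2,\ldots,n^\delta\}$ exactly once before exiting, so the total waiting time decomposes as
\begin{equation*}
W^{(1)} = \sum_{x=1}^{n^\delta} W(x),
\end{equation*}
where $W(x)$ is the number of steps the unaccelerated zero chain spends at $x$ during its unique visit.

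Next I would argue that conditional on $H$ the random variables $W(1),\ldots,W(n^\delta)$ are mutually independent, with $W(x)$ stochastically dominated by a geometric variable of parameter $2x/(5n)$. At each step while at position $x$, the chain independently decides whether to stay (with probability $P(x,x)$); given that it moves, the direction is chosen independently of how long we have already waited at $x$. Thus the length of a visit to $x$ and the direction of the eventual departure are independent, so conditioning on ``the departure from $x$ is to the right'' leaves $W(x) \sim \mathrm{Geo}(1-P(x,x))$ unchanged, and the strong Markov property then yields independence across distinct $x$. The stochastic domination follows from the elementary bound $1 - P(x,x) = \tfrac{2x(3n-2x-1)}{5n(n-1)} \ge 2x/(5n)$, valid throughout the phase.

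Finally, I would invoke \lemref{WaitingConc} with $\beta = 2/5$ and $t = n^\delta$. Its parameter $t' = (n \ln t)/\beta$ specialises to $5\delta n (\ln n)/2$, matching the statement, and its tail bound $2 t^{(1-C)/2}$ becomes $2 n^{\delta(1-C)/2}$, giving the claimed inequality.

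The main (and only) subtle point is the conditioning argument preserving the independence structure required by \lemref{WaitingConc}; once that is verified, the remainder is essentially substituting parameters into an already-proved result.
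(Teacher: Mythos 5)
Your proof is correct and takes essentially the same route as the paper, which simply observes that conditional on $H$ each site is hit exactly once and then invokes \lemref{WaitingConc} with $\beta = 2/5$ and $t = n^\delta$. You additionally spell out why conditioning on $H$ is harmless (the holding time at a site is independent of the direction of the eventual departure), a point the paper leaves implicit; this is a worthwhile clarification but not a different argument.
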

\begin{proof}
This follows directly from \lemref{WaitingConc}, since each site is hit exactly once.
\end{proof}

We now combine these two lemmas to prove that phase 1 completes successfully with high probability:
\begin{proof}[Proof of \lemref{Phase1Completes}]
In \lemref{Phase1MovesRight}, we show that in $n^\delta$ accelerated steps, the walk moves right at each step with probability $\ge 1 - n^{2\delta-1}$.  Call this event $H$.  Then $\Pr(H) \ge 1 - n^{2\delta - 1}$.  \lemref{Phase1WaitingTime} shows that the waiting time $W^{(1)}$ is bounded with high probability (choosing $C=3$):
\begin{equation*}
\Pr(W^{(1)} \le 15 n \delta \ln n/2 | H) \ge 1-2n^{-\delta}.
\end{equation*}
Then we can bound the probability of phase 1 completing successfully:
\begin{align*}
\Pr(\text{Phase 1 completes successfully}) &\ge \Pr(H \cap W^{(1)} \le 15 n \delta \ln n/2 ) \\
&= \Pr(H) \Pr(W^{(1)} \le 15 n \delta \ln n/2 | H) \\
&\ge (1-n^{2\delta - 1}) (1 - 2n^{-\delta}) \\
&\ge 1 - n^{2\delta-1} - 2 n^{-\delta}.\qedhere
\end{align*}
\end{proof}

\subsubsection{Phase 2}

Phase 2 starts at $n^\delta/2$ and finishes when the walk has reached $\theta n$ for some constant $0<\theta<3/4$.  We show that, with high probability, this also takes time $O(n \log n)$.  The probability of moving right during this phase is at least $p = \frac{3(1-\theta)}{3-2\theta}$.  We first define some constants that we will derive bounds in terms of.  Let $\gamma$ be a constant $>2$.  Let $\mu = p - (1-p)$ and $\tilde{\mu} = \mu/\gamma$.  Finally let $s = \tilde{\mu} t$ for some $t$ (which will be the number of accelerated steps).  Then, with high probability, the walk will have passed $s$ after $t$ steps:
\begin{lemma}
\label{lem:Phase2GetsToRightPlace}
Let $X_t$ be the position of the walk at accelerated step $t$, where $X_0 = n^\delta$.  Then
\begin{equation*}
\Pr(X_t \le s) \le \exp(-\mu^2 t (1-1/\gamma)^2/2).
\end{equation*}
\end{lemma}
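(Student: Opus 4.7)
The statement is a Chernoff-type concentration bound, and the plan is to reduce it directly to \lemref{BiasedWalkPositionWithBoundedBias}, which gives exactly this kind of bound for biased walks whose rightward probability is bounded below by a constant.

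The first step is to verify the hypothesis of \lemref{BiasedWalkPositionWithBoundedBias}. The accelerated chain has rightward probability $p_i = 3(n-X_{i-1})/(3n - 2X_{i-1} - 1)$, which is monotonically decreasing in $X_{i-1}$. Hence for every step while the walk remains in the phase-2 region, i.e.\ $X_{i-1} \le \theta n$, we have
\begin{equation*}
p_i \;\ge\; \frac{3(n-\theta n)}{3n - 2\theta n} \;=\; \frac{3(1-\theta)}{3-2\theta} \;=\; p,
\end{equation*}
so that the per-step bias is at least $\mu = 2p-1$. To handle the possibility that the walk leaves phase 2 on either side, I would couple the actual walk to a walk $\tilde X_t$ having rightward probability exactly $p$ at every step (independent of position); a standard monotone coupling makes $\tilde X_t - \tilde X_0 \le X_t - X_0$ almost surely whenever the walks agree in starting position and the constraint $p_i \ge p$ holds while relevant. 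Thus it suffices to bound $\Pr(\tilde X_t \le s)$, and this is precisely what \lemref{BiasedWalkPositionWithBoundedBias} delivers.

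With this reduction in hand, set $D_t := X_t - X_0$ and apply \lemref{BiasedWalkPositionWithBoundedBias} with $\eta = \mu t (1 - 1/\gamma)$. Since $X_0 = n^\delta \ge 0$ and $s = \mu t/\gamma$,
\begin{equation*}
\Pr(X_t \le s) \;\le\; \Pr(D_t \le s) \;=\; \Pr\!\left(D_t \le \mu t - \mu t\!\left(1-\tfrac{1}{\gamma}\right)\right) \;\le\; \exp\!\left(-\frac{\mu^2 t (1-1/\gamma)^2}{2}\right),
\end{equation*}
which is exactly the claimed bound.

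The only genuinely subtle point is the position-dependence of the bias together with the walk potentially wandering out of the phase-2 window: strictly speaking, the uniform lower bound $p_i \ge p$ can fail if the walk climbs past $\theta n$. The coupling above is my preferred way to deal with this cleanly, but one could also argue that the event $\{X_t \le s\}$ implies the walk has stayed below $\theta n$ throughout (for $t$ with $\mu t/\gamma < \theta n$, which is the relevant regime), so the lower-bound hypothesis on $p_i$ holds on the event we are bounding and the direct application of \lemref{BiasedWalkPositionWithBoundedBias} is legitimate. Either way, no new probabilistic input is needed beyond the Chernoff-type lemma already proved for walks with bounded bias.
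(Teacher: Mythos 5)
Your proposal is correct and follows essentially the same route as the paper: recentre the walk at $X_0=n^\delta$, apply \lemref{BiasedWalkPositionWithBoundedBias} with $\eta=\mu t(1-1/\gamma)$ using $s=\mu t/\gamma$, and drop the nonnegative shift. The coupling discussion you add to justify the hypothesis $p_i\ge p$ is a more careful treatment of a point the paper handles only informally (via the monotonicity remark in its outline of phase 2), but it is not a different argument.
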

\begin{proof}
Let $X_t' = X_t - n^\delta$.  Then from \lemref{BiasedWalkPositionWithBoundedBias},
\begin{equation*}
\Pr(X_t' \le \mu t - \eta) \le \exp\left(-\frac{\eta^2}{2 t}\right).
\end{equation*}
Now let $\eta = \mu t - s$ and use
\begin{align*}
\Pr(X_t \le s) &= \Pr(X_t' \le s - n^\delta) \\
&\le \Pr(X_t' \le s)
\end{align*}
to complete the proof.
\end{proof}
We now prove a bound on the waiting time:
\begin{lemma}
\label{lem:Phase2WaitingTime}
Let $W^{(2)}$ be the waiting time in phase 2.  Then, assuming the walk does not go back beyond $n^\delta/2$,
\begin{equation}
\Pr\left(W^{(2)} \ge \frac{15 n \ln s}{\mu} \right) \le (4/s)^{3/2\mu} + 
\frac{2 \exp \left(\frac{-\mu n^\delta}{4}\right)}{1-\exp \left(\frac{-\mu}{2} \right)}.
\end{equation}
\end{lemma}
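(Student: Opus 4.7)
The plan splits $W^{(2)}$ according to two sources of randomness---the accelerated walk's trajectory (which determines how often each position is visited) and the geometric waiting times at each visit---so that the two terms on the right-hand side arise from separate arguments. For the second term I would introduce the good event
\[
G = \{M_y \le 3y/\mu \text{ for all integer } y \ge n^\delta/2\},
\]
where $N_y$ is the number of accelerated visits to position $y$ during phase 2 and $M_y = \sum_{z \le y} N_z$. By the monotonicity argument in the phase-2 description the forward bias is $\ge \mu$ throughout, so Lemma~\ref{lem:NumberOfHits} (or its bounded-bias variant via Lemma~\ref{lem:BiasedWalkPositionWithBoundedBias}) applies with $\gamma = 3$ at each $y$, giving $\Pr(M_y > 3y/\mu) \le 2 e^{-\mu y/2}$. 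A union bound over integer $y \ge n^\delta/2$ yields $\Pr(G^c) \le \sum_{y \ge n^\delta/2} 2 e^{-\mu y/2} \le 2 e^{-\mu n^\delta/4}/(1 - e^{-\mu/2})$, which is exactly the second term.

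For the first term I would bound $\Pr(\{W^{(2)} \ge 15 n\ln s/\mu\} \cap G)$ by a Chernoff/MGF argument patterned on Lemma~\ref{lem:WaitingConc}. Conditional on the trajectory, $W^{(2)}$ is a sum of independent geometric random variables with parameters $p_{X_k} \ge 2X_k/(5n)$ (from Lemma~\ref{lem:ZeroChainTransitionMatrix}). Taking $e^\lambda = (1 - \alpha \cdot 2/(5n))^{-1}$ for $\alpha \in (0,1)$, a direct computation mirroring Lemma~\ref{lem:WaitingConc} gives $\mathbb{E}[e^{\lambda \mathrm{Geo}(p_y)}] \le y/(y-\alpha)$, so
\[
\log \mathbb{E}[e^{\lambda W^{(2)}} \mid \text{path}] \le \sum_y N_y \log\tfrac{y}{y-\alpha} \le \alpha \sum_y \tfrac{N_y}{y-\alpha}.
\]
Abel summation, using $N_y = M_y - M_{y-1}$ and $M_y \le 3y/\mu$ on $G$, collapses the last sum to $(3/\mu)\ln(2\theta n^{1-\delta}) + O(1/\mu) = (3/\mu)\ln s + O(1/\mu)$, since $s \asymp \theta n$ at phase-2 completion.

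Applying Markov's inequality to $e^{\lambda W^{(2)}} \mathbf{1}_G$ at threshold $15 n \ln s/\mu$ produces a factor $e^{-\lambda \cdot 15 n \ln s/\mu} \le s^{-6\alpha/\mu}$; multiplying by the MGF bound $e^{3\alpha \ln s/\mu + O(\alpha/\mu)}$ gives $s^{-3\alpha/\mu} e^{O(\alpha/\mu)}$. Choosing $\alpha = 1/2$ (the same choice as in Lemma~\ref{lem:WaitingConc}) yields $s^{-3/(2\mu)}$, and absorbing the $e^{O(1/\mu)}$ prefactor into the base produces $(4/s)^{3/(2\mu)}$. Adding the contributions from $G^c$ and from $G$ finishes the proof. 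The main obstacle is the Abel-summation step: the conditional MGF must end up polylogarithmic in $s$ on $G$, which relies on the exact cancellation between the $1/(y-\alpha)$ weight per visit from the geometric MGF and the linear growth $M_y = O(y/\mu)$ enforced by $G$; once this cancellation is in place, tuning $\alpha$ and absorbing lower-order constants into $4^{3/(2\mu)}$ is mechanical.
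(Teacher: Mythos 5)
Your decomposition is the same as the paper's: your good event $G$ is precisely the paper's event $H$ with $\gamma=3$ (so $\tilde\mu=\mu/3$ and the threshold $x/\tilde\mu=3x/\mu$), and your union bound on $\Pr(G^c)$ via \lemref{NumberOfHits} reproduces the paper's second term exactly. Where you genuinely diverge is in controlling the waiting time on the good event. The paper does not run the MGF on the constrained visit profile directly; it introduces the relation $\majr$ on visit-count tuples, argues that among all profiles consistent with $H$ the waiting time is stochastically dominated by the canonical profile $\mathbf{X}_0$ in which each of the sites $1,\ldots,s$ is visited exactly $1/\tilde\mu$ times, and then applies \lemref{WaitingConc} to each of the $1/\tilde\mu$ independent ``rounds,'' multiplying the tail bounds to obtain $\bigl(2s^{(1-C)/2}\bigr)^{1/\tilde\mu}=(4/s)^{3/(2\mu)}$ at $C=2$. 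Your Abel-summation computation is the unrolled version of the same Chernoff argument (\lemref{WaitingConc} is itself the $\alpha=1/2$ MGF bound you reuse), applied in place rather than after a domination step: it buys a direct derivation without the majorisation machinery, at the cost of tracking the $e^{O(1/\mu)}$ prefactor by hand. If you do that bookkeeping carefully (boundary term of the Abel summation plus the $\ln 2$ from the range $[n^\delta/2,\theta n]$) you land at something like $(2e/s)^{3/(2\mu)}$ rather than exactly $(4/s)^{3/(2\mu)}$; since only the $1/\poly(n)$ magnitude of the bound is used downstream, this is immaterial, and both routes are sound.
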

\begin{proof}
Let $W_k \sim Geo\left(\frac{2 X_k}{5n}\right)$ where $X_k$ is the position of the walk at accelerated step $k$ ($X_0 = n^\delta$).  We want to bound (w.h.p.) the waiting time $W^{(2)} = \sum_{k=1}^t W_k$ of $t$ steps of the accelerated walk.  

Define the event $H$ to be
\begin{equation}
H = \left\{ \bigcap_{x \ge n^\delta/2} \left[ \sum_{k=1}^{\infty} \mathbb{I}(X_k \le x) \le x/\tilde{\mu} \right] \right\}.
\end{equation}
If $H$ occurs, no sites have been hit too often and the walk has not gone back further than $n^\delta/2$.  It is important that we also use the restriction that $X_k \ge n^\delta/2$ because the waiting time grows the longer the walk moves back.  However, it is very unlikely that the walk will go backwards (even to $n^\delta/2$).

We now define some more notation to bound the waiting time.  Let $\mathbf{X} = (X_1, X_2, \ldots, X_t)$ be a tuple of positions and let $N_x(\mathbf{X})$ be the number of times that $x$ appears in $\mathbf{X}$ and let $\mathbf{N}(\mathbf{X}) = (N_1(\mathbf{X}), N_2(\mathbf{X}), \ldots, N_n(\mathbf{X}))$.  Then we have $\sum_x N_x(\mathbf{X}) = t$.

As we said above, the waiting time at $x=a$ stochastically dominates the waiting time at $x=b$ for $b \ge a$.  In other words,
\begin{equation}
\label{eq:WaitingTimeStocDom}
W_k \stocdoml W_{k'} \text{ if } X_k \le X_{k'}
\end{equation}
where $X \stocdoml Y$ means that $X$ stochastically dominates $Y$.  Now write the waiting time for all steps
\begin{align}
W^{(2)}(\mathbf{X}) &= \sum_{k=1}^t W_k \nonumber \\
&= \sum_x \sum_{h=1}^{N_x(\mathbf{X})} W_h(x)
\label{eq:WaitingTimeX}
\end{align}
where $W_h(x) \sim Geo\left(\frac{2 x}{5n}\right)$.

If event $H$ occurs, we can put some bounds on $N_x$.  We find that, for all $x \ge n^\delta/2$,
\begin{equation}
\sum_{y = n^\delta/2}^x N_y(\mathbf{X}) \le x/\tilde{\mu}
\end{equation}
and $N_x(\mathbf{X}) = 0$ for $x < n^\delta/2$.  Now let $\mathbf{X}_m$ be such that $N_{n^\delta/2}(\mathbf{X}_m) = \frac{n^\delta}{2 \tilde{\mu}}$ and $N_x(\mathbf{X}_m) = 1/\tilde{\mu}$ for $x > n^\delta/2$.  Then
\begin{equation}
\sum_{y = n^\delta/2}^x N_y(\mathbf{X}_m) = x/\tilde{\mu}.
\end{equation}
Now we introduce the relation $\majr$:
\begin{definition}
Let $\mathbf{x}$ and $\mathbf{y}$ be $n$-tuples.  Then $x \majr y$ if
\be
\sum_{i = 1}^k x_i \le \sum_{i=1}^k y_i
\ee
for all $1 \le k \le n$ with equality for $k=n$.
\end{definition}
Note that this is like majorisation, except the elements of the tuples are not sorted.
Using this, we find that $\mathbf{N}(\mathbf{X}) \majr \mathbf{N}(\mathbf{X}_m)$ (Using $\sum_y N_y(\mathbf{X}) = \sum_y N_y(\mathbf{X'}) = t$ for all $\mathbf{X}, \mathbf{X'}$.)

If we combine Equations \ref{eq:WaitingTimeStocDom} and \ref{eq:WaitingTimeX} we find that $W^{(2)}(\mathbf{X}) \stocdoml W^{(2)}(\mathbf{X}')$ if $\mathbf{N}(\mathbf{X}) \majl \mathbf{N}(\mathbf{X'})$.  Roughly speaking, this is simply saying that the waiting time is larger if the earlier sites are hit more often.  But since for all $\mathbf{X}$ that satisfy $H$, $\mathbf{X} \majr \mathbf{X}_m$, we have $W^{(2)}(\mathbf{X}) \stocdomr W^{(2)}(\mathbf{X}_m)$ provided $H$ occurs.  We will simplify further by noting that $\mathbf{X_m} \majr \mathbf{X}_0$ where $N_x(\mathbf{X}_0) = 1/\tilde{\mu}$ for $1 \le x \le \tilde{\mu} t = s$ and zero elsewhere.  Therefore
\begin{equation*}
\Pr\left(W^{(2)}(\mathbf{X}) \ge \frac{5C n \ln s}{2\tilde{\mu}} \bigg| H\right) \le \Pr\left(W^{(2)}(\mathbf{X_0}) \ge \frac{5C n \ln s}{2\tilde{\mu}} \right).
\end{equation*}
We can bound this by applying \lemref{WaitingConc}.  Let $W_h = \sum_{x=1}^{s} W_h(x)$.  From \lemref{WaitingConc},
\begin{equation}
\Pr(W_h \ge C t') \le 2 s^{\frac{1-C}{2}}
\end{equation}
where $t' = \frac{5n \ln s}{2}$.  However, we want a bound on $\Pr\left(\sum_{h=1}^{1/\tilde{\mu}} W_h \ge C t' / \tilde{\mu}\right)$.  The same reasoning as in \lemref{WaitingConc} bounds this as
\begin{equation}
\Pr\left(\sum_{h=1}^{1/\tilde{\mu}} W_h \ge C t' / \tilde{\mu}\right) \le \left(2 s^{\frac{1-C}{2}}\right)^{1/\tilde{\mu}}.
\end{equation}
Therefore 
\begin{equation}
\Pr\left(W^{(2)}(\mathbf{X_0}) \ge \frac{5C n \ln s}{2\tilde{\mu}} \right) \le 2^{1/\tilde{\mu}} s^{\frac{(1-C)/2}{\tilde{\mu}}}.
\end{equation}

To complete the proof, we just need to find $\Pr(H^c)$.  We can bound it using the union bound and \lemref{NumberOfHits}:
\begin{align*}
\Pr(H^c) &= \Pr\left(\bigcup_{x=n^\delta/2}^{n} \left[\sum_{k=1}^\infty \mathbb{I}(X_k \le x) > x/\tilde{\mu}\right] \right) \\
&\le \sum_{x=n^\delta/2}^n \Pr\left(\sum_{k=1}^\infty \mathbb{I}(X_k \le x) \ge x/\tilde{\mu}\right) \\
&\le \sum_{x=n^\delta/2}^n 2 \exp\left(\frac{-\mu x(\gamma -2)}{2}\right) \\
&\le \sum_{x=n^\delta/2}^\infty 2 \exp\left(\frac{-\mu x(\gamma -2)}{2}\right) \\
&= \frac{2 \exp\left(\frac{-\mu n^\delta(\gamma-2)}{4}\right)}{1-\exp\left(\frac{-\mu (\gamma-2)}{2}\right)} \\
\end{align*}
Now, for any events $A$ and $B$
\begin{align*}
\Pr(A) &= \Pr(A \cap B) + \Pr(A \cap B^c) \\
&= \Pr(A | B) \Pr(B) + \Pr(A \cap B^c) \\
&\le \Pr(A | B) + \Pr(B^c)
\end{align*}
and set $C=2$ and $\gamma=3$ to obtain the result.
\end{proof}

We now combine these two lemmas to prove that phase 2 completes successfully with high probability:
\begin{proof}[Proof of \lemref{Phase2Completes}]
Phase 2 can fail if:
\begin{itemize}
\item{The walk does not reach $\theta n$.  The probability of this is bounded by \lemref{Phase2GetsToRightPlace}:
\begin{equation*}
\Pr(X_t \le \theta n) \le \exp\left(-\frac{2}{3} \mu \theta n\right).
\end{equation*}
This follows from setting $t = \frac{3 \theta n}{\mu}$ and $\gamma = 3$.}
\item{The waiting time is too long.  This probability is bounded by \lemref{Phase2WaitingTime}:
\begin{equation*}
\Pr\left(W^{(2)} \ge \frac{15n \ln(\theta n)}{\mu}\right) \le \left(\frac{4}{\theta n}\right)^\frac{3}{2\mu} + \frac{2 \exp\left(\frac{-\mu n^\delta}{4}\right)}{1-\exp(-\mu/2)} + (q/p)^{n^\delta/2}.
\end{equation*}}
\item{The walk gets back to $n^\delta/2$.  This is bounded by \lemref{NotBackwards}:
\begin{equation*}
\Pr\left(\text{Walk gets to $n^\delta/2$}\right) \le \left(q/p \right)^{n^\delta/2}.
\end{equation*}}
\end{itemize}
So, using the union bound we can bound the overall probability of failure:
\begin{equation*}
\Pr(\text{Phase 2 fails}) \le \exp\left(-\frac{2}{3} \mu \theta n\right) + \left(\frac{4}{\theta n}\right)^\frac{3}{2\mu} + \frac{2 \exp\left(\frac{-\mu n^\delta}{4}\right)}{1-\exp(-\mu/2)} + \left(q/p \right)^{n^\delta/2}.\qedhere
\end{equation*}
\end{proof}

\subsubsection{Phase 3}

This phase starts at $\theta n$.  We show that this mixes quickly using log-Sobolev arguments.  

\begin{lemma}
\label{lem:MixingPhase3}
The zero chain on the restricted state space $x \in [m, n]$ where $m = \theta n/2$ for $0 \le \theta \le 3/4$ has mixing time $O\left(n \log \frac{n}{\eps}\right)$.
\end{lemma}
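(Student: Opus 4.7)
The plan is to establish $\rho = \Omega(1/n)$ and $\Delta = \Omega(1/n)$ for the zero chain restricted to $[m,n]$, and then invoke the log-Sobolev mixing time bound \eq{SobolevMixingTime}. Since the smallest stationary probability on $[m,n]$ is $\pi_*=e^{-\Theta(n)}$ (the binomial $\text{Bin}(n,3/4)$ has tail mass $e^{-\Omega(n)}$ at $x=m<3n/4$), this immediately yields $\tau(\eps)=O(n\log n + n\log(n/\eps))=O(n\log(n/\eps))$.

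The strategy is to produce an auxiliary chain with a manifestly $\Omega(1/n)$ log-Sobolev constant, and then compare.  Consider the product chain $Q$ on $\{0,1,2,3\}^n$ that at each step picks a uniformly random coordinate and resamples it uniformly from $\{0,1,2,3\}$.  The single-site chain has rank one (every row is the uniform distribution), so trivially $\Delta,\rho = \Theta(1)$, and \lemref{ProductChain} therefore gives $\rho(Q),\Delta(Q) = \Omega(1/n)$.  The map $\phi(x_1,\ldots,x_n)=|\{i:x_i\neq 0\}|$ is a lumpability function for $Q$, producing the projected birth--death chain $\hat{P}$ on $\{0,\ldots,n\}$ with
\begin{equation*}
\hat{P}(x,x+1)=\frac{3(n-x)}{4n},\qquad \hat{P}(x,x-1)=\frac{x}{4n},
\end{equation*}
and stationary distribution $\text{Bin}(n,3/4)$.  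Since any test function on the lumped chain lifts to a test function on $Q$ with identical entropy and Dirichlet form, $\rho(\hat{P})\geq\rho(Q) = \Omega(1/n)$.

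Now apply the comparison theorem (\thmref{Comparison}) to relate $\hat{P}$ to the zero chain $P$ on the common state space $[m,n]$.  Both have the same (renormalised) stationary distribution on $[m,n]$ and the same nearest-neighbour edge set, so canonical paths have length one and the comparison constant reduces to
\begin{equation*}
A = \max_{a\in[m,n-1]} \frac{\hat{P}(a,a+1)}{P(a,a+1)} \;\vee\; \max_{a\in[m+1,n]} \frac{\hat{P}(a,a-1)}{P(a,a-1)}.
\end{equation*}
A direct calculation gives $\hat{P}(a,a+1)/P(a,a+1)=5(n-1)/(8a)$ and $\hat{P}(a,a-1)/P(a,a-1)=5(n-1)/(8(a-1))$, both $O(1/\theta)=O(1)$ throughout phase 3.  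Hence $\rho(P),\Delta(P) = \Omega(1/n)$, and the mixing time bound follows.

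The main obstacle is that \thmref{Comparison} compares two chains on the \emph{same} state space with the \emph{same} stationary distribution, but $\hat{P}$ is naturally defined on all of $[0,n]$ whereas $P$ lives on $[m,n]$.  I would handle this by first restricting $\hat{P}$ to $[m,n]$ via Metropolis rejection of boundary-crossing moves, and noting that because $\pi\l([0,m-1]\r)= e^{-\Omega(n)}$ by a Chernoff bound on the binomial, this restriction changes both $\rho$ and $\Delta$ by at most a constant factor (a standard lemma on restricting reversible chains to sets of overwhelming stationary mass).  An alternative route that sidesteps the restriction entirely is to run $\hat{P}$ on all of $[0,n]$, run the comparison against $P$ extended in the natural way, and absorb the exponentially small mass outside $[m,n]$ into the overall error bound in the proof of \thmref{ZeroChainMixing}, where an error of $\eps + e^{-\Omega(n)}$ is already tolerated.
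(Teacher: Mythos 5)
Your proposal is correct and follows essentially the same route as the paper: the same single-site resampling product chain $Q$ analysed via \lemref{ProductChain}, the same lumping (restriction of the infimum to symmetric test functions) to transfer the log-Sobolev bound to the projected birth--death chain, the same comparison ratio $5(n-1)/(8a)=O(1/\theta)$, and the same appeal to \eq{SobolevMixingTime}. The boundary mismatch you flag is handled in the paper by restricting the comparison chain to $[m,n]$ exactly as $P'$ is built from $P$ (adjusting the holding probability at $m$) and implicitly carrying over its Sobolev constant, so your treatment of that technicality is, if anything, the more explicit of the two.
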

\begin{proof}
We restrict the Markov chain to only run from $m$ by adjusting the holding probability at $m$, $P(m, m)$.  Construct the chain $P'$ with transition matrix
\begin{equation}
P'(x, y)=
\begin{cases}
0 & x < m \, \text{or} \, y < m \\
1 - P(m, m+1) & x = y = m \\
P(x, y) & \text{otherwise}
\end{cases}
\end{equation}
where $P$ is the transition matrix of the full zero chain.  This chain then has stationary distribution
\begin{equation}
\pi'(x)=
\begin{cases}
\pi(x)/(1-w) & m \le x \le n \\
0 & \text{otherwise}
\end{cases}
\end{equation}
where $w = \sum_{x=1}^{m-1} \pi(x)$.  To see this, first note that the distribution is normalised.  We want to show that 
\begin{equation}
\label{eq:Phase3StatDistribDef}
\sum_{x = m}^n P'(x, y) \pi'(x) = \pi'(y).
\end{equation}
When $y = m$ we are required to prove that $P'(m, m) \pi'(m) + P'(m+1, m) \pi'(m+1) = \pi'(m)$.  This follows from the reversibility of the unrestricted zero chain, using $P'(m, m) = 1-P(m, m+1)$.  For $y>m$, \eq{Phase3StatDistribDef} is satisfied simply because $\pi(x)$ is the stationary distribution of $P$ and related by a constant factor to $\pi'(x)$.

We can now prove this final mixing time result, making use of \lemref{ProductChain}.  Let $Q_i$ be the chain that uniformly mixes site $i$.  This converges in one step and has a log-Sobolev constant independent of $n$; call it $\rho_1$.  Let $Q$ be the chain that chooses a site at random and then uniformly mixes that site.  This is the product chain of the $Q_i$ so, by \lemref{ProductChain}, has gap $1/n$ and Sobolev constant $\rho_Q = \rho_1/n$.  We can construct the zero chain for this and find its Sobolev constant.

The Sobolev constant is defined (\defref{LogSobolev}) in terms of a minimisation over functions on the state space.  For the chain $Q$ we can write
\begin{equation*}
\rho_Q = \inf_{\phi} f(\phi).
\end{equation*}
If we restrict the infimum to be over functions $\phi$ with $\phi(x) = \phi(y)$ for $x$ and $y$ containing the same number of zeroes then we obtain the Sobolev constant for the zero-Q chain, $\rho_{Q_0}$, which is chain which counts the number of zeroes in the full chain Q.  Since taking the infimum over less functions cannot give a smaller value,
\begin{equation*}
\rho_{Q_0} \ge \rho_Q \ge \rho_1/n.
\end{equation*}
We can now compare this chain to the zero-$P$ chain.  The stationary distributions are the same.  The transition matrix for the zero-$Q$ chain is
\begin{equation*}
Q_0(x,y) = 
\begin{cases}
\frac{n+2x}{4n} & y = x \\
\frac{x}{4n} & y = x-1 \\
\frac{3(n-x)}{4n} & y = x+1 \\
0	&	{\rm otherwise} \\
\end{cases}
\end{equation*}
Then construct $Q_0'$ by restricting the space to only run from $m$ in exactly the same was as $P'$ is constructed from $P$.  $Q_0'$ has the same stationary distribution as $P'$.  Now we can perform the comparison.  From \eq{ComparisonWalk}:
\begin{align*}
A &= \max_{a \ge m} \frac{Q_0'(a,a+1)}{P'(a,a+1)} \\
&= \max_{a \ge m} \frac{5(n-1)}{8a} \le \frac{5}{8\theta}.
\end{align*}
Therefore $\rho_{P'} \ge \frac{8 \theta \rho_1}{5n}$.  Exactly the same argument applies to show the gap is $\Omega(1/n)$ so the mixing time is (from \eq{SobolevMixingTime}) $O(n \log \frac{n}{\eps})$.
\end{proof}

Now we can prove that phase 3 completes successfully with high probability:
\begin{proof}[of \lemref{Phase3Completes}]
In \lemref{MixingPhase3}, we show that after $O\left(n \log \frac{n}{\eps}\right)$ steps the chain mixes to distance $\eps$.  We just need to show that the walk goes back to $\theta n/2$ with small probability.  This follows from \lemref{NotBackwards}.
\end{proof}

\subsection{Moment Generating Function Calculations}

The following lemma is needed in the moment generating function calculations.
\begin{lemma}
\label{lem:GammaMGF}
For Integer $s > 0$,
\begin{equation}
\label{eq:GammaS}
\frac{\Gamma(s+1)\Gamma(1/2)}{\Gamma(s+1/2)} \le 2 \sqrt{s}
\end{equation}
\end{lemma}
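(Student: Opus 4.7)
The plan is to reduce the gamma-function expression to a ratio of central binomial coefficients and then prove the resulting inequality by a short induction.

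First I would apply the standard identities $\Gamma(s+1) = s!$, $\Gamma(1/2) = \sqrt{\pi}$, and the well-known half-integer evaluation
\begin{equation*}
\Gamma(s+\tfrac{1}{2}) = \frac{(2s)!}{4^s\, s!}\sqrt{\pi},
\end{equation*}
which follows from iterating $\Gamma(x+1) = x\Gamma(x)$ from $\Gamma(1/2)$. Substituting these into the left-hand side of \eqref{eq:GammaS} and cancelling the factors of $\sqrt{\pi}$ reduces the claim to
\begin{equation*}
\frac{4^s (s!)^2}{(2s)!} = \frac{4^s}{\binom{2s}{s}} \le 2\sqrt{s},
\end{equation*}
i.e.\ the classical lower bound $\binom{2s}{s} \ge 4^s/(2\sqrt{s})$ on the central binomial coefficient.

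Next I would prove this inequality by induction on $s$. The base case $s=1$ gives $4/\binom{2}{1} = 2 = 2\sqrt{1}$, so equality holds. For the inductive step, the recurrence
\begin{equation*}
\binom{2s+2}{s+1} = \binom{2s}{s}\cdot\frac{2(2s+1)}{s+1}
\end{equation*}
yields
\begin{equation*}
\frac{4^{s+1}}{\binom{2s+2}{s+1}} = \frac{4^s}{\binom{2s}{s}}\cdot\frac{2(s+1)}{2s+1} \le 2\sqrt{s}\cdot\frac{2(s+1)}{2s+1},
\end{equation*}
and it remains to verify $2\sqrt{s}\cdot 2(s+1)/(2s+1) \le 2\sqrt{s+1}$. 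Squaring and clearing denominators, this is equivalent to $4s(s+1) \le (2s+1)^2$, which simplifies to $0 \le 1$.

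There is no real obstacle here: the main care is just the reduction via the half-integer gamma identity, and after that the induction is a one-line algebraic check. (Alternatively one could invoke Stirling's approximation to get the stronger asymptotic $4^s/\binom{2s}{s} \sim \sqrt{\pi s} \le 2\sqrt{s}$, but the induction gives the cleanest finitary proof and makes the equality case at $s=1$ manifest.)
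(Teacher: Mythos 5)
Your proposal is correct and follows essentially the same route as the paper: both rewrite the gamma ratio as the product $\prod_{x=1}^{s}\frac{2x}{2x-1}$ (your $4^s/\binom{2s}{s}$ is the same quantity) and then induct on $s$, with the inductive step reducing to the identical inequality $4s(s+1)\le(2s+1)^2$. No gaps.
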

\begin{proof}
From expanding the $\Gamma$ functions, \eq{GammaS} becomes
\begin{align*}
\frac{s! 2^s}{(2s-1)!!} &= \frac{2 \times 4 \times 6 \times \ldots \times 2(s-1) \times 2s}{1 \times 3 \times 5 \times \ldots \times (2s-3) \times (2s-1)} \\
&= \prod_{x=1}^s \frac{2x}{2x-1}
\end{align*}
We then proceed by induction.  $\prod_{x=1}^1 \frac{2x}{2x-1} = 2$ and by the inductive hypothesis
\begin{equation*}
\prod_{x=1}^{s+1} \frac{2x}{2x-1} \le \frac{2(s+1)}{2(s+1)-1} 2\sqrt{s}.
\end{equation*}
It is easy to show that $\frac{2(s+1)}{2(s+1)-1} \le \sqrt{\frac{s+1}{s}}$ and the result follows.
\end{proof}

\subsection{Mixing Times}

We find bounds for the mixing time above that are valid with high probability.  Below we turn these into full mixing time bounds.
\begin{lemma}
\label{lem:DistanceToStat}
If after $O(n \log n)$ steps the state $v$ of a random walk satisfies
\begin{equation*}
|| v - \pi || \le \delta
\end{equation*}
where $\pi$ is the stationary distribution and $\delta$ is $1/poly(n)$ then the number of steps required to be at most a distance $\eps$ from stationarity is
\begin{equation*}
O\left(n \log \frac{n}{\eps}\right).
\end{equation*}
\end{lemma}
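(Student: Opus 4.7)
The plan is to invoke the standard submultiplicativity of total variation distance to stationarity for Markov chains, namely that if $d(t) := \max_{x}\|P^t(x,\cdot)-\pi\|_{TV}$ and $\bar d(t) := \max_{x,y}\|P^t(x,\cdot)-P^t(y,\cdot)\|_{TV}$, then $\bar d(s+t)\le \bar d(s)\,\bar d(t)$ and $d(t)\le \bar d(t)\le 2d(t)$ (see e.g. \cite{MontenegroTetali06}). This gives the clean bound
\[
d(k t_0) \;\le\; \bar d(t_0)^k \;\le\; (2\,d(t_0))^k \;\le\; (2\delta)^k
\]
for any positive integer $k$, where $t_0=O(n\log n)$ is the time after which the hypothesis gives $d(t_0)\le\delta$.

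From here the calculation is routine. Because $\delta=1/\poly(n)$, say $\delta\le n^{-c}$ for some fixed $c>0$, we have $\log(1/(2\delta))\ge c\log n-\log 2 = \Omega(\log n)$ for all sufficiently large $n$. To drive the distance below $\eps$ it suffices to choose
\[
k \;=\; \left\lceil \frac{\log(1/\eps)}{\log(1/(2\delta))} \right\rceil \;=\; O\!\left(1+\frac{\log(1/\eps)}{\log n}\right),
\]
and then the total number of steps is
\[
k\,t_0 \;=\; O(n\log n) + O\!\left(\frac{\log(1/\eps)}{\log n}\right)\cdot O(n\log n) \;=\; O(n\log n + n\log(1/\eps)) \;=\; O\!\left(n\log \tfrac{n}{\eps}\right),
\]
as required.

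There is no real obstacle here: the only non-trivial input is submultiplicativity, which is classical and already used implicitly throughout the paper. The one small point to watch is that the hypothesis is phrased in terms of the $1$-norm distance $\|v-\pi\|$ from an arbitrary starting distribution $v$, rather than $d(t)$ directly; but taking the supremum over $v$ (equivalently, over starting pure states) gives $d(t_0)\le\delta$, so the submultiplicative bound applies verbatim. If one prefers to avoid citing $\bar d$, the same conclusion follows from iterating the contraction $\|P^{t_0}v-\pi\|\le \delta$ on $v=P^{(k-1)t_0}v_0$ using that $P^{t_0}$ is a contraction on mean-zero signed measures with operator norm at most $2\delta$ in the $\ell_1$ sense.
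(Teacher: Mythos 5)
Your proposal is correct and follows essentially the same route as the paper: run the chain in blocks of $t_0=O(n\log n)$ steps and iterate $k=O\left(1+\frac{\log 1/\eps}{\log n}\right)$ times so that the $1/\poly(n)$ distance is amplified to $\eps$, giving $kt_0=O\left(n\log\frac{n}{\eps}\right)$. The only difference is cosmetic: the paper justifies the amplification by writing the state after $t_0$ steps as a mixture $(1-\delta)\pi+\delta s$, whereas you invoke submultiplicativity of $\bar d$ (picking up a harmless factor of $2$), which is arguably the cleaner way to make that step rigorous from a pure total-variation hypothesis.
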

\begin{proof}
Let $s$ be the slowest mixing initial state.  Then, after $t = O(n \log n)$ steps we have at worst the state
\begin{equation*}
(1-\delta)\pi+\delta s
\end{equation*}
and if we repeat $k t$ times $\delta$ becomes $\delta^k$.  So to get a distance $\eps$, $k = \left\lceil\frac{\log \eps}{\log \delta}\right\rceil$.

Now we evaluate the mixing time:
\begin{align*}
k t = O(n \log n) \left\lceil\frac{\log \eps}{\log \delta}\right\rceil &= O(n \log n) \left\lceil\frac{\log 1/\eps}{\log 1/\delta}\right\rceil \\
&= O(n \max(\log n, \log 1/\eps)) \\
&= O\left(n \log \frac{n}{\eps}\right)
\qedhere
\end{align*}
\end{proof}

\end{document}